\DeclareMathAlphabet{\vecfont}{OT1}{cmr}{bx}{it}
\renewcommand{\vec}[1]{\vecfont{#1}}
\newcommand{\grad}{\boldsymbol{\nabla}}
\newcommand{\scaledmath}[2]{\scalebox{#1}{$\displaystyle{#2}$}}
\newcommand*{\arvec}[2][-2ex]{\begin{array}[b]{@{}c@{}}%
    \scaledmath{0.8}{\shortrightarrow}\\[#1]{#2}\end{array}}
\newcommand*{\arsinh}{\mathrm{arsinh}}
\newcommand*{\dbar}{{\mathchar'26\mkern-12mu\mathrm{d}}}
\newcommand*{\oper}[1]{\mathalpha{\textsf{#1}}}
\newcommand*{\1}{\oper{1}}
\newcommand*{\rl}{{\rangle\langle}}
\newcommand*{\cz}{\mathbb{C}}
\newcommand*{\gz}{\mathbb{Z}}
\newcommand*{\rd}{\mathrm{d}}
\newcommand*{\ri}{\mathrm{i}}
\newcommand*{\cA}{\mathcal{A}}
\newcommand*{\cC}{\mathcal{C}}
\newcommand*{\cD}{\mathcal{D}}
\newcommand*{\cE}{\mathcal{E}}
\newcommand*{\cF}{\mathcal{F}}
\newcommand*{\cI}{\mathcal{I}}
\newcommand*{\cN}{\mathcal{N}}
\newcommand*{\cR}{\mathcal{R}}
\newcommand{\cS}{\mathcal{S}}
\newcommand*{\tf}{T^\mathrm{TF}}
\newcommand*{\TF}{\mathcal{T}^\mathrm{TF}}
\newcommand*{\V}{\mathcal{V}}
\newcommand*{\W}{\mathcal{T}^\mathrm{W}}
\newcommand*{\X}{\mathcal{X}}
\newcommand*{\ga}{\vec{a}}
\newcommand*{\gj}{\vec{j}}
\newcommand*{\gx}{\vec{x}}
\newcommand*{\gy}{\vec{y}}
\newcommand*{\gp}{\vec{p}}
\newcommand*{\gu}{\vec{u}}
\newcommand*{\gK}{\vec{K}}
\newcommand*{\gR}{\vec{R}}
\newcommand*{\geta}{\boldsymbol{\eta}} 
\newcommand*{\gxi}{\boldsymbol{\xi}} 
\newcommand*{\gq}{\mathfrak{q}}
\newcommand*{\gB}{\mathfrak{B}}
\newcommand*{\gF}{\mathfrak{F}}
\newcommand*{\gH}{\mathfrak{H}}
\newcommand*{\gS}{\mathfrak{S}}
\newcommand*{\nz}{\oper{N}}
\newcommand*{\rz}{\mathbb{R}}
\DeclareMathOperator*{\tr}{tr}
\DeclareMathOperator*{\sgn}{sgn}
\newcommand*{\gtf}{{\gamma_{\mathrm{TF}}^{\ }}}
\newcommand*{\fitf}{{\varphi_Z}}
\newcommand*{\const}{C}
\renewcommand{\ps@plain}{%
  \renewcommand{\@oddhead}{\hfil\footnotesize%
    \raisebox{30pt}[0pt][0pt]{\parbox{300pt}{\centering%
      A contribution to the Proceedings of the\\{}%
      Workshop on Density Functionals for Many-Particle Systems\\{}
      2--29 September 2019, Singapore}}\hfil}%
  \renewcommand{\@evenhead}{\@oddhead}%
  \renewcommand{\@oddfoot}{\hfil\footnotesize%
        \raisebox{-8pt}[0pt][0pt]{\thepage}\hfil}%
  \renewcommand{\@evenfoot}{\@oddfoot}%
}
\begin{document}

\chapter{\uppercase{Mathematical Elements\\%
    of Density Functional Theory}}
\markboth{Heinz Siedentop}%
         {Mathematical elements of density functional theory}

\author{Heinz Siedentop}
\address{Mathematisches
  Institut, Ludwig-Maximilans Universit\"at M\"unchen\\
  Theresienstr.~39, 80333 M\"unchen, Germany,\\ and\\
  Munich Center for
  Quantum Science and Technology (MCQST)\\
  Schellingstr.~4, 80799
  M\"unchen, Germany\\[1ex]
  e-mail: h.s@lmu.de}

\begin{abstract}
  We review some of the basic mathematical results about
  density functional theory. 
\end{abstract}

\section{Introduction \label{Einleitung}}
Already in classical mechanics it turned out that the microscopic
equation describing many particles interacting via the Coulomb force
cannot be solved analytically. Approximation schemes are
indispensably. The idea of describing such systems by effective
equations depending only on few variables of interest, like the
density of matter and its velocity field has a long history that dates
back to at least to Euler \cite{Euler1757e,Euler1757,Euler1757C} long
before the advent of quantum mechanics. In quantum mechanics of atoms
and other fermionic systems the need for such an effective description
was also immediately recognized. Only two years after Heisenberg's
\cite{Heisenberg1925} fundamental discovery Thomas \cite{Thomas1927}
and Fermi \cite{Fermi1927,Fermi1928} introduced the first --- in
today's language --- density functional theory which is now known as
Thomas--Fermi theory. Although relatively simple it is not only of
historical importance. As we will see in Section \ref{tftheorie} it
becomes asymptotically exact for heavy atoms and serves as a
mathematical tool to prove fundamental features of matter like its
stability and, closely related, the existence of the thermodynamic
limit of Coulomb systems. Later, various corrections of the theory were
made, e.g., Dirac \cite{Dirac1930} took the exchange energy into
account and Weizs\"acker \cite{Weizsacker1935} inhomogeneities of the
electron gas. A detailed overview over these early developments is
offered by Gombas \cite{Gombas1949}. The subject received an immense
push with an observation of Hohenberg and Kohn
\cite{HohenbergKohn1964}. They argued that there is at most one
external potential for a system of $N$ fermions interacting via
Coulomb potentials for which a given one-particle density with
particle number $N$ is its ground state density. They concluded from
this that there is a functional of the form
\begin{equation}
  \label{hk}
  \cE^{\mathrm{HK}}(\rho)+\int_{\rz^3}\rd\gx\, V(\gx)\rho(\gx)
\end{equation}
whose minimizer is a reduced one-particle electronic ground state
density of the Hamiltonian
\begin{equation}
  \label{H}
  H_{V,N}:=\sum_{n=1}^N\biggl(-\frac12\Delta_n+V(\gx_n)\biggr)
  +\sum_{1\leq m<n\leq N}\frac{1}{|\gx_m-\gx_n|}.
\end{equation}
(As usual the colon in connection with an equality sign indicates a
definition where the defined quantity is on the side of the colon.)
The functional $\cE^\mathrm{HK}$ depends only on the kinetic energy
operator and the interaction energy of the electrons which are assumed
to be the nonrelativistic kinetic energy and the Coulomb force. Both
choices, namely nonrelativistic kinetic energy and Coulomb
interaction, are made only for convenience and definiteness at this
point. Although Hohenberg and Kohn merely claimed the existence of
such a functional without offering any construction of
$\cE^\mathrm{HK}$, it nevertheless triggered a tsunami of results on
the subject that is still gathering momentum sixty years after its
publication.

\section{Elements of quantum mechanics of electrons}
\subsection{Electronic Hilbert spaces}
We write $\gH_1$ for the one-electron Hilbert space.
The $N$-electron Hilbert space is its antisymmetric tensor product
\begin{equation}
  \label{eq:hn}
  \gH_N:= \bigwedge_{n=1}^N\gH_1,
\end{equation}
and finally the Fock space of electrons is
\begin{equation}
  \label{eq:f}
  \gF:=\bigoplus_{N=0}^\infty\gH_N
\end{equation}
with the understanding that $\gH_0:=\cz$. The summand $\gH_0$ is also
called the vacuum space.

To be concrete we focus on the case of nonrelativistic quantum
mechanics where $\gH_1=L^2(\Gamma)$ is the space of functions $\psi$
of the space-spin variable $x:=(\gx,\sigma)\in\Gamma:=\rz^3\times\{1,2\}$ with
finite scalar product $(\psi,\psi)<\infty$. Here
\begin{equation}
  \label{eq:sp1}
  (\psi,\tilde\psi):=\int_{\Gamma}\rd x \; \overline{\psi(x)}\tilde\psi(x):=
  \sum_{\sigma=1}^2\int_{\rz^3}\rd\gx\;
  \overline{\psi(\gx,\sigma)}\tilde\psi(\gx,\sigma)
\end{equation}
using the notation $\int_\Gamma\rd x :=\int_{\rz^3}\rd \gx \sum_{\sigma=1}^2$.

The $N$-electron Hilbert space $\gH_N$ is the space of antisymmetric
functions $\psi$ of space-spin variables with $(\psi,\psi)<\infty$
where
\begin{equation}
  \label{eq:spN}
  (\psi,\tilde\psi)_N :=\int_{\Gamma^N}\rd x_1\cdots\rd x_N \;
  \overline{\psi(x_1,\ldots,x_N)}\tilde\psi(x_1,\ldots,x_N).
 \end{equation}

 Eventually $\gF$ is the space of all sequences
 $\psi:=(\psi_0,\psi_1,\psi_2,\ldots)$ with $\psi_N\in\gH_N$ such that
 $(\psi,\psi)_\gF<\infty$ with
 \begin{equation}
   \label{eq:spF}
   (\psi,\tilde\psi)_\gF:=\sum_{N=0}^\infty(\psi_N,\tilde\psi_N)_N.
 \end{equation}
 In the following we will drop any indices with scalar products.

 \subsection{Electronic states}
 From an abstract point of view, a state $\omega$ is simply a continuous
 linear functional on 
 the bounded operators $\gB(\gH)$ of the underlying Hilbert space
 $\gH$, which is also positive and normalized, i.e.,
 \begin{itemize}
   \item  For all $A\in\gB(\gH)$ we have $\omega(A^*A)\in\rz_+$ (positivity).
   \item $\omega(1)=1$ (normalization).
 \end{itemize}
 
 Given any $f,g\in \gH$ we will use the physics notation
 $|g\rangle\langle f|$ for the operator
 \begin{equation}
   \label{eq:bra}
   \begin{split}
     \gH&\to\gH,\\
     h&\mapsto (f,h)g.
   \end{split}
 \end{equation}
 In particular, if $f$ is normalized, then $|f\rangle\langle f|$ is the
 orthogonal projection onto the one-dimensional subspace spanned by
 $f$.  Given any set of orthonormal vectors $\xi_1,\xi_2,\ldots\in\gH$
 and nonnegative weights $w_1,w_2,\ldots\in\rz_+$ with
 $w_1+w_2+\cdots=1$
\begin{equation}
  \label{eq:dm}
  d:= w_1|\xi_1\rl\xi_1|+w_2|\xi_2\rl\xi_2|+\cdots
\end{equation}
is called a density matrix. To each density matrix one can associate
in natural way a state $\omega_d$ via the relation
\begin{equation}
  \label{eq:gemischt}
  \omega_d(A):=\tr(Ad).
\end{equation}
If there is a single normalized vector $\psi\in \gH_N$ such that
$d=|\psi\rl\psi|$ then $\omega_{|\psi\rl\psi|}$ is called a pure state
which --- in abuse of notation --- is also written for brevity as
$\rho_\psi$. We have
\begin{equation}
  \label{eq:rein}
  \omega_\psi(A):=\omega_{|\psi\rl\psi|}(A)= \tr(A|\psi\rl\psi|)=(\psi,A\psi).
\end{equation}
Because of the above relations, it is customary --- although strictly
speaking abusing notation again --- to address normalized vectors in a
Hilbert space also as states, in fact as pure states, and density
matrices with rank larger than one as mixed states.

The Hilbert spaces of relevance for us will be the Fock space $\gF$
and its summands, i.e., the $N$-electron spaces $\gH_N$.

\subsection{Creation and annihilation operators\label{ca}}
For our discussion it is handy to use creation and annihilation
operators.  Given $f\in\gH_1$ and $\psi=(\psi_0,\psi_1,\ldots)\in\gF$ we
define $a(f):\gF\to\gF$ component-wise --- abusing notation writing
$\psi_N$ instead of $(0,\ldots,\psi_N,0,\ldots)$ --- by
\begin{align}
  \label{eq:a}
  &\mathrel{\phantom{=}}[a(f)(\psi_N)](x_1,\ldots,x_{N-1})\nonumber\\
  &:= \begin{cases}
    \sqrt{N}\scaledmath{0.9}{\int_{\Gamma}} \rd x\,
    \overline{f(x)}\psi_N(x,x_1,\ldots,x_{N-1})&\text{for\ } N\geq 1,\\
    0&\text{for\ } N=0.
    \end{cases}
\end{align}
  Three properties are obvious for all $f\in\gH_1$:
  \begin{itemize}
  \item the family $a$ is conjugate linear in $f$,
  \item the operators $a(f)$ are bounded and linear on $\gF$,
  \item the operators $a(f)$ map the $N$-particle sector of the Fock
    space to its $(N-1)$-particle sector (for all $N\geq1$).
  \end{itemize}
  The adjoint operators $a^*(f)$ are called creation operators. We
  claim that they are also given component-wise by the formula
  \begin{align}
    \label{eq:a*}
    &\mathrel{\phantom{=}}[a^*(f)(\phi)](x_0,\ldots,x_{N-1})\nonumber\\
    &:= \frac{1}{\sqrt{N}}\sum_{n=0}^{N-1}(-1)^nf(x_n)
    \phi(x_1,\ldots,\hat x_n,\ldots,x_{N-1})
  \end{align}
  where the hat indicates the omission of the variable below it.
  Obviously these operators are also bounded, linear, and map $\gH_N$
  to $\gH_{N+1}$.  That \eqref{eq:a*} gives indeed the adjoint
  operator can be seen as follows. Suppose $\phi\in\gH_{N-1}$ and
  $\psi\in\gH_N$ then
  \begin{align}
    &\mathrel{\phantom{=}}\bigl(\phi,a(f)\psi\bigr)\nonumber\\
    &=\sqrt{N}
      \int_{\Gamma^{N-1}}\rd x_1\cdots \rd x_{N-1}\,
      \overline{\phi(x_1,\ldots,x_{N-1})}
    \int_\Gamma\rd x_0\,
    \overline{f(x_0)}\psi(x_0,\ldots,x_{N-1})\nonumber\\
    &=\sqrt{N}
      \int_{\Gamma^{N}}\rd x_0\cdots \rd x_{N-1}\,
      \overline{f(x_0)\phi(x_1,\ldots,x_{N-1})}
   \psi(x_0,\ldots,x_{N-1})\nonumber\\ 
   &=
     \int_{\Gamma^{N}}\rd x_0\cdots \rd x_{N-1}\,
     \underbrace{\sum_{n=0}^{N-1}\frac{(-1)^n}{\sqrt{N}}\overline{f(x_n)\phi(x_1,\ldots,\hat x_n,\ldots,x_{N-1})}}_{=:\overline{[a^*(f)(\phi)](x_0,\ldots,x_{N-1})}}\nonumber\\&\rule{90pt}{0pt}\times
    \psi(x_0,\ldots,x_{N-1}).
  \end{align}

  For given orthonormal basis $\xi_1,\xi_2,\ldots$ of $\gH_1$
  the short hand
\begin{equation}
  \label{eq:an}
  a^*_n:=a^*(\xi_n),\ a_n:=a(\xi_n)
\end{equation}
is customary. One even extends these to ``eigenstates'' of position
by writing $a^*_x:=a^*(\delta_\gx\delta_{\cdot,\sigma})$ or of momentum
by writing
$a^*_p:=a^*(\exp(\ri \gp\cdot
\cdot)\delta_{\cdot,\sigma}/(2\pi)^\frac32)$ and of other
observables. These expressions become meaningful when ``integrated''
against a suitable test function in the sense of a distribution, e.g.,
we have
\begin{equation}
a^*(f)=:\int_\Gamma\rd x\, f(x)a^*_x,\ a^*\bigl(\cF(f)\bigr)
=:\int_\Gamma\rd p \,f(p)a_p^*
\end{equation}
where we write $\cF(f)$ for the Fourier transform of $f$. 
It is also common to write them as
\begin{equation}
  \label{eq:axa*x}
  a_x=\sum_n\xi_n(x)a(\xi_n),\ a_x^*=\sum_n\overline{\xi_n(x)}a^*(\xi_n)
\end{equation}
(see, e.g., Schweber \cite[Chapter 6.e, Formulae (63) and
(64)]{Schweber1961}) using Parseval's identity.

Eventually we mention that the creation and annihilation operators
fulfill the canonical anticommutation relations (Jordan and Wigner
\cite[Formulae (36) and (40)]{JordanWigner1928}
\begin{align}
  \label{eq:car}
    &a(f)a(g)+a(g)a(f)=a^*(f)a^*(g)+a^*(g)a^*(f)=0,\nonumber\\
    &a^*(f)a(g)+a(g)a^*(f) = (g,f).
\end{align}

\subsection{Reduced densities and density matrices\label{rD}}

Given a state $\omega\in\gB(\gF)'$ and $f,g\in\gH_1$ we define the
sesquilinear form
\begin{align}
  \label{eq:q}
    \gq_\omega: \gH_1\times\gH_1&\to \cz,\nonumber\\
    (f,g)&\mapsto  \omega\bigl(a^*(f)a(g)\bigr).
\end{align}
We have
$0 \leq \omega\bigl(a^*(f)a(f)\bigr)\leq
\omega\bigl(a^*(f)a(f)\bigr)+\omega\bigl(a(f)a^*(f)\bigr)=(f,f)$,
i.e., $\gq_\omega$
defines a positive linear operator $\gamma_\omega$ bounded from above
by one. Moreover $\gamma_\omega$ is trace class, and, if $\omega$
lives on $\gB(\gH_N)$ only, then $\tr(\gamma_\omega)=N$.

It is enough to show the trace class property for the case that
$\omega$ is a pure $N$-electron state $\psi\in \gH_N$. Then
\begin{align}
  \label{matrixgamma}
  &\mathrel{\phantom{=}}\gq_{|\psi\rl\psi|}(f,g)
    := \tr\bigl(a^*(f)a(g)|\psi\rl\psi|\bigr)=\bigl(a(f)\psi,a(g)\psi\bigr)
    \nonumber\\
  &= \int_{\Gamma^{N-1}}\rd x_1\cdots \rd x_{N-1}\,
  \overline{a(f)\psi(x_1,\ldots,x_{N-1})}a(g)\psi(x_1,\ldots,x_{N-1})\nonumber\\
  &=\int\limits_\Gamma\rd x\int\limits_\Gamma\rd y\,\overline{g(x)}\nonumber\\
  &\rule{20pt}{0pt}\times
  \underbrace{N\int\limits_{\Gamma^{N-1}}\rd x_1\cdots \rd x_{N-1}\,
    \psi(x,x_1,\ldots,x_{N-1})\overline{\psi(y,x_1,\ldots,x_{N-1})}}
    _{=:\gamma_\psi(x,y)}f(y)\nonumber\\
&=\int_\Gamma\rd x\, \overline{g(x)}\int_\Gamma\rd y\,\gamma_\psi(x,y)f(y)
=(g,\gamma_\psi f).
\end{align}
Since $0\leq \gq_\omega(f,f)\leq (f,f)$, we have also that
$0\leq\gamma_\omega\leq1$. Because of the positivity of $\gamma_\omega$,
it suffices to show that its trace exists. By \eqref{matrixgamma} we
have for an orthonormal basis $\xi_1,\xi_2,\ldots$
\begin{align}
  &\mathrel{\phantom{=}} \tr(\gamma_\omega)
    = \sum_{n=1}^\infty \gq_{|\psi\rl\psi|}[\xi_n]\nonumber\\
  & = \sum_{n=1}^\infty\int\limits_\Gamma\rd x
    \int\limits_\Gamma\rd y\,\overline{\xi_n(x)}\nonumber\\
  &\rule{25pt}{0pt}\times N\!\int\limits_{\Gamma^{N-1}}\!\rd x_1
    \cdots \rd x_{N-1}\,
    \psi(x,x_1,\ldots,x_{N-1})\overline{\psi(y,x_1,\ldots,x_{N-1})}
    \xi_n(y)\nonumber\\
  & =\int\limits_\Gamma\rd x\, N\int\limits_{\Gamma^{N-1}}\rd x_1
    \cdots \rd x_{N-1}
    \,\psi(x,x_1,\ldots,x_{N-1})\overline{\psi(x,x_1,\ldots,x_{N-1})}
    =N
\end{align}
  where we used Parseval's identity to arrive at the last line.

The operator $\gamma_\omega$ is called the \textit{one-particle
  reduced density matrix of} $\omega$, $\gamma_\omega(x,y)$ is its
integral kernel, and the (spin summed) \textit{one-particle reduced
  density} is
\begin{equation}
  \label{dichte}
  \rho_{\gamma_\omega}(\gx):= \sum_\sigma\sum_n\lambda_n|\xi_n(x)|^2
\end{equation}
where the $\{\xi_1,\xi_2,\ldots\}$ is a complete orthonormal set of
eigenvectors of $\gamma_\omega$, known as natural orbitals, and the
$\lambda_n$ are the corresponding eigenvalues. Formally
$\rho_\omega(\gx)$ is the spin-summed diagonal
$\sum_\sigma\gamma_\omega(\gx,\sigma,\gx,\sigma)$ of the reduced
one-particle density matrix. Using the above notation, we have
\begin{equation}
  \label{eq:gxy}
  \omega(a_x^{\phantom{*}}a^*_y)= \gamma_\omega(x,y)
\end{equation}
for the kernel of $\gamma_\omega$.

This generalizes to $k$ particles: the $k$-particle reduced density
matrix $\gamma_\omega^{(k)}$ of a state $\omega$ has the integral
kernel
\begin{equation}
  \gamma_\omega^{(k)}(x_1,\ldots,x_k,y_1,\ldots,y_k)
  :=\frac{1}{k!}\omega(a_{x_k}\cdots a_{x_1}a^*_{y_1}\cdots a^*_{y_k}). 
\end{equation}
The $k$-particle reduced density
\begin{equation}
  \rho^{(k)}_\omega(\gx_1,\ldots,\gx_k)=\sum_{\sigma_1,\ldots,\sigma_k}^k
  \gamma_\omega^{(k)}(x_1,\ldots,x_k,x_1,\ldots,x_k)
\end{equation}
is the spin-summed diagonal of the $k$-particle density matrix.

\subsection{Observables}
\subsubsection{The number operator}
The space
\begin{equation}
Q_{\nz}:=\Biggl\{(\psi_0,\psi_1,\ldots)\in\gF\Biggm| \sum_{N=1}^\infty
N\int_{\Gamma^N}\rd x\, |\psi_N(x)|^2<\infty \Biggr\}
\end{equation}
is dense in $\gF$ and the number of electrons in a state
$\psi=(\psi_0,\psi_1,\ldots)$ is given by the quadratic form
\begin{equation}
  \label{anzahl}
  \gq_{\nz}[\psi]:=\int_\Gamma\rd x\, (a_x\psi,a_x\psi)
  =  \sum_{N=1}^\infty N\int_{\Gamma^N}\rd x\, |\psi_N(x)|^2.
\end{equation}
It is well defined on $Q_\nz$, closed, and bounded from below. The
associated selfadjoint operator defined according to Friedrichs is
called the number operator $\nz$. In a common abuse of notation it is
written as $\nz=\int_\Gamma\rd x\, a^*_xa_x^{\phantom{*}}$.

\subsubsection{The electronic Hamiltonian\label{hamiltonian}}
The one-particle kinetic energy operator is a positive, selfadjoint,
and translation-invariant operator which in nonrelativistic quantum
mechanics is $-\tfrac12\gp^2\otimes\1_{\cz^2}$ with $\gp:=-\ri\grad$.
We should write $V\otimes\1_{\cz^2}$ for the
one-particle external potential when spin independent; but as usual we
will omit factors that are one. We will assume that the kinetic energy
controls the one-particle potential, technically
$\exists_{a\in[0,1)}\exists_{M\in\rz}\forall_{f\in H^1(\Gamma)}$
\begin{equation}
  \label{relbesch}
\left|\int_\Gamma\rd x\, |f(x)|^2V(\gx)\right|\leq
\int_\Gamma\rd x\,{\left(\frac a2|\nabla f(x)|^2 +M |f(x)|^2\right)}.
\end{equation}
Here
\begin{equation}
  H^1(\Gamma):= \biggl\{f\in L^2(\Gamma)\biggm|
  \int_\Gamma\rd \xi \,|\xi \cF(f)(\xi)|^2<\infty\biggr\},
\end{equation}
i.e., a state $f$ is in $H^1(\Gamma)$ if it has finite kinetic
energy. The space $H^1$ is called the Sobolev space of order one. It
naturally characterizes all states that have a finite energy in
nonrelativistic quantum mechanics.

We write
\begin{equation}
  Q_{\oper{H}}
  :=\Biggl\{(\psi_0,\psi_1,\ldots)\in\gF\Biggm|\sum_{N=1}^\infty
  \int_{\Gamma^N}\rd \xi
      \,|\xi\cF(\psi_N)(\xi)|^2<\infty \Biggr\}
\end{equation}
for the space of all $\psi\in\gF$ which have finite kinetic
energy. The energy $\cE(\psi)$ of the electrons in a state
$\psi\in Q_{\oper{H}}$,
\begin{align}
  \label{eq:ham}
  &\mathrel{\phantom{=}}\cE_V[\psi]\nonumber\\
  &:=\int_\Gamma \rd x \left( \frac12\|\gp_\gx a_{x}\psi\|^2
    + V(\gx) \| a_x\psi\|^2\right)
    +  \frac12\int\limits_\Gamma \rd x \int\limits_\Gamma \rd y\,
    \frac{(\psi, a^*_xa^*_ya_ya_x\psi)}{|\gx-\gy|}\nonumber\\
  &=\sum_{N=1}^\infty \,\int\limits_{\Gamma^N}\rd x\,
    \Biggl[\binom{N}{1}
    {\left( \frac12|\grad_{\gx_1}\psi_N(x)|^2+V(\gx_1)|\psi_N(x)|^2\right)}
  +\binom{N}{2}
  \frac{|\psi_N(x)|^2}{|\gx_1-\gx_2|}\Biggr]\nonumber\\
  &=\sum_{N=1}^\infty \,\int\limits_{\Gamma^N}\rd x\,\Biggl[
    \sum_{n=1}^N\left( \frac12|\grad_{\gx_n}\psi_N(x)|^2
    +V(\gx_n)|\psi_N(x)|^2\right)\nonumber\\
   &\rule{70pt}{0pt}\mbox{}
   +\sum_{1\leq m<n\leq N}\frac{|\psi_N(x)|^2}{|\gx_m-\gx_n|}\Biggr],
\end{align}
is well defined and $\cE_{V}|_{\gH_N}$ is closed and bounded from
below by $-MN$. The selfadjoint operator associated to it according
to Friedrichs is the $N$-electron Hamiltonian $H_{V,N}$ giving meaning
to \eqref{H} as a selfadjoint operator. The direct sum of these
Hamiltonians is the second quantized Hamiltonian
$\oper{H}$. Furthermore, we remark that, e.g., in the case of a
Coulomb potential $V(\gx)=-Z/|\gx|$, i.e., the standard atomic
Hamiltonian,
\begin{equation}
  \cE_{-Z/|\cdot|}[\psi]\geq
  \inf\bigl\{\sigma(H_{-Z/|\cdot|,2Z+1})\|\psi\|^2\bigr\}
\end{equation}
which is a consequence of the fact that there are no arbitrarily
negative ions. In fact, Lieb \cite{Lieb1984} showed that less than
$2 Z+1$ electrons can be bound. (We will exhibit the argument leading to
this bound in the proof of Theorem \ref{rafw} in the context of the
Thomas--Fermi--Weizs\"acker functional). In this case, $\cE_V$ is bounded
from below and the Friedrichs extension in the entire Fock
space is directly possible.

We finish the section by rewriting \eqref{eq:ham} in terms of the
one-particle density matrix $\gamma_\psi$ and the two-particle density
$\rho^{(2)}$. It can be read off from the third line of \eqref{eq:ham}
and gives
\begin{equation}
  \label{dd}
  \cE_V[\psi] = \tr\bigl((-\tfrac12\Delta+V)\gamma_\psi\bigr)
  +\int_{\rz^3}\rd\gx\int_{\rz^3}\rd \gy\,
  \frac{\rho_\psi^{(2)}(\gx,\gy)}{|\gx-\gy|}.
\end{equation} 

\enlargethispage{1.7\baselineskip}

\section{The Hohenberg--Kohn theorem \label{sHK}}
As mentioned above, Hohenberg's and Kohn's original argument merely supports
the existence of the functional $\cE^\mathrm{HK}$. The underlying
mathematical question raised by the argument is indeed intriguing and
has been the subject of recent investigation. We will, however, not
expand on these results and refer to Garrigue
\cite{Garrigue2018,Garrigue2019,Garrigue2020}. Instead, we will
present a variational construction of the Hohenberg--Kohn functional.
An early version is due to Percus \cite{Percus1978} who outlined the
idea in the context of vanishing electron-electron interaction. It is
known as \textit{Levy--Lieb constraint search} going back to Levy
\cite{Levy1979} and Lieb \cite{Lieb1983}. The variational argument
does not only show the existence of such a functional but also
indicates a way on how to approximate the Hohenberg--Kohn
functional. Again we will present it in the context of
nonrelativistic quantum mechanics with Coulomb interactions among the
electrons.

\begin{definition}
  For $N\in\rz_+$ we set\footnote{Here and below $\int_{\rz^3}\rho$
    abbreviates $\int_{\rz^3}\rd\gx\,\rho(\gx)$, and similarly for
    $\int_{\rz^3}\overline{\rho}\sigma$ and the like.}
  \begin{align}
    \cA:=&\bigl\{\rho\bigm|\rho\geq0,\ \sqrt{\rho}\in H^1(\rz^3)\bigr\},
           \nonumber\\
    \cA_N:=&\biggl\{\rho\in\cA\biggm|\int_{\rz^3}\rho\leq N\biggr\},\nonumber\\
    \cA_{\partial N}:= &\biggl\{\rho\in\cA\biggm|\int_{\rz^3}\rho=N\biggr\}.
  \end{align}
  If $N\in\mathbb{N}$ we call $\cA_{\partial N}$ the set of $N$-particle
  densities.
  We call the functional
  \begin{align}
    \label{HK}
       &\cE^{\mathrm{HK}}:\cA_{\partial N}\to \rz_+\cup\{\infty\},\nonumber\\
      &\rho\mapsto \inf\Biggl\{\cE_V[\psi]-\!\int_{\rz^3}\!\!\rd \gx\,
      V(\gx)\rho(\gx)\Biggm|\psi\in\gH_N\cap H^1(\Gamma^N),\; \|\psi\|=1,\;
      \rho_\psi=\rho\Biggr\}
   \end{align}
  the Hohenberg--Kohn functional.
\end{definition}
In \eqref{HK} we use the standard convention that the infimum over the
empty set is $\infty$, which in the case at hand occurs, if there is
no pure $N$-electron state $\psi\in\gH_N$ of finite kinetic energy
such that its one-particle density $\rho_\psi$ equals the given
$\rho$. Such densities are called non-$N$-representable densities. In
other words, nonrepresentable densities do not contribute to the
infimum.

The following immediate observation holds (Percus \cite{Percus1978},
Levy \cite{Levy1979}, and Lieb \cite{Lieb1983}).
\begin{theorem}
  We have
  \begin{equation}
    \inf\bigl\{\sigma(H_{V,N})\bigr\}
    = \inf\biggl\{\cE^\mathrm{HK}(\rho)+ \int_{\rz^3}\rd \gx\,
  V(\gx)\rho(\gx)\biggm|\rho\in\cA_{\partial N}\biggr\}.
  \end{equation}
  Furthermore, if $H_{V,N}$ has a
  ground state, a minimizer of the right side exists and is the one-particle
  density of a ground state.
\end{theorem}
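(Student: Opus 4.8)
The plan is to realize the right-hand side as the ground-state energy reorganized through a two-stage minimization (the Levy--Lieb constraint search). The starting point is the variational characterization of the spectral bottom,
\[
  \inf\bigl\{\sigma(H_{V,N})\bigr\}
  = \inf\bigl\{\cE_V[\psi]\bigm|\psi\in\gH_N\cap H^1(\Gamma^N),\ \|\psi\|=1\bigr\},
\]
which holds because $H_{V,N}$ is the Friedrichs operator associated with the closed, bounded-below form $\cE_V|_{\gH_N}$, so its lowest spectral point equals the infimum of the form over the normalized form domain. The crucial structural observation is that the external-potential contribution to $\cE_V[\psi]$, namely $\int_\Gamma\rd x\,V(\gx)\|a_x\psi\|^2=\int_{\rz^3}\rd\gx\,V(\gx)\rho_\psi(\gx)$, depends on $\psi$ only through its one-particle density $\rho_\psi$.

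First I would verify that the map $\psi\mapsto\rho_\psi$ sends the variational domain into $\cA_{\partial N}$. Nonnegativity of $\rho_\psi$ is immediate, and $\int_{\rz^3}\rho_\psi=\tr(\gamma_\psi)=N$ was already established above. The remaining requirement $\sqrt{\rho_\psi}\in H^1(\rz^3)$ is the Hoffmann--Ostenhof inequality, which controls $\int_{\rz^3}|\grad\sqrt{\rho_\psi}|^2$ by the one-particle kinetic energy $\tr(-\tfrac12\Delta\,\gamma_\psi)$ and is therefore finite for $\psi\in H^1(\Gamma^N)$. This guarantees that every competitor in the variational principle contributes a density lying in $\cA_{\partial N}$.

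Next I would partition the infimum over $\psi$ along the fibers of the density map. Writing $\cE_V[\psi]=\bigl(\cE_V[\psi]-\int_{\rz^3}V\rho_\psi\bigr)+\int_{\rz^3}V\rho_\psi$ and noting that on each fiber $\{\psi:\rho_\psi=\rho\}$ the second summand is the constant $\int_{\rz^3}V\rho$, the iterated infimum
\[
  \inf_\psi\cE_V[\psi]
  = \inf_{\rho\in\cA_{\partial N}}\ \inf_{\psi:\,\rho_\psi=\rho}\cE_V[\psi]
  = \inf_{\rho\in\cA_{\partial N}}\Bigl(\cE^{\mathrm{HK}}(\rho)+\int_{\rz^3}V\rho\Bigr)
\]
follows directly from the definition of $\cE^{\mathrm{HK}}$; densities in $\cA_{\partial N}$ that are not $N$-representable carry an empty inner infimum, hence value $+\infty$, and do not lower the right-hand side. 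This establishes the asserted equality.

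For the supplement, suppose $\psi_0$ is a ground state, so $\cE_V[\psi_0]=\inf\{\sigma(H_{V,N})\}$, and set $\rho_0:=\rho_{\psi_0}\in\cA_{\partial N}$. On the one hand $\cE^{\mathrm{HK}}(\rho_0)+\int_{\rz^3}V\rho_0\le\cE_V[\psi_0]$ since $\psi_0$ lies in the fiber over $\rho_0$; on the other hand the established equality forces $\cE^{\mathrm{HK}}(\rho_0)+\int_{\rz^3}V\rho_0\ge\inf\{\sigma(H_{V,N})\}=\cE_V[\psi_0]$. Hence $\rho_0$ attains the right-hand infimum and is the one-particle density of the ground state $\psi_0$. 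The main obstacle is the second paragraph: controlling $\sqrt{\rho_\psi}$ in $H^1(\rz^3)$ by the Hoffmann--Ostenhof bound, which is precisely what makes $\cA_{\partial N}$ the correct competitor class and legitimizes the partition of the infimum along density fibers.
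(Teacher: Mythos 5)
Your proposal is correct and is exactly the Levy--Lieb constrained-search argument that the paper has in mind: the paper states this theorem as an ``immediate observation'' without writing out a proof, and your iterated-infimum decomposition along the fibers of $\psi\mapsto\rho_\psi$ is the standard way to make it precise. You also correctly identify and supply the one nontrivial ingredient the paper leaves unstated, namely the Hoffmann--Ostenhof inequality guaranteeing that every finite-kinetic-energy competitor has $\sqrt{\rho_\psi}\in H^1(\rz^3)$, so that no density escapes the competitor class $\cA_{\partial N}$.
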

In other words, this observation justifies the functional's name:
$\cE^\mathrm{HK}$ is the universal, i.e., independent of the external
potential $V$, functional of the reduced one-particle density matrix
which, when $\int_{\rz^3}\rd \gx\, V(\gx)\rho(\gx)$ is added, yields
the exact quantum ground state energy as minimal value and an exact
quantum one-particle ground state density as minimizer. In other
words, it is the functional whose existence Hohenberg and Kohn showed.

This construction has various more or less immediate extensions:
\begin{description}
\item\textbf{Spin-dependent potentials (Barth and Hedin
    \cite{BarthHedin1972}):} The functional depends on the diagonal of
  $\gamma$, the space-spin density, instead of $\rho$, the space
  density, only.
\item\textbf{Magnetic fields and Spin (Rajagopal and Callaway
    \cite{RajagopalCallaway1973}):} The functional depends on the
  electric current which, in state $\psi$, is
  $\gj_\psi(x):= (a_x\psi,p_\gx a_x\psi)$.
\item\textbf{$k$-body potentials (M\"uller et
    al.\ \cite{MullerSiedentop1981}):} The functional depends on the
  $k$-body density.
\item\textbf{General one-body potentials (Gilbert
    \cite{Gilbert1975}):} The functional depends on the reduced
  one-particle density matrix $\gamma$.
\end{description}

As mentioned above, Hohenberg's and Kohn's original argument for the
existence of $\cE^\mathrm{HK}$ is nonconstructive. In contrast,
although the construction \eqref{HK} appears difficult to be carried
through completely, it nevertheless offers an approximation scheme. An
example of such a scheme actually predates the work of Hohenberg and
Kohn. It is based on an idea of Macke \cite{Macke1955} carried through
by March and Young \cite{MarchYoung1958} for one-dimensional fermions
(see also Percus \cite{Percus1978}). We will outline the idea
suppressing the spin-dependence and interaction to exhibit it more
clearly: Given any density $\rho$ on $\rz$ with mass $N$ and with
${\sqrt\rho}\,'$ square integrable, we define a Slater determinant
$\psi$ with orbitals
\begin{align}\label{Macke}
    &\phi_n(x):= \sqrt{Y'(x)}\exp\bigl(\ri 2\pi (n-a) Y(x)\bigr),\
    n=1,\ldots,N,\ a\in\rz,\nonumber\\
    &Y:\rz\to(0,1),\ Y(x):= \int_{-\infty}^x\rd t\,\frac{\rho(t)}{N},
\end{align}
also known as Macke orbitals. We immediately read off that these are
orthonormal and their density is
\begin{equation}
  \label{eq:m1}
  \rho_\psi(x) =\sum_{n=1}|\phi_n(x)|^2=\rho(x).
\end{equation}
Moreover, a small calculation and optimization in the parameter $a$ shows
\begin{equation}
  \label{eq:hkma}
  \cE^\mathrm{HK}(\rho)\leq\frac12\int_\rz\rd\gx\,\biggl[\sqrt{\rho}\,'(\gx)^2
    +\frac{\pi^2}{3}\biggl(1-\frac1{N^2}\biggr)\rho(\gx)^3\biggr]
\end{equation}
which --- apart from the $-N^{-2}$ --- is the one-dimensional
Thomas--Fermi--Weizs\"acker functional of the kinetic energy (March and
Young \cite{MarchYoung1958}).

M\"uller \cite{Muller1979} found a generalization of \eqref{Macke} to
all dimensions $d$ such that the orbitals fulfill \eqref{eq:m1} for
given $d$-dimensional density $\rho(\gx)$, namely
\begin{equation}
  \label{Macked}
  \phi_\nu(\gx):= \sqrt{\bigl|\det\bigl(J(\gx)\bigr)\bigr|}
  \exp\bigl(\ri 2\pi (n_\nu-a)Y(\gx)\bigr)
 \end{equation}
for $n_\nu\in \gz^d$ with $\nu=1,\ldots,n$, and $a\in\rz^d$
with
\begin{align}
  \label{macketrafo}
  &Y:\rz^d\to(0,1)^d,\nonumber\\
  &\gx\mapsto
    \begin{pmatrix}
      \displaystyle{\frac{\int_{-\infty}^{x_1}\rd t_1\,\rho(t_1,x_2,\ldots,x_d)}
      { \int_{-\infty}^\infty\rd t_1\,\rho(t_1,x_2,\ldots,x_d)}}\\
      \vdots\\
      \displaystyle{\frac{\int_{-\infty}^\infty\rd
        t_1\cdots \int_{-\infty}^\infty\rd t_{d-1}\int_{-\infty}^{x_d}\rd
        t_d\,\rho(t_1,\ldots,t_d)}{\int_{-\infty}^\infty\rd
        t_1\cdots \int_{-\infty}^\infty\rd t_d\,\rho(t_1,\ldots,t_d)}}
    \end{pmatrix}
\end{align}
where $J$ is the Jacobian of $Y$. These orbitals are also
orthonormal. The Jacobian is the determinant of a tridiagonal matrix,
i.e., $J$ is just the product of the diagonal. It is a telescopic
product yielding $\rho(\gx)/N$. However the corresponding Slater
determinant reproduces merely the Weizs\"acker part of the kinetic
energy but not the Thomas--Fermi part.

Note that March and Young \cite{MarchYoung1958} proposed a choice for
the orbitals \eqref{Macked} with a different $Y$. They postulated
properties of the transform of $\rz^d$ to $(0,1)^d$ which, however,
lead to a contradiction. One may, however, salvage the argument in
three dimensions by a slight modification which is suitable for spherical
potential and, in this way, obtain the Hellmann-Weizs\"acker functional
\cite{Hellmann1936}
\begin{align}
  \label{Hellmann}
  \cE^\mathrm{HW}\bigl(\arvec{\rho}\bigr)
  &:= \frac12\sum_{l=0}^\infty \int_0^\infty\rd r\left(\sqrt{\rho_l}\,'(r)^2
    +\frac{l(l+1)}{r^2}\rho_l(r)
    +\frac{\pi^2}{3}\frac{\rho_l^3(r)}{\bigl(2(2l+1)\bigr)^2}\right)\nonumber\\
  &\mathrel{\phantom{=}}\mbox{}
    +\sum_{l=0}^\infty \int_0^\infty\rd r\, V(r)\rho_l(r)\nonumber\\
  &\mathrel{\phantom{=}}\mbox{}
  + \frac12\sum_{l,l'=0}^\infty \int_0^\infty\rd r \int_0^\infty\rd r'
  \,\frac{\rho_l(r)\rho_{l'}(r')}{\max\{r,r'\}}
\end{align}
with ${\arvec{\rho}=(\rho_0,\rho_1,\ldots)}$ instead of the
Thomas--Fermi--Weizs\"acker functional as an upper bound. We assume that
${\rho_0\geq0}$, ${\rho_1\geq0}$, \ldots,
${\sum_{l=0}^\infty\int_0^\infty\rd r \sqrt{\rho_l}'(r)^2<\infty}$ and
${\sum_{l=0}^\infty\int_0^\infty\rho_l\leq N}$, the electron number. The
function $\rho_l$ with ${l\in\mathbb{N}_0}$ may be interpreted as the radial
densities of electrons in angular momentum channel $l$.

Although infinitely many $l$ are allowed, it turns out that the
minimizer has only finitely many. Using Hardy's inequality, we
obtain a lower bound with the gradient term dropped and $l(l+1)$
replaced by $(l+\tfrac12)^2$. The resulting function would give a
positive contribution for high angular momenta
\cite{SiedentopWeikard1986O}. See also a bound on the total charge
obtained by the same argument yielding \eqref{RafaelW} (Benguria et al.\ 
\cite{Benguriaetal1992}).

\enlargethispage{0.9\baselineskip}

The following modification of \eqref{Macke} --- additionally with spin
included --- shows that $\cE^\mathrm{HW}(\arvec\rho)$ is an upper bound
on the ground-state energy $\inf\{\sigma(H_{V,N})\}$ of the $N$-electron
system:
\begin{equation}
  \label{Mackel}
  \phi_{n,l,m,s}(x)  := \frac{\sqrt{Y_l'(|\gx|)}}{|x|}\exp\bigl(\ri 2\pi
    (n_l-a_l) Y_l(|x|)\bigr)Y_{l,m}(\gx/|\gx|)\delta_{s,\sigma}
\end{equation}
for $n=1,\ldots,N_{l,m,s}\in\mathbb{N}_0$, $m=-l,\ldots,l$, $s=1,2$,
$\sum_{l,m,s}N_{l,m,s}=N$, and $a_l\in\rz$ with
\begin{equation}
    \label{macketraforadial}
    Y_l:(0,\infty)\to(0,1)^d,\quad
    r\mapsto\frac{\int_0^r\rd t\,\rho_l(t)}
    {\int_0^\infty\rd t\,\rho_l(t)}
\end{equation}
for all $l\in\mathbb{N}_0$ with $\int_0^\infty\rho_l>0$ (see
Ladanyi \cite{Ladanyi1958} and \cite{Siedentop1981}).

\section{Some results on concrete density functionals}
In the following we will spotlight some of the basic density
functionals and density matrix functionals. To simplify the notation we
concentrate on the atomic case although there are generalizations to
molecules and other more general external potentials.

\subsection{Thomas--Fermi theory\label{tftheorie}}

\subsubsection{Definition and basic properties}

The Thomas--Fermi functional (Lenz \cite{Lenz1932}) of an atom with
nuclear charge $Z$ is
\begin{equation}
  \label{TFF}
  \cE_Z^{\mathrm{TF}}(\rho)
  :=\int_{\rz^3}\rd\gx\left(\frac35\gtf\rho(\gx)^{\tfrac53}
    -\frac Z{|\gx|}\rho(\gx)\right)
  +\underbrace{\frac12\int_{\rz^3}\rd\gx\int_{\rz^3}\rd\gy\,
    \frac{\rho(\gx)\rho(\gy)}{|\gx-\gy|}}_{=:D[\rho]}
\end{equation}
where $\gtf$ is a positive constant, namely
$(\hbar^2/2m)(6\pi^2/q)^{2/3}$ which for electrons ($q=2$) is
$(3\pi^2)^{2/3}/2$ in Hartree units ($\hbar=m=1$).

We will collect some (mostly) known results. We refer to Lieb and
Simon \cite{LiebSimon1977}, Simon \cite[Section 9]{Simon1979}, Lieb
\cite{Lieb1981}, and Lieb and Loss \cite{LiebLoss1996} for other
reviews and references.

As usual we write $\cS(\rz^3)$ for the Schwartz space of fast decaying
functions in $C^\infty(\rz^3)$ and $\cS'(\rz^3)$ for its dual, the
tempered distributions. (See, e.g., Lieb and Loss \cite{LiebLoss1996}
for more details.)

\begin{definition}
 We write
 \begin{align}
   \cC&:=\biggl\{\rho\in\cS'(\rz^3)\biggm|
        \int_{\rz^3}\rd\gxi \,
        \bigl|\cF(\rho)(\gxi)\bigr|^2/|\gxi|^2
        <\infty\biggr\}, \nonumber\\ 
   \cI&:=\Bigl\{\rho\in L^{\frac53}(\rz^3)\Bigm|
        \rho\geq0,\ D[\rho]<\infty\Bigr\},\nonumber\\
   \cI_N&:=\biggl\{\rho\in\cI\biggm|
     \int_{\rz^3}\rho\leq N\biggr\},\nonumber\\ 
   \cI_{\partial N}&:=\biggl\{\rho\in\cI_N\biggm|\int_{\rz^3}\rho=N\biggr\}.
 \end{align}
 for the set of all tempered distributions $\rho$ with finite
 electron-electron interaction TF-energy, those positive distributions
 which have also finite kinetic energy, and those with corresponding
 constraints on their masses.
\end{definition}
Note that $\cC$ is a Hilbert space with the scalar product
\begin{align}
  \label{D}
D(\rho,\sigma)
&:=\frac12 \int_{\rz^3}\rd\gx\int_{\rz^3}\rd\gy\,
   \frac{\overline{\rho(\gx)}\sigma(\gy)}{|\gx-\gy|}\nonumber\\
&= 2\pi\int_{\rz^3}\rd\gxi\,
  \frac{\overline{\cF(\rho)(\gxi)}\cF(\sigma)(\gxi)}
  {|\gxi|^2},
\end{align}
the sesquilinear form associated with the positive quadratic form
$D$. The fact that for any $\rho\in\cC$ one has $D[\rho]>0$ unless
$\rho=0$, which is immediate from its Fourier representation, is known
as Onsager's inequality.

One of the key observations which elevates Thomas--Fermi theory to an
important tool in the analysis of many electron systems is the
Lieb--Thirring inequality (Lieb and Thirring
\cite{LiebThirring1975,LiebThirring1976}).
\begin{theorem}
  \label{satzlt}
  There exists a positive constant $\gamma_\mathrm{LT}$ such that for
  all one-particle density matrices $\gamma$ with finite kinetic
  energy, i.e., for all $\gamma\in \gS^1(L^2(\Gamma))$ with
  $0\leq\gamma\leq1$ and $\Delta\gamma\in\gS^1(L^2(\Gamma))$
  \begin{equation}
    \label{lt}
    \tr\bigl(-\tfrac12\Delta\gamma\bigr)\geq
    \frac35\gamma_\mathrm{LT} \int_{\rz^3}\rd \gx\, \rho_\gamma(\gx)^\frac53.
  \end{equation}
\end{theorem}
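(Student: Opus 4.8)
The plan is to establish the Lieb–Thirring inequality by reducing it to the more accessible one-particle eigenvalue sum bound (the Lieb–Thirring bound on the moments of negative eigenvalues of Schrödinger operators) via Legendre duality. The core observation is that the kinetic energy $\tr(-\tfrac12\Delta\gamma)$ for admissible $\gamma$ and the $L^{5/3}$ norm of the density $\rho_\gamma$ are linked by a variational principle involving an auxiliary external potential. Concretely, I would introduce, for any real potential $W$, the one-particle Schrödinger operator $-\tfrac12\Delta + W$ on $L^2(\Gamma)$ and exploit the inequality
\begin{equation}
  \tr\bigl((-\tfrac12\Delta+W)\gamma\bigr)\geq \tr\bigl((-\tfrac12\Delta+W)_-\bigr)
  = -\sum_j |e_j(W)|,
\end{equation}
valid for every $\gamma$ with $0\leq\gamma\leq1$, where $e_j(W)$ are the negative eigenvalues of $-\tfrac12\Delta+W$ and $(\,\cdot\,)_-$ denotes the negative part. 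This is just the statement that filling the lowest one-particle levels up to occupation one minimizes the energy among all admissible density matrices—a consequence of $0\leq\gamma\leq1$.

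**Next I would** bring in the eigenvalue version of the Lieb–Thirring inequality, which I may assume in the form
\begin{equation}
  \label{ltev}
  \sum_j |e_j(W)| \leq L_{1,3}\int_{\rz^3}\rd\gx\, W_-(\gx)^{5/2},
\end{equation}
with $W_-:=\max\{-W,0\}$ and a finite constant $L_{1,3}$. Combining this with the spectral lower bound from the previous paragraph gives, for all admissible $\gamma$,
\begin{equation}
  \tr\bigl(-\tfrac12\Delta\gamma\bigr)\geq -\int_{\rz^3}\rd\gx\, W(\gx)\rho_\gamma(\gx)
  - L_{1,3}\int_{\rz^3}\rd\gx\, W_-(\gx)^{5/2},
\end{equation}
where I have used $\tr(W\gamma)=\int_{\rz^3} W\rho_\gamma$ for a spin-independent multiplication operator $W$. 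The right-hand side is now a linear functional of $W$ corrected by an $L^{5/2}$ penalty, and the inequality holds for every choice of $W$, so I am free to optimize over $W$ to make the bound as strong as possible.

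**The key step** is then the pointwise optimization: for fixed $\rho_\gamma$, I choose $W\leq 0$ and maximize the right-hand side over $W$ by the Euler–Lagrange condition, which yields $W(\gx)=-c\,\rho_\gamma(\gx)^{2/3}$ for an appropriate constant $c$ determined by $L_{1,3}$. This is the Legendre transform computation: the supremum over $W$ of $\int(-W\rho_\gamma - L_{1,3}W_-^{5/2})$ equals a constant times $\int\rho_\gamma^{5/3}$, by the elementary duality between the convex functions $t\mapsto t^{5/2}$ and its conjugate. Carrying this through produces exactly the desired form
\begin{equation}
  \tr\bigl(-\tfrac12\Delta\gamma\bigr)\geq \tfrac35\gamma_{\mathrm{LT}}\int_{\rz^3}\rd\gx\,\rho_\gamma(\gx)^{5/3},
\end{equation}
with $\gamma_{\mathrm{LT}}$ expressible in terms of $L_{1,3}$.

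**The main obstacle** is clearly the eigenvalue bound \eqref{ltev} itself: everything above is a clean and essentially algebraic reduction, but \eqref{ltev} is the genuinely hard analytic input, and a fully self-contained proof would require establishing it (for instance via the Birman–Schwinger kernel and a bound on its trace, or via a coherent-state/Weyl-symbol phase-space argument à la Lieb). Since the statement only asserts \emph{existence} of a positive constant $\gamma_{\mathrm{LT}}$ rather than its sharp value, I would not attempt to optimize the constant; any finite $L_{1,3}$ suffices, and the cleanest route to a finite constant is the phase-space bound, comparing $\sum_j|e_j(W)|$ to its semiclassical value $(2\pi)^{-3}\iint (\tfrac12\gp^2+W)_-\,\rd\gx\,\rd\gp$ up to a multiplicative factor. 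A secondary technical point is justifying the domain manipulations—that $\gamma\in\gS^1$ with $\Delta\gamma\in\gS^1$ guarantees $\rho_\gamma\in L^{5/3}$ and that the trace identities hold—but these follow from standard density arguments and the closedness of the kinetic energy form established earlier.
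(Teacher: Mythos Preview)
Your duality argument is correct and is in fact the standard route from the eigenvalue-sum form of the Lieb--Thirring inequality to the kinetic-energy form. However, the paper does not actually prove Theorem~\ref{satzlt}: immediately after stating it and its dual formulation (the bound $\tr(-\tfrac12\Delta-\varphi)_-\geq -L\int\varphi_+^{5/2}$), the paper explicitly declines to give a proof and instead refers the reader to Nam's review in the same proceedings. So there is nothing in the paper to compare your argument against beyond the observation that the paper records precisely the dual statement you reduce to as your ``hard analytic input''.

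In that sense you have done more than the paper: you have spelled out the Legendre-duality step that passes between the two formulations, which the paper leaves implicit by merely calling one the ``dual formulation'' of the other. Your identification of the eigenvalue bound \eqref{ltev} as the genuine obstacle is accurate, and your remark that any finite constant suffices (since the theorem only asserts existence of $\gamma_{\mathrm{LT}}$) is exactly the right attitude here. The only cautionary note is that your sketch, like the paper, ultimately outsources the core estimate; a self-contained write-up would still need one of the proofs you allude to (Birman--Schwinger, Rumin, or a coherent-state argument).
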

A dual formulation is
\begin{theorem}
  There is a constant $L>0$ such that for all $\varphi\in L^\frac52(\rz^3)$
  \begin{equation}
    \label{lt2}
    \tr\bigl(-\tfrac12\Delta-\varphi\bigr)_-
    \geq -L\int_{\rz^3}\rd \gx\, \varphi_+(\gx)^\frac52.
  \end{equation}
\end{theorem}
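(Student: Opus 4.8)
The plan is to derive the dual bound \eqref{lt2} from the primal Lieb--Thirring inequality \eqref{lt} of Theorem~\ref{satzlt} by a Legendre-transform (duality) argument. The two statements are connected by the elementary Legendre relation between the functions $t\mapsto \tfrac35\gamma_\mathrm{LT}t^{5/3}$ and $s\mapsto L\,s^{5/2}$, so the content is to turn the trace-form inequality for density matrices into a sum-of-negative-eigenvalues inequality for Schr\"odinger operators. Concretely, I would first reduce to the case $\varphi\geq0$: since $-\tfrac12\Delta-\varphi\geq -\tfrac12\Delta-\varphi_+$ as quadratic forms, the negative part of the spectrum only grows when $\varphi$ is replaced by $\varphi_+$, and the right-hand side of \eqref{lt2} already involves only $\varphi_+$, so it suffices to prove the bound for nonnegative $\varphi\in L^{5/2}(\rz^3)$.

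Next I would make sense of $\tr(-\tfrac12\Delta-\varphi)_-$ and identify it variationally. Writing $H:=-\tfrac12\Delta-\varphi$, the sum of the absolute values of its negative eigenvalues admits the variational characterization
\begin{equation}
  \tr(H)_- = -\inf\bigl\{\tr(H\gamma)\bigm| 0\leq\gamma\leq 1,\ \gamma\in\gS^1(L^2(\Gamma))\bigr\},
\end{equation}
the infimum being attained by the spectral projection $\gamma=\1_{(-\infty,0)}(H)$ onto the negative subspace. (The $L^{5/2}$-hypothesis on $\varphi$ guarantees, via \eqref{lt2} itself being what we are proving or via a standard Sobolev/Birman--Schwinger estimate, that $H$ has only finitely much negative spectrum so the trace is finite; this is where I would invoke $\varphi_+\in L^{5/2}$.) Thus for \emph{any} admissible $\gamma$ we have $\tr(H\gamma)\geq -\tr(H)_-$, and in particular I may feed into this the optimizing $\gamma$ and estimate from below.

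The core step is then to bound $\tr(H\gamma)$ from below for an arbitrary admissible $\gamma$ using the primal inequality. One has
\begin{equation}
  \tr(H\gamma)=\tr\bigl(-\tfrac12\Delta\gamma\bigr)-\int_{\rz^3}\rd\gx\,\varphi(\gx)\rho_\gamma(\gx)
  \geq \frac35\gamma_\mathrm{LT}\int_{\rz^3}\rd\gx\,\rho_\gamma(\gx)^{5/3}-\int_{\rz^3}\rd\gx\,\varphi(\gx)\rho_\gamma(\gx),
\end{equation}
where the inequality is exactly \eqref{lt}. Now I would minimize the right-hand side pointwise over the value $\rho_\gamma(\gx)=:t\geq0$: for fixed $\gx$ the function $t\mapsto \tfrac35\gamma_\mathrm{LT}t^{5/3}-\varphi(\gx)t$ is minimized at $t_*=(\gamma_\mathrm{LT}^{-1}\varphi(\gx))^{3/2}$ with minimal value $-c\,\varphi(\gx)^{5/2}$ for an explicit constant $c=c(\gamma_\mathrm{LT})$. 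Integrating this pointwise bound yields $\tr(H\gamma)\geq -c\int_{\rz^3}\varphi^{5/2}$, and since this holds for every admissible $\gamma$ it holds for the negative spectral projection, giving $-\tr(H)_-\geq -c\int\varphi^{5/2}$, i.e.\ \eqref{lt2} with $L=c$.

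The main obstacle is the pointwise minimization step combined with measurability: the calculus optimum $t_*(\gx)$ depends on $\gx$ through $\varphi$, and to conclude I must check that the density $\rho_\gamma$ produced by the optimizing $\gamma$ is measurable and integrable enough for the pointwise estimate to integrate legitimately against $\varphi\in L^{5/2}$ (so that, by H\"older with exponents $5/2$ and $5/3$, the cross term $\int\varphi\rho_\gamma$ is finite). Equivalently, the delicate point is that the pointwise Young-type inequality $\tfrac35\gamma_\mathrm{LT}t^{5/3}-st\geq -c\,s^{5/2}$ is applied with $s=\varphi(\gx)$ and $t=\rho_\gamma(\gx)$ simultaneously for a.e.\ $\gx$; this is a clean application of Young's inequality, but one must ensure the primal inequality \eqref{lt} is even applicable, which requires $\Delta\gamma\in\gS^1$, so a density/approximation argument (cutting off $\varphi$ and $\gamma$ to make the optimizer regular, then passing to the limit) may be needed to secure the hypotheses of Theorem~\ref{satzlt} for the minimizer.
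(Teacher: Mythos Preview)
The paper does not actually prove this theorem: immediately after stating both forms \eqref{lt} and \eqref{lt2} it says ``We will not prove the Lieb--Thirring inequality and their extensions. We refer instead to Nam's \cite{Nam2020} review in these proceedings and make freely use of it.'' So there is no proof in the paper to compare against.

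That said, your duality argument is the standard and correct way to pass from the kinetic form \eqref{lt} to the sum-of-eigenvalues form \eqref{lt2}. Two small remarks. First, watch the sign convention: the paper explicitly takes $f_-$ to be \emph{negative} (so that $f=f_++f_-$), hence $\tr(H)_-=\sum_{\lambda_j<0}\lambda_j\leq0$ and the correct variational identity is $\tr(H)_-=\inf_{0\leq\gamma\leq1}\tr(H\gamma)$, without your extra minus sign; your final inequality then reads $\tr(H)_-\geq -c\int\varphi_+^{5/2}$ directly. Second, your worries about ``measurability'' of the pointwise minimization are overstated: the scalar Young inequality $\tfrac35\gamma_{\mathrm{LT}}t^{5/3}-st\geq -c\,s_+^{5/2}$ holds for all real $s$ and all $t\geq0$, so one simply applies it with $s=\varphi(\gx)$, $t=\rho_\gamma(\gx)$ and integrates. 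The only genuine technical point is the one you flag at the end, namely that the spectral projection $\gamma=\1_{(-\infty,0)}(H)$ need not a priori satisfy $\Delta\gamma\in\gS^1$; this is handled by the routine approximation $\varphi\to\varphi\wedge n$ (or by projecting onto $(-\infty,-\epsilon)$ and letting $\epsilon\downarrow0$), for which the negative spectrum is finite and the kinetic energy of the projection is manifestly finite.
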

(We indicate the positive part of a function or operator by an index
$+$, and by an index $-$ the negative part [which we pick negative,
i.e., $f=f_++f_-$].)

It is a longstanding conjecture, called the Lieb--Thirring conjecture,
that the optimal constant is given by semi-classical phase space
counting, e.g., $\gamma_\mathrm{LT}=\gtf$. (Note that the conjectured optimal
constant depends on the number of spin states $q$ per electron, namely
$\gamma= (6\pi^2/q)^\frac23/2$.) It is known, that $\gamma$ cannot be
larger than the classical one because of asymptotic results for the
sum of eigenvalues. Although over the years there have been
considerable efforts in proving the Lieb--Thirring conjecture, they
merely yielded improvements of the lower bound on $\gamma$ but not the
conjectured value. On the other hand there have been also considerable
efforts in disproving the conjecture. However, these were not
successful either.

We will not prove the Lieb--Thirring inequality and their
extensions. We refer instead to Nam's \cite{Nam2020} review in these
proceedings and make freely use of it. See also Benguria and Loewe
\cite{BenguriaLoewe2013}.

\begin{theorem}
  The Thomas--Fermi functional $\cE^\mathrm{TF}_Z$ is well defined on
  $\cI$ and bounded from below.
\end{theorem}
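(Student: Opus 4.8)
The functional $\cE_Z^{\mathrm{TF}}$ is a sum of three terms, two of which are immediately under control on $\cI$: the kinetic term $\tfrac35\gtf\int_{\rz^3}\rho^{5/3}$ is finite and nonnegative because $\rho\in L^{5/3}(\rz^3)$ by the definition of $\cI$, and the repulsion term $D[\rho]$ is finite and nonnegative by the defining condition $D[\rho]<\infty$. Hence both well-definedness and boundedness below reduce entirely to controlling the attractive term $-Z\int_{\rz^3}\rd\gx\,\rho(\gx)/|\gx|$ by the other two. This is the one genuine difficulty, caused by the Coulomb singularity at the origin together with the absence of any $L^1$ (mass) constraint in the definition of $\cI$: the long-range tail of $1/|\gx|$ cannot be absorbed by a bound of the form $R^{-1}\int\rho$.

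The plan is to split the nuclear potential by smearing the point charge. Let $\mu$ be the normalized uniform probability measure on the unit ball and $V_\mu(\gx):=\int_{\rz^3}|\gx-\gy|^{-1}\,\rd\mu(\gy)$ its potential. By Newton's theorem $V_\mu(\gx)=1/|\gx|$ for $|\gx|\geq1$ and $0\leq V_\mu(\gx)\leq1/|\gx|$ everywhere, so that $w:=|\cdot|^{-1}-V_\mu$ is nonnegative, supported in the unit ball, and dominated there by $|\gx|^{-1}$. I then decompose
\begin{equation*}
  Z\int_{\rz^3}\rd\gx\,\frac{\rho(\gx)}{|\gx|}
  = 2D(\rho,Z\mu)+Z\int_{|\gx|<1}\rd\gx\,\rho(\gx)w(\gx),
\end{equation*}
where $D(\cdot,\cdot)$ is the inner product \eqref{D} on $\cC$. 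The crucial feature is that the long-range tail has been replaced by the potential of the finite-energy charge $Z\mu$.

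Each piece is then handled by one elementary inequality. Since $\mu$ is a bounded, compactly supported density, $Z\mu\in\cC$ with $D[Z\mu]=Z^2D[\mu]<\infty$, so Cauchy--Schwarz in $\cC$ gives $2D(\rho,Z\mu)\leq 2Z\,D[\rho]^{1/2}D[\mu]^{1/2}$, finite because $D[\rho]<\infty$. For the short-range piece, $w\leq\mathbf{1}_{\{|\gx|<1\}}|\gx|^{-1}\in L^{5/2}(\rz^3)$, so Hölder's inequality with the dual pair $5/3,5/2$ yields $Z\int_{|\gx|<1}\rho\,w\leq ZC_0\,\|\rho\|_{5/3}$ with $C_0:=\|\mathbf{1}_{\{|\gx|<1\}}|\cdot|^{-1}\|_{5/2}<\infty$. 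Both bounds being finite, $\cE_Z^{\mathrm{TF}}(\rho)$ is a well-defined finite number for every $\rho\in\cI$. Boundedness below follows by absorbing each piece into the positive terms: Young's inequality gives $2D(\rho,Z\mu)\leq D[\rho]+Z^2D[\mu]$, whence $-2D(\rho,Z\mu)+D[\rho]\geq-Z^2D[\mu]$, while the remaining expression is bounded below by the elementary one-variable minimum $\min_{t\geq0}\bigl(\tfrac35\gtf\,t^{5/3}-ZC_0\,t\bigr)$, a finite negative constant depending only on $Z$ (and the fixed $\mu$). Collecting the two estimates yields $\cE_Z^{\mathrm{TF}}(\rho)\geq-C(Z)$ uniformly on $\cI$.

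The main obstacle is the observation of the first paragraph: without a mass bound the long-range attraction must be estimated through $D[\rho]<\infty$ rather than through $\int\rho$, and a naive split $\mathbf{1}_{B_1}/|\gx|+\mathbf{1}_{B_1^c}/|\gx|$ fails because the exterior tail has infinite $D$-self-energy and so does not lie in $\cC$. The Newton-smearing is precisely what repairs this, replacing the tail by $V_\mu$ with $Z\mu\in\cC$; once the nuclear potential is decomposed into a part in $\cC$ and a compactly supported part in $L^{5/2}$, the remaining estimates are routine.
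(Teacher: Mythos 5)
Your proof is correct and follows essentially the same route as the paper: you split the Coulomb potential into the potential of a smeared unit charge (handled by the Schwarz inequality in the Coulomb scalar product) plus a compactly supported singular remainder in $L^{5/2}$ (handled by H\"older against $\|\rho\|_{5/3}$), then absorb both into the positive kinetic and repulsion terms. The only differences are cosmetic: the paper smears the charge over a sphere of radius $R$ and optimizes $R$ to obtain the explicit bound $-\const Z^{7/3}$, whereas you smear over the unit ball and settle for a qualitative constant $-C(Z)$.
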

\begin{proof}
  Obviously the kinetic energy and the electron-electron energy are
  well defined and finite for every $\rho\in\cI$. To show that this
  holds also for the nuclear potential we decompose the Coulomb
  potential $1/|\gx|$: We write $\sigma:= \delta(R-|\gx|)/(4\pi R^2)$
  for a unit charge smeared out homogeneously on a sphere of radius $R$
  centered at the origin. Obviously $\sigma\in \cC$, since
  \begin{equation}
    D[\sigma]=\frac12\int_0^\infty\rd r \,
    \frac{4\pi r^2}{4\pi R^2}\int_0^\infty\rd s\,
    \frac{4\pi s^2}{4\pi R^2}
    \frac{\delta(R-r)\delta(R-s)}{\max\{r,s\}}=\frac{1}{2R}.
  \end{equation}
  We set
\begin{align}
    \label{vl}
    V_l&:\sigma*|\cdot|^{-1}=
    \begin{cases}
      1/|\gx| &\text{for } |\gx|\geq R,\\
      1/R& \text{for }|\gx|<R,
    \end{cases}\nonumber\\
     V_k&:=\frac{ 1}{|\cdot|}-V_l
\end{align}
yielding a decomposition of the Coulomb potential into its long range
regular part and its short range singular part.  Thus, using
H\"older's inequality on the short-range part and the Schwarz
inequality on the long-range part we get
  \begin{align}
    \int_{\rz^3}\rd\gx \,\frac{\rho(\gx)}{|\gx|}
    &\leq \int_{|\gx|<R}\rd\gx\,
    \frac{\rho(\gx)}{|\gx|} + 2D(\rho,\sigma)\nonumber \\
   & \leq {\left(\int_{|\gx|<R}\frac{\rd\gx}{|\gx|^{\frac52}}\right)}^{\frac25}
    \|\rho\|_{\frac53} +2D(\rho,\sigma)\nonumber\\
    &\leq
    (8\pi)^{\frac25}R^{\frac15}\|\rho\|_{\frac53} + 2\sqrt{D[\sigma]D[\rho]}\nonumber\\
    &\leq (8\pi)^{\frac25}R^{\frac15}\|\rho\|_{\frac53} + 2\sqrt{D[\rho]/(2R)}.
  \end{align}
  Thus, also the nuclear-electron energy is finite. Using this bound
  in the functional and minimizing in $\|\rho\|_\frac53$ and $D[\rho]$
  gives
\begin{align}
  \label{untereSchranke}
  \cE^\mathrm{TF}_Z(\rho)
  &\geq \frac35\gtf\|\rho\|_{\frac53}^{\frac53}
  -(8\pi)^{\frac25}R^{\frac15}Z\|\rho\|_{\frac53} + D[\rho]-
    Z\sqrt{2D[\rho]/R}\nonumber\\
  &\geq -\frac{32\sqrt2}{15\pi}Z^{\frac52}R^{\frac12}-
  \frac{Z^2}{2R}= -\frac{2^{\frac83}3^{\frac13}}{(5\pi)^{\frac23}}Z^{\frac73}
  \approx -1.46 Z^{\frac73}
\end{align}
after picking $R=(15\pi)^{\frac23}/(2^{\frac{11}{3}}Z^{\frac13})$.
\end{proof}
\begin{lemma}
  \label{tfstrengkonvex}
  The Thomas--Fermi functional and its restrictions to $\cI_N$ and
  $\cI_{\partial N}$ are strictly convex, i.e., the sets $\cI$,
  $\cI_N$, and $\cI_{\partial N}$ are convex sets and for every $t\in(0,1)$
  and $\rho,\tau\in\cI$ we have
  \begin{equation}
  t\cE^\mathrm{TF}_Z(\rho)+(1-t)\cE^\mathrm{TF}_Z(\tau)
  \geq \cE^\mathrm{TF}_Z\bigl(t\rho+(1-t)\tau\bigr)
  \end{equation}
  with equality if and only if $\rho=\tau$.
\end{lemma}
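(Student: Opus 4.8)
The plan is to split $\cE_Z^\mathrm{TF}$ into its three constituent terms and treat each separately, exploiting that the sum of one strictly convex and several merely convex functionals is again strictly convex. First I would record that the three sets are convex: $\cI$ is carved out of the vector space $L^{5/3}(\rz^3)$ by the constraints $\rho\geq0$ (preserved under convex combinations) and $D[\rho]<\infty$ (preserved because $\cC$ is a normed space, so $D[t\rho+(1-t)\tau]<\infty$ whenever $D[\rho],D[\tau]<\infty$); the additional constraints $\int_{\rz^3}\rho\leq N$ and $\int_{\rz^3}\rho=N$ defining $\cI_N$ and $\cI_{\partial N}$ are a convex inequality and an affine equality, so these subsets are convex as well. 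Consequently it suffices to prove strict convexity on $\cI$, since the restriction of a strictly convex functional to a convex subset is again strictly convex.

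For the interaction term I would use that $D(\cdot,\cdot)$ is, by Onsager's inequality, a genuine (positive definite) inner product on $\cC$, so the associated quadratic form satisfies the exact identity
\[
tD[\rho]+(1-t)D[\tau]-D\bigl[t\rho+(1-t)\tau\bigr]=t(1-t)\,D[\rho-\tau]\geq0,
\]
with equality precisely when $\rho=\tau$; this already shows $D$ is (even strictly) convex. The nuclear term $-Z\int_{\rz^3}\rd\gx\,\rho(\gx)/|\gx|$ is linear in $\rho$, hence affine, and contributes nothing to the convexity defect. The decisive term is the kinetic one: since $s\mapsto s^{5/3}$ has second derivative $\tfrac{10}{9}s^{-1/3}>0$ on $(0,\infty)$, it is strictly convex on $[0,\infty)$, so pointwise
\[
t\rho(\gx)^{5/3}+(1-t)\tau(\gx)^{5/3}\geq\bigl(t\rho(\gx)+(1-t)\tau(\gx)\bigr)^{5/3}
\]
with equality at a given $\gx$ if and only if $\rho(\gx)=\tau(\gx)$.

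Adding the three contributions gives the claimed inequality. For the equality case I would argue that the total convexity defect is a sum of nonnegative terms, so it vanishes only if each does; in particular the kinetic defect $\frac35\gtf\int_{\rz^3}\rd\gx\,[\,t\rho^{5/3}+(1-t)\tau^{5/3}-(t\rho+(1-t)\tau)^{5/3}\,]$ must be zero. Since its integrand is nonnegative and, by the pointwise strict convexity above, strictly positive wherever $\rho(\gx)\neq\tau(\gx)$, this forces $\rho=\tau$ almost everywhere, as desired. There is no serious obstacle here; the only point requiring a little care is the transfer of pointwise strict convexity of $s^{5/3}$ to strict convexity of the integral, which is exactly the observation that an almost-everywhere nonnegative integrand with strictly positive values on a set of positive measure has strictly positive integral. (One could alternatively read off $\rho=\tau$ from vanishing of the interaction defect $t(1-t)D[\rho-\tau]$ via Onsager's inequality, giving a second, independent route to the equality statement.)
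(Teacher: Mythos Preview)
Your argument is correct and follows the same overall term-by-term decomposition as the paper. The one noteworthy difference is in the treatment of the electron--electron term: the paper bounds $D[t\rho+(1-t)\tau]$ via the Schwarz inequality $D(\rho,\tau)\leq\sqrt{D[\rho]D[\tau]}$ followed by convexity of the square, whereas you use the exact inner-product identity $tD[\rho]+(1-t)D[\tau]-D[t\rho+(1-t)\tau]=t(1-t)D[\rho-\tau]$. Your route is arguably cleaner, since it gives the convexity defect explicitly and, via Onsager, yields strict convexity of $D$ directly (so either the kinetic or the interaction term alone already forces $\rho=\tau$ in the equality case, as you note). The paper's route only shows (non-strict) convexity of $D$ and relies entirely on the kinetic term for strictness, which it states without elaboration; your more careful treatment of the equality case for the kinetic integral fills that in.
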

\begin{proof}
  The convexity of the above three sets is obvious and so is the
  strict convexity of the kinetic energy. The nuclear potential is
  linear. It remains the electron-electron interaction:
  \begin{align}
    D[t\rho+(1-t)\tau]
    &= t^2D[\rho]+(1-t)^2D[\tau] +2t(1-t)D(\rho,\tau)\nonumber\\
    &\leq t^2D[\rho]+(1-t)^2D[\tau] +2t(1-t)\sqrt{D[\rho]D[\tau]}\nonumber\\
    &= \Bigl(t\sqrt{D[\rho]}+(1-t)\sqrt{D[\tau]}\Bigr)^2 \nonumber\\
    &\leq  tD[\rho]+(1-t)D[\tau].
  \end{align}
where the first inequality is true because of the Schwarz inequality
and the last because of the convexity of the square.  
\end{proof}
We define
\begin{equation}
  \label{E}
  E^\mathrm{TF}(Z,N):= \inf\bigl\{\cE_Z^\mathrm{TF}(\cI_N)\bigr\},\
  E^\mathrm{TF}(Z):=\inf\bigl\{\cE_Z^\mathrm{TF}(\cI)\bigr\}.
\end{equation}
\begin{lemma}
  For fixed $Z\in\rz_+$ the function
  $E^\mathrm{TF}(Z,\cdot):\rz^+\to\rz$ is monotone decreasing and
  convex.
\end{lemma}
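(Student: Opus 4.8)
The plan is to read off both properties directly from the structure already in place: $E^\mathrm{TF}(Z,N)$ is the infimum of the convex functional $\cE_Z^\mathrm{TF}$ (Lemma~\ref{tfstrengkonvex}) over the convex feasibility set $\cI_N$, whose only $N$-dependence sits in the single \emph{linear inequality} constraint $\int_{\rz^3}\rho\leq N$. Before anything else I would check that the value is genuinely real, as the stated codomain $\rz$ requires: the boundedness-below theorem above provides a finite lower bound (of order $Z^{7/3}$) that holds for every $\rho\in\cI$ independently of its mass, so the infimum is never $-\infty$; and picking any nonnegative $\rho\in L^{5/3}(\rz^3)$ of sufficiently small mass with $D[\rho]<\infty$ shows $\cI_N\neq\emptyset$, so the infimum is also $<\infty$.

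Monotonicity I would dispose of first, since it is immediate. For $0<N\leq N'$ the constraint $\int_{\rz^3}\rho\leq N$ is stronger than $\int_{\rz^3}\rho\leq N'$, hence $\cI_N\subseteq\cI_{N'}$; taking the infimum of one and the same functional over a larger set can only decrease it, so $E^\mathrm{TF}(Z,N')\leq E^\mathrm{TF}(Z,N)$.

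For convexity I would run the standard approximate-minimizer argument. Fix $N_0,N_1\in\rz_+$ and $t\in(0,1)$, and set $N_t:=tN_0+(1-t)N_1$. Given $\varepsilon>0$, choose $\rho_0\in\cI_{N_0}$ and $\rho_1\in\cI_{N_1}$ with $\cE_Z^\mathrm{TF}(\rho_j)\leq E^\mathrm{TF}(Z,N_j)+\varepsilon$ for $j=0,1$. The convex combination $\rho_t:=t\rho_0+(1-t)\rho_1$ lies in $\cI$ by the convexity of $\cI$ (Lemma~\ref{tfstrengkonvex}) and satisfies $\int_{\rz^3}\rho_t=t\int_{\rz^3}\rho_0+(1-t)\int_{\rz^3}\rho_1\leq tN_0+(1-t)N_1=N_t$, so $\rho_t\in\cI_{N_t}$ is feasible. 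Convexity of the functional then yields
\begin{align}
  E^\mathrm{TF}(Z,N_t)
  &\leq \cE_Z^\mathrm{TF}(\rho_t)
  \leq t\,\cE_Z^\mathrm{TF}(\rho_0)+(1-t)\,\cE_Z^\mathrm{TF}(\rho_1)\nonumber\\
  &\leq t\,E^\mathrm{TF}(Z,N_0)+(1-t)\,E^\mathrm{TF}(Z,N_1)+\varepsilon,
\end{align}
and letting $\varepsilon\downarrow0$ gives the convexity inequality.

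I expect no serious obstacle: the whole argument is soft and rests entirely on the convexity of $\cE_Z^\mathrm{TF}$ together with the linearity of the mass functional $\rho\mapsto\int_{\rz^3}\rho$. The one point deserving care — and essentially the only thing that could go wrong — is that I deliberately argue with $\varepsilon$-near-minimizers rather than with genuine minimizers, since at this stage the existence of a minimizer of $\cE_Z^\mathrm{TF}$ on $\cI_N$ has not yet been established; working with approximate minimizers keeps the proof self-contained and independent of any later existence result.
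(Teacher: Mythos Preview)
Your proof is correct and follows essentially the same approach as the paper: monotonicity from the nesting $\cI_N\subseteq\cI_{N'}$, and convexity from combining the convexity of $\cE_Z^\mathrm{TF}$ with the linearity of the mass constraint. The paper phrases the convexity step slightly differently --- working directly with infima over a product feasibility set rather than fixing $\varepsilon$-near-minimizers --- but the two arguments are equivalent, and your version is arguably the more transparent one.
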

\begin{proof}
  The monotony is obvious, since $N\leq N'$ implies $\cI_N\subset\cI_{N'}$.

  To prove the convexity, we pick $N_1,N_2,\alpha_1,\alpha_2\in\rz_+$
  so that $\alpha_1+\alpha_2=1$. Then
  \begin{align}
    &\mathrel{\phantom{=}}E^{\mathrm{TF}}(Z,\alpha_1N_1+\alpha_2N_2)\nonumber\\
    &=\inf\biggl\{\cE^\mathrm{TF}_Z(\rho)\biggm|
      \rho\in \cI,\ \int_{\rz^3}\rho\leq \alpha_1N_1+\alpha_2N_2\biggr\}\nonumber\\
    &\leq \inf\biggl\{\cE^\mathrm{TF}_Z(\rho+\sigma)\biggm|
      \rho,\sigma\in \cI,\ \int_{\rz^3}\rho\leq \alpha_1N_1,\
      \int_{\rz^3}\sigma\leq\alpha_2N_2\biggr\}\nonumber\\
    &= \inf\biggl\{\cE^\mathrm{TF}_Z(\alpha_1\rho+\alpha_2\sigma)\biggm|
      \rho,\sigma\in \cI,\ \int_{\rz^3}\rho\leq N_1,\
      \int_{\rz^3}\sigma\leq N_2\biggr\}\nonumber\\
    &\leq 
      \inf\biggl\{\alpha_1\cE^\mathrm{TF}_Z(\rho)+\alpha_2\cE_Z^\mathrm{TF}(\sigma)
      \biggm|\rho,\sigma\in \cI,\ \int_{\rz^3}\rho\leq N_1,\
      \int_{\rz^3}\sigma\leq N_2\biggr\}\nonumber
      \displaybreak[0]
      \\
    &\leq \inf\biggl\{\alpha_1\cE^\mathrm{TF}_Z(\rho)\biggm|\rho\in \cI,\
      \int_{\rz^3}\rho\leq N_1\biggr\}\nonumber\\
    &\mathrel{\phantom{=}}\mbox{}
      + \inf\biggl\{\alpha_2\cE_Z^\mathrm{TF}(\sigma)\biggm|
      \sigma\in \cI,\ \int_{\rz^3}\sigma\leq N_2\biggr\}\nonumber\\
    &= \alpha_1 E^{\mathrm{TF}}(Z,N_1)+\alpha_2E^{\mathrm{TF}}(Z,N_2)
  \end{align}
  which is the desired inequality proving convexity.
\end{proof}

\begin{theorem}
  The Thomas--Fermi functional and its restriction to $\cI_N$ have a
  unique minimizer.
\end{theorem}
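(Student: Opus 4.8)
The plan is to read off uniqueness from the strict convexity already recorded in Lemma~\ref{tfstrengkonvex}, and to establish existence by the direct method of the calculus of variations. For uniqueness, suppose $\rho$ and $\tau$ were two distinct minimizers of $\cE^\mathrm{TF}_Z$ on $\cI$ (respectively on $\cI_N$). Since these sets are convex, $\tfrac12(\rho+\tau)$ is admissible, and strict convexity would force $\cE^\mathrm{TF}_Z(\tfrac12(\rho+\tau))<\tfrac12\cE^\mathrm{TF}_Z(\rho)+\tfrac12\cE^\mathrm{TF}_Z(\tau)$, contradicting minimality. Hence there is at most one minimizer in each case.

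For existence I would start from a minimizing sequence $(\rho_n)\subset\cI$ (resp.\ $\cI_N$), so that $\cE^\mathrm{TF}_Z(\rho_n)\to E^\mathrm{TF}(Z)$ (resp.\ $E^\mathrm{TF}(Z,N)$). The coercivity estimate behind the lower bound shows that boundedness of $\cE^\mathrm{TF}_Z(\rho_n)$ forces $\|\rho_n\|_{5/3}$ and $D[\rho_n]$ to stay bounded, since $\tfrac35\gtf\|\rho_n\|_{5/3}^{5/3}$ dominates the linear nuclear contribution and $D[\rho_n]$ dominates the $\sqrt{D[\rho_n]}$ term. Thus $(\rho_n)$ is bounded both in the reflexive space $L^{5/3}(\rz^3)$ and in the Hilbert space $\cC$, and after passing to a subsequence I may assume $\rho_n\rightharpoonup\rho$ weakly in $L^{5/3}(\rz^3)$ and weakly in $\cC$ (the two limits agreeing as tempered distributions).

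It remains to show weak lower semicontinuity of the functional along this sequence and that $\rho$ is admissible. The kinetic term $\rho\mapsto\tfrac35\gtf\int\rho^{5/3}$ is convex and strongly continuous, hence weakly lower semicontinuous on $L^{5/3}$, and $D[\cdot]$, being the square of the $\cC$-norm, is weakly lower semicontinuous there. The delicate term is the nuclear attraction, which I would control using the splitting $1/|\gx|=V_l+V_k$ from \eqref{vl}: the long-range piece gives $\int\rho\,V_l=2D(\rho,\sigma)$, a continuous linear functional on $\cC$ and therefore weakly continuous, while the short-range piece $V_k$ is supported in $\{|\gx|<R\}$ and lies in $L^{5/2}(\rz^3)=(L^{5/3})'$, so $\int\rho\,V_k$ is weakly continuous on $L^{5/3}$. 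Consequently $-Z\int\rho/|\cdot|$ passes to the limit, and $\cE^\mathrm{TF}_Z(\rho)\leq\liminf_n\cE^\mathrm{TF}_Z(\rho_n)$. Nonnegativity survives the weak limit and $D[\rho]<\infty$, so $\rho\in\cI$; for the constrained problem, testing the weak convergence against $\mathbbm{1}_{B_R}\in L^{5/2}$ gives $\int_{B_R}\rho=\lim_n\int_{B_R}\rho_n\leq N$, and letting $R\to\infty$ by monotone convergence yields $\int_{\rz^3}\rho\leq N$, so $\rho\in\cI_N$. Hence $\rho$ attains the infimum.

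I expect the main obstacle to be the nuclear attraction term: because of its Coulomb singularity and its negative sign it is a priori neither continuous nor of definite sign under weak convergence, and what is needed is genuine weak \emph{continuity}, not merely lower semicontinuity, so that it contributes an equality in the limit. The long/short-range splitting already used to bound the functional from below is precisely what delivers this. A minor secondary point is that mass may escape to infinity under weak convergence; this is harmless for the present statement because the constraint defining $\cI_N$ is the inequality $\int_{\rz^3}\rho\leq N$, which is stable under the ball-exhaustion argument above.
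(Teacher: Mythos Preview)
Your proposal is correct and follows essentially the same route as the paper: uniqueness from strict convexity via the midpoint, existence by extracting a joint weak limit in $L^{5/3}$ and in $\cC$ from a minimizing sequence, and the same $V_l+V_k$ splitting of the Coulomb potential to obtain genuine weak continuity of the nuclear term. The only cosmetic differences are that the paper verifies weak lower semicontinuity of the kinetic and Coulomb terms by explicit pairing computations (testing against $\rho^{2/3}\in L^{5/2}$ and against $\rho\in\cC$, respectively) rather than by invoking the abstract ``convex plus continuous implies weakly lower semicontinuous'' principle, and it spells out the nonnegativity-of-the-limit argument via characteristic functions; your treatment of the mass constraint for $\cI_N$ is likewise the intended one.
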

\begin{proof}
  The uniqueness follows from the strict convexity of
  $\cE_Z^\mathrm{TF}$. Suppose that $\rho$ and $\tau$ are two
  minimizers. Then, we are led to a contradiction
  \begin{equation}
  \cE^\mathrm{TF}_Z\bigl(\tfrac12(\rho+\tau)\bigr)
  < \frac12\bigl(\cE^\mathrm{TF}_Z(\rho)+\cE^\mathrm{TF}_Z(\tau)\bigr)
  =\inf\bigl\{ \cE^\mathrm{TF}_Z(\cI)\bigr\}    
\end{equation}
  unless $\rho=\tau$ showing there is at most one minimizer in $\cI$.
  
  We now turn to the existence of a minimizer. We begin by noting that
  because of \eqref{untereSchranke} the functional $\cE^\mathrm{TF}_Z$
  is bounded from below on $\cI$, i.e.,
  $\inf\cE_Z^\mathrm{TF}(\cI)>-\infty$. Assume that $\rho_n\in\cI$ is
  a minimizing sequence, i.e.,
  \begin{equation}
    \lim_{n\to\infty}\cE_Z^\mathrm{TF}(\rho_n)
    =\inf\bigl\{\cE_Z^\mathrm{TF}(\cI)\bigr\}
  \end{equation}
  which, of course, is also true for any subsequence of
  $\rho_n$. Again by \eqref{untereSchranke} we see that both the
  $L^\frac53$-norm and the Coulomb-norm of $\rho_n$ are bounded, since
  otherwise there would be a subsequence of $\rho_n$ which is
  minimizing and also drive the Thomas--Fermi functional to $+\infty$
  which is certainly not true, since $0\in\cI$ and
  $\cE^\mathrm{TF}_Z(0)=0<\infty$.

  Now, since $\rho_n$ is bounded in $L^\frac53$-norm and $L^\frac53$
  is reflexive, the Banach--Alaoglu theorem gives a minimizing
  subsequence --- which we call in abuse of notation again $\rho_n$ ---
  which converges weakly in $L^\frac53$ to some $\rho\in
  L^\frac53(\rz^3)$.

  Now, this sequence is also bounded in the Coulomb norm
  \begin{equation}
    \label{Coulombnorm}
    \|\rho\|_\cC:= \sqrt{D[\rho]}
  \end{equation}
  associated
  with the scalar product $D$ in $\cC$. Since $\cC$ is a Hilbert space
  and therefore reflexive, we can pick again an appropriate
  subsequence of this subsequence, again denoted by $\rho_n$, which
  also converges weakly in the Coulomb scalar product to some
  $\tilde\rho\in \cC.$ But actually, those two limits are equal: Both
  convergences imply convergence as a tempered
  distribution. Therefore, we have for any $f\in\cS(\rz^3)$
  \begin{align}
    \int_{\rz^3}\rho f
    &=\lim_{n\to\infty}\int_{\rz^3}\rho_n f
    = \lim_{n\to\infty} 2D\bigl(\rho_n,-\Delta f/(4\pi)\bigr)\nonumber\\
    &= 2D\bigl(\tilde\rho,-\Delta f/(4\pi)\bigr)=\int_{\rz^3}\tilde\rho f. 
\end{align}
Since this holds for all $f\in\cS(\rz^3)$, we have
$\rho=\tilde\rho\in L^\frac53(\rz^3)\cap \cC$. In other words, the
subsequence $\rho_n$ is minimizing and converges weakly in
$L^\frac53(\rz^3)$ and $\cC$ to $\rho$.

Now we note that $\rho\geq 0$ almost everywhere. Suppose that there is
a set $\cN\subset\rz^3$ with $\int_\cN>0$ and $\rho(\cN) <0$. Then there
exists a radius $R\in\rz_+$ such that $\int\chi_R\rho<0$ where $\chi_R$
denotes the characteristic function of the ball of radius $R$ centered
at the origin intersected with $\cN$. This function is, of course, in
any $L^p$ space, in particular in $L^\frac52$, the dual space of
$L^\frac53$. Then we are led to the following contradiction
  \begin{equation}
    0>\int_{\rz^3}\chi_R\rho=\lim_{n\to\infty}\int_{\rz^3}\chi_R\rho_n\geq0.
  \end{equation}
  Thus, $\rho$ is nonnegative and with the above result that
  $\rho\in L^\frac53(\rz^3)\cap\cC$ we conclude that $\rho\in\cI$.

  We claim that the function $\rho$, which we just constructed, is a
  minimizer. We prove this term by term:\newline\noindent%
  \textit{The kinetic energy:} Since $\rho\in L^\frac53$, we have $\rho^\frac23\in L^\frac52$. Thus
  \begin{equation}
    \int\rho^\frac53 = \int\rho^\frac23 \rho
    =\lim_{n\to\infty}\int\rho^\frac23\rho_n
    \leq \liminf_{n\to\infty}\left(\int\rho^\frac53\right)^\frac35
    \left(\int\rho_n^\frac53\right)^\frac25
  \end{equation}
  by weak convergence in $L^\frac53$ and H\"older's inequality. Therefore
  \begin{equation}
  \int \rho^\frac53\leq \liminf_{n\to\infty}\int\rho_n^\frac53.  
  \end{equation}  
 \textit{The nuclear potential}: Since $V_k\in L^\frac52$ we have
 because of weak convergence in $L^\frac53$
 \begin{equation}
  \int V_k\rho=\lim_{n\to\infty}\int V_k\rho_n 
 \end{equation}  
  and since $\sigma\in \cC$ we have because of weak convergence in $\cC$
  \begin{equation}
    \label{zerlegung}
    \int V_l\rho = 2D(\sigma,\rho)=2\lim_{n\to\infty}D(\sigma,\rho_n)
    = \lim_{n\to\infty}V_l\rho_n.
  \end{equation}
\textit{The electron-electron-potential:} Since $\rho\in\cC$ we have
  \begin{equation}
    D[\rho]=D(\rho,\rho)=\lim_{n\to\infty}D(\rho,\rho_n)
    \leq \liminf_{n\to\infty}\sqrt{D[\rho]D[\rho_n]}
  \end{equation}
  by Schwarz's inequality and thus
  \begin{equation}
    D[\rho]\leq\liminf_{n\to\infty}D[\rho_n].
  \end{equation}
Putting all terms together yields
  \begin{equation}
    \cE^\mathrm{TF}_Z(\rho)\leq \liminf_{n\to\infty}\cE_Z^\mathrm{TF}(\rho_n).
  \end{equation}
  In other words, $\rho$ is the unique minimizer on $\cI$.
  
  The argument for $\cI_N$ is almost identical and therefore skipped
  here.
\end{proof}

We are now interested in some properties of the minimizer and the
minimal energy.
\begin{theorem}
  \label{minimierereigenschaften}
  The minimizer $\rho_Z$ of $\cE_Z^\mathrm{TF}$ on $\cI$ is
  spherically symmetric, decreasing and convex in the radial variable
  $|\gx|$, $\int_{\rz^3}\rho_Z=Z$, and it fulfills the Thomas--Fermi
  equation
  \begin{equation}
    \label{tf}
    \gtf\rho_Z^\frac23=\fitf:=Z|\cdot|^{-1}- \rho_Z*|\cdot|^{-1}
  \end{equation}
  almost everywhere in $\rz^3$ and the scaling relations
  \begin{equation}
    \rho_Z(\gx)=Z^2\rho_1(Z^\frac13\gx),\ E^\mathrm{TF}(Z)
    = E^\mathrm{TF}(1)Z^{7/3}.
  \end{equation}
\end{theorem}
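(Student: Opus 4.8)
The plan is to prove the assertions in the order spherical symmetry, Thomas--Fermi equation, nonnegativity of $\fitf$ and neutrality, monotonicity and convexity, and scaling, taking existence and uniqueness of the minimizer $\rho_Z$ on $\cI$ as already established. Spherical symmetry is immediate from uniqueness: $\cE_Z^\mathrm{TF}$ is invariant under $\rho\mapsto\rho\circ R$ for every $R\in\mathrm{SO}(3)$ (its only $\gx$-dependent term $Z/|\gx|$ is radial), so $\rho_Z\circ R$ is again a minimizer and hence equals $\rho_Z$; thus $\rho_Z$ and $\fitf$ are radial. For the equation I would use convexity of $\cI$: for $\sigma\in\cI$ the segment $(1-t)\rho_Z+t\sigma$ lies in $\cI$, and differentiating $t\mapsto\cE_Z^\mathrm{TF}\bigl((1-t)\rho_Z+t\sigma\bigr)$ at $t=0^+$ --- justified by H\"older, since $\rho_Z^{2/3}\in L^{5/2}$ and $\sigma-\rho_Z\in L^{5/3}$, the exponent $5/3>1$ suppressing any boundary term where $\rho_Z=0$ --- gives the variational inequality $\int_{\rz^3}\bigl(\gtf\rho_Z^{2/3}-\fitf\bigr)(\sigma-\rho_Z)\ge0$ for all $\sigma\in\cI$. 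Localising $\sigma$ then forces $\gtf\rho_Z^{2/3}=[\fitf]_+$: equality where $\rho_Z>0$, and $\fitf\le0$ where $\rho_Z=0$.

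To reach the stated equation I must upgrade this to $\fitf\ge0$ everywhere, which is the potential-theoretic heart of the proof. On the open set $\{\fitf<0\}$ the equation gives $\rho_Z=0$, so $\Delta\fitf=4\pi\rho_Z=0$ and $\fitf$ is harmonic there; since $\fitf$ vanishes on the boundary of this set and at infinity (the Newtonian potential $\rho_Z*|\cdot|^{-1}$ tending to $0$ by the decay of $\rho_Z$ forced by the equation), the minimum principle yields $\fitf\ge0$, contradicting $\fitf<0$; hence $\{\fitf<0\}=\emptyset$ and $\gtf\rho_Z^{2/3}=\fitf$ a.e. Neutrality then follows from the far field: the multipole expansion gives $\fitf(\gx)=\bigl(Z-\int_{\rz^3}\rho_Z\bigr)|\gx|^{-1}+o(|\gx|^{-1})$, so $\int\rho_Z>Z$ would make $\fitf<0$ for large $|\gx|$ (now excluded), while $\int\rho_Z<Z$ would give $\fitf\sim c|\gx|^{-1}$ with $c>0$ and hence $\rho_Z=(\fitf/\gtf)^{3/2}\sim c'|\gx|^{-3/2}$, a non-integrable tail; therefore $\int_{\rz^3}\rho_Z=Z$.

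Monotonicity and convexity I read off the radial equation. With $u(r):=r\fitf(r)$ one has $\Delta\fitf=u''/r=4\pi\rho_Z\ge0$ away from the origin, so $u$ is convex; since $u(0^+)=Z$ and, by neutrality, $u(\infty)=0$, a nonnegative convex function with these endpoints is nonincreasing, whence $\fitf'=(u'r-u)/r^2\le0$ and $\rho_Z=(\fitf/\gtf)^{3/2}$ is decreasing. For convexity of $\rho_Z$, the relation $\fitf''=4\pi\rho_Z-2\fitf'/r\ge0$ together with two differentiations of $\rho_Z=\gtf^{-3/2}\fitf^{3/2}$ gives $\rho_Z''=\tfrac32\gtf^{-3/2}\fitf^{-1/2}\bigl(\tfrac12(\fitf')^2+\fitf\,\fitf''\bigr)\ge0$ on $\{\fitf>0\}$, which neutrality makes all of $(0,\infty)$. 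The pointwise calculus is legitimate because bootstrapping $\rho_Z=(\fitf/\gtf)^{3/2}$ with $\fitf=Z|\cdot|^{-1}-\rho_Z*|\cdot|^{-1}$ through elliptic regularity makes $\rho_Z$ smooth on $\{\rho_Z>0\}$.

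The scaling relations are pure homogeneity: under $T:\rho\mapsto\rho_\lambda:=\lambda^2\rho(\lambda^{1/3}\,\cdot\,)$ with $\lambda=Z$ the kinetic term (through $\int\rho^{5/3}$), the nuclear term (carrying its extra factor $Z$), and $D$ each scale by $Z^{7/3}$, so $T$ bijects $\cI$ and satisfies $\cE_Z^\mathrm{TF}(\rho_\lambda)=Z^{7/3}\cE_1^\mathrm{TF}(\rho)$; hence $T$ sends the minimizer $\rho_1$ to the minimizer $\rho_Z$, and reading off the two identities yields $\rho_Z(\gx)=Z^2\rho_1(Z^{1/3}\gx)$ and $E^\mathrm{TF}(Z)=Z^{7/3}E^\mathrm{TF}(1)$. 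I expect the main obstacle to be the nonnegativity of $\fitf$ together with neutrality: this is the single place where the variational inequality must be coupled to a maximum-principle argument, and it rests on the continuity and the decay at infinity of the Newtonian potential --- properties that follow from the equation only through an interlocking bootstrap, and on which all the subsequent, more computational, steps depend.
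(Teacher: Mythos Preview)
Your proposal is correct, and on spherical symmetry, the derivation of the variational (in)equality, the lower bound $\int\rho_Z\ge Z$, monotonicity/convexity, and scaling it runs parallel to the paper. The genuine divergence is in how you obtain $\fitf\ge0$ and the upper bound $\int\rho_Z\le Z$. You couple the Euler inequality to a maximum-principle argument: on $\{\fitf<0\}$ the equation forces $\rho_Z=0$, hence $\fitf$ is harmonic there, and with $\fitf\to0$ at infinity you reach a contradiction; only then do you read off $\int\rho_Z\le Z$ from the far field. The paper reverses the logic and avoids the maximum principle entirely: it first proves $\int\rho_Z\le Z$ by Benguria's purely variational comparison (truncate $\rho_Z$ at the radius where its mass equals $Z$; Newton's theorem makes the discarded shell see a net potential of zero, so the truncation strictly lowers the energy unless the shell is empty). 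With radial symmetry and $\int\rho_Z\le Z$ in hand, Newton's theorem gives $\fitf\ge0$ for free, and the Euler inequality then yields the TF equation directly. What each route buys: the paper's order sidesteps exactly the ``interlocking bootstrap'' you flag --- no need to justify decay of $\rho_Z*|\cdot|^{-1}$ at infinity before knowing $\rho_Z\in L^1$ (you can close this via the radial Strauss-type bound coming from $D[\rho_Z]<\infty$, but the paper never has to) --- and Benguria's argument is reusable in other models. Your PDE route is the more standard textbook approach and makes the connection to potential theory explicit.
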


Before turning to the proof, we would like to comment on this result:
\begin{itemize}
\item Since there is a minimizer of the Thomas--Fermi functional in
  $\cI$ this ensures the existence of a solution of the Thomas--Fermi
  equation in $\cI$.
  \item Numerically $E_\mathrm{TF}(1)= -0.7687\ [\mathrm{Ha}]$.

    On the one hand this compares with $-0.5\ [\mathrm{Ha}]$ for the ground
    state energy of the Schr\"odinger Hamiltonian of hydrogen. That it
    reproduces the order of magnitude of a one-electron system --
    despite counting also the self-energy -- is astonishing, since, as
    we will see, Thomas--Fermi theory, traditionally called a
    statistical theory of atoms, becomes correct for large electron
    numbers and is not meant for small electron numbers. However, the
    fact that it is lower than the quantum energy and therefore lower
    when dropping the electron-electron interaction, is a particular
    case of a long standing conjecture of Lieb and Thirring (see,
    e.g., Nam \cite{Nam2020} in these proceedings).

    On the other hand the value $-0.7687 Z^\frac73 [\mathrm{Ha}]$
    compares with $-1.46 Z^\frac73\ [\mathrm{Ha}]$ in
    \eqref{untereSchranke} showing that that estimate does not only
    produce the correct power law but also a numerical factor of the
    right order of magnitude.
  \item Since the unrestricted minimizer has charge $Z$, there are no
    negative ions in Thomas--Fermi theory. This might look strange at
    first sight, since negative ions are known to exist. However,
    doubly or higher negatively charged ions are unknown (Massey
    \cite{Massey1976,Massey1979}), i.e., the prediction is off by one
    only. Moreover, Thomas--Fermi theory plays an essential role in
    bounding the excess charge of atoms in more elaborate models like
    Hartree--Fock theory. There it is used to successively screen out
    the inner electrons (Solovej \cite{Solovej2003}).
  \end{itemize}

    \begin{proof}
    \textit{Spherical symmetry:} The density $\rho^\cR$ defined
    by $\rho^\cR(\gx)=\rho_Z(\cR\gx)$ is again a minimizer for any
    rotation $\cR$ about the origin. Therefore, since the minimizer is
    unique, $\rho^\cR=\rho_Z$. Since this holds for all $\cR$, the
    minimizer must be spherically symmetric.

  \textit{Upper bound on the number of electrons:} We will use
  Benguria's famous --- unfortunately unpublished --- variational
  argument: Suppose ${\int_{\rz^3}\rho_Z>Z}$. Then there is a radius
  ${R\in\rz_+}$ such that ${\int_{|\gx|\leq R}\rho_Z=Z}$. Set
  \begin{equation}
    n(\gx):=
    \begin{cases}
      \rho_Z(\gx)& \text{for\ } |\gx|\leq R\\
      0& \text{for\ } |\gx|>R
    \end{cases}
  \end{equation}
  and $\delta:= \rho_Z-n$ where $\delta$ is not vanishing almost
  everywhere. We claim that $n$ has a lower energy than the infimum
  unless $\delta$ vanishes:
  \begin{align}
    \label{benguriaus}
    \cE^\mathrm{TF}_Z(n)-\cE^\mathrm{TF}_Z(\rho_Z)
    &= -\frac35\gtf\int_{\rz^3}\rd \gx\,\delta(\gx)^\frac53\nonumber\\
    &\mathrel{\phantom{=}}\mbox{}
    +\underbrace{\int_{\rz^3}\rd\gx\,\frac{Z}{|\gx|}\delta(\gx) -2D(n,\delta)}_{=0}
    -D[\delta]  <0
  \end{align}
  where the underbraced quantity vanishes because the electric
  potential of $n$ outside the ball of radius $R$ is exactly $Z/|\gx|$
  by Newton's theorem. The last inequality in \eqref{benguriaus} is
  strict, since $\delta$ does not vanishes almost everywhere. This, of
  course, contradicts the fact that $\rho_Z$ is a minimizer. Therefore
  the supposition is absurd and
  \begin{equation}
    \label{obereSchranke}
    \int_{\rz^3}\rho_Z\leq Z.
  \end{equation}

  \textit{Thomas--Fermi equation:} We defer the lower bound on the
  charge of $\rho_Z$ and first turn to the Thomas--Fermi equation. We
  begin with the Thomas--Fermi potential $\fitf$. The electronic
  potential of the minimizer $\psi:=\rho_Z*|\cdot|^{-1}$ is H\"older
  continuous: For any nonzero $\ga\in\rz^3$ we have
  \begin{align}
    \label{holderstetig}
    \bigl|\psi(\gx+\ga)-\psi(\gx)\bigr|
    &= \left|\int_{\rz^3}\rd \gy \,\rho_Z(\gy)
      \left(\frac{1}{|\gx-\gy +\ga|}-\frac{1}{|\gx-\gy|}\right)\right|
    \nonumber\\
    &\leq \int_{\rz^3}\rd \gy\, \rho_Z(\gy)
      \frac{|\ga|}{|\gx-\gy +\ga|\,|\gx-\gy|}\nonumber\\
    &\leq
    \|\rho_Z\|_\frac53|\ga|\left(\int_{\rz^3}\rd\gy\,
      |-\gy+\ga|^{-\frac52}|\gy|^{-\frac52}\right)^\frac25\nonumber\\
    &\leq\const \|\rho_Z\|_\frac53 |\ga|^\frac15
  \end{align}
  which shows the H\"older continuity. Thus the Thomas--Fermi potential
  is continuous outside the origin where it is also subharmonic, since
  \begin{equation}
    -\frac1{4\pi}\Delta\fitf= Z\delta_0-\rho_Z.
  \end{equation}

  Pick now any positive $R$ and $ \epsilon$ and set
  \begin{equation}
    \label{test}
    \chi:= \rho_Z \sgn\bigl(\gtf\rho_Z^\frac23-\fitf\bigr).
  \end{equation}
  Note that
  $\rho_Z+\alpha \chi\in \cI$ for all $\alpha\in[-1,1]$. Moreover,
  \begin{equation}
    F(\alpha):=\cE_Z^\mathrm{TF}(\rho_Z+\alpha \chi) - \cE_Z^\mathrm{TF}(\rho_Z)
  \end{equation}
  is continuously differentiable in $\alpha$ for $\alpha\in(-1,1)$ and
  \begin{align}
    0=F'(0)&= \gtf\int_{\rz^3}\rd
    \gx\,\rho_Z(\gx)^\frac23\chi(\gx)
    -\int_{\rz^3}\rd\gx\,\frac{Z\chi(\gx)}{|\gx|}
    + 2D(\chi,\rho_Z)\nonumber\\
    &=\int_{\rz^3}\rd \gx
    \left|\gtf\rho_Z(\gx)^\frac23-\fitf(\gx)\right|\rho_Z(\gx),
  \end{align}
  since $\rho_Z$ is the minimizer. Thus,
  \begin{equation}
    \label{tfvorl}
    \gtf\rho_Z(\gx)^\frac23-\fitf(\gx)=0\ \text{for}\ \text{almost all}\ \gx\ \text{such that}\ \rho_Z(\gx)>0.
  \end{equation}
   Repeating the above argument but picking $\chi$ as the
  characteristic function of any ball $B_R(\gy)\cap \mathcal{N}$ with
  $\gy\in\rz^3$ and $R>0$ as a perturbing function, where
  $\mathcal{N}:=\{\gx\in\rz^3| \rho_Z(\gx)=0\}$, yields
  \begin{equation}
    \label{TFnull}
    0\leq F'(0)=\int_{B_R(\gy)\cap \mathcal{N}}\rd \gx\,\bigl(-\fitf(\gx)\bigr).
  \end{equation}
  (Note that we cannot conclude equality, since $\alpha$ is restricted
  to $[0,\infty)$ because of the requirement that
  $\rho_Z+\alpha \chi\in\cI$.)

  Since $\psi$ is continuous by \eqref{holderstetig} and
  $\fitf=Z/|\cdot|-\psi$, there exists a neighborhood of the origin
  which is disjoint from $\mathcal{N}$. But outside the origin $\fitf$
  is continuous. Thus $\fitf\leq0$ on $\mathcal{N}$. However, from
  \eqref{obereSchranke} we know because of spherical symmetry and
  Newton's theorem that $\fitf\geq0$ everywhere. Thus $\fitf=0$ on
  $\mathcal{N}$.

  Combining this with \eqref{tfvorl} yields
  \begin{equation}
  \gtf\rho_Z(\gx)^\frac23=\fitf(\gx)  
  \end{equation}
  almost everywhere on $\rz^3$, i.e., yields the claimed formula \eqref{tf}.
  (We note in passing --- since not needed in our proof --- that one
  can actually show that $\rho_Z>0$ everywhere. But since $\fitf$
  can only vanish if $\rho_Z$ has compact support, this implies that
  $\mathcal{N}=\emptyset$.)

    \textit{Lower bound on the number of electrons:} Now suppose
  ${\int_{\rz^3}\rho_Z <Z}$. Then, by Newton's theorem there exist ${R>0}$
  and ${\epsilon>0}$ such that for ${|\gx|>R}$ we have
  ${\fitf(\gx)\geq\epsilon/|\gx|}$. Thus, by \eqref{obereSchranke} and
  \eqref{tf}, we have
  \begin{equation}
    Z\geq\int_{\rz^3}\rho_Z =\int_{\rz^3}\bigl(\fitf(\gx)/\gtf\bigr)^{\frac32}
    =(\epsilon/\gtf)^{\frac32}\int_{|\gx|>R}\!\rd\gx/|\gx|^{\frac32}=\infty
  \end{equation}
  which is a contradiction. Thus, also $\int_{\rz^3}\rho_Z\geq Z$ and
  therefore $\int_{\rz^3}\rho_Z=Z$ as claimed.

  \textit{Decrease of the minimizer:} For showing that the minimizer
  $\rho_Z$ is decreasing in the radial variable, we remark that by the
  Thomas--Fermi equation \eqref{tf} it suffices to show that the
  radial derivative of the Thomas--Fermi potential $\fitf$ is
  negative. Thus we compute for $\gx\neq0$ and get using Newton's
  theorem
  \begin{align}
    \frac{\gx}{|\gx|}\cdot \grad \varphi_Z(\gx)
    &= \frac{\gx}{|\gx|}\cdot
    \grad_\gx {\left(\frac{Z}{|\gx|}-\int_{\rz^3} \rd\gy\,
      \frac{\rho_Z(\gy)}{\max\{|\gx|,|\gy|\}}\right)}\nonumber\\
    &=-\frac{Z}{|\gx|^2}
      +\frac{\int_{|\gy|<|\gx|}\rd\gy\,\rho_Z(\gy)}{|\gx|^2}
    \leq 0
  \end{align}
  using \eqref{obereSchranke} in the last step. This proves that
  $\rho_Z$ is decreasing in the radial variable. Differentiating again
  yields\enlargethispage{0.4\baselineskip}
  \begin{align}
    \biggl(\frac{\gx}{|\gx|}\cdot \grad\biggr)^2 \varphi_Z(\gx)
    &= \frac{2}{|\gx|^3}
    {\left(Z-\int_{|\gy|<|\gx|}\rd \gy\, \rho_Z(\gy)\right)} +
    \frac{4\pi\rho_Z(\gx)}{|\gx|^2} \nonumber\\
    &\geq\frac{4\pi\rho_Z(\gx)}{|\gx|^2}\geq 0
  \end{align}
  using that ${\int_{\rz^3}\rho_Z=Z}$. This shows the convexity of
  $\fitf$ which, in turn, implies the the convexity of
  $\fitf^\frac32$ and thus the claimed convexity of $\rho_Z$.

  \textit{Scaling relations:} For any $\rho\in\cI$ a straightforward
  computation shows
  \begin{equation}
  \cE_Z^{\mathrm{TF}}\bigl(Z^2\rho(Z^\frac13\cdot)\bigr) =
  Z^\frac73\cE_1^\mathrm{TF}(\rho).  
  \end{equation}
  The scaling relations follow.
\end{proof}

Theorem \ref{minimierereigenschaften} implies that minimizers of the
Thomas--Fermi functional fulfill the Thomas--Fermi equation. The reverse is
also true:
\begin{theorem}
  If $\rho\in\cI$ fulfills \eqref{tf}, then it minimizes
  $\cE^\mathrm{TF}_Z$ on $\cI$.
\end{theorem}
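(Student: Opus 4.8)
The plan is to exploit the strict convexity of $\cE^{\mathrm{TF}}_Z$ established in Lemma \ref{tfstrengkonvex}, together with the observation that \eqref{tf} is exactly the statement that the first variation of the functional at $\rho$ vanishes. For a convex functional a stationary point is automatically a global minimizer, so the whole task reduces to writing down the tangent-line inequality in quantitative form and checking that every integral that appears is finite.

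Concretely, I would fix an arbitrary $\tau\in\cI$ and bound $\cE^{\mathrm{TF}}_Z(\tau)$ from below term by term, expanding around $\rho$. For the kinetic energy I use the tangent-line inequality for the convex map $t\mapsto t^{5/3}$, namely $\tau^{5/3}\geq\rho^{5/3}+\tfrac53\rho^{2/3}(\tau-\rho)$, which after multiplication by $\tfrac35\gtf$ and integration yields $\tfrac35\gtf\int\tau^{5/3}\geq\tfrac35\gtf\int\rho^{5/3}+\gtf\int\rho^{2/3}(\tau-\rho)$. The nuclear term is linear and contributes exactly $-Z\int(\tau-\rho)/|\gx|$. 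For the Coulomb term I expand $D[\tau]=D[\rho]+2D(\rho,\tau-\rho)+D[\tau-\rho]$ and drop the last summand via $D[\tau-\rho]\geq0$ (Onsager's inequality), keeping the first-order piece $2D(\rho,\tau-\rho)=\int(\rho*|\cdot|^{-1})(\tau-\rho)$.

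Adding the three estimates collects a single first-order term $\int\big[\gtf\rho^{2/3}-Z|\gx|^{-1}+\rho*|\cdot|^{-1}\big](\tau-\rho)$, whose bracket is precisely the difference of the two sides of the Thomas--Fermi equation \eqref{tf} and hence vanishes almost everywhere. Therefore this term is zero and we obtain $\cE^{\mathrm{TF}}_Z(\tau)\geq\cE^{\mathrm{TF}}_Z(\rho)$; since $\tau\in\cI$ was arbitrary, $\rho$ minimizes $\cE^{\mathrm{TF}}_Z$ on $\cI$.

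The main point to watch is integrability, so that splitting the first-order term into three pieces and invoking \eqref{tf} pointwise is legitimate. The kinetic pairing is controlled by H\"older's inequality since $\rho^{2/3}\in L^{5/2}$ and $\tau-\rho\in L^{5/3}$; the nuclear pairing is finite through the decomposition of $1/|\gx|$ into $V_k\in L^{5/2}$ and the smeared long-range part $V_l$, exactly as in the lower-bound estimate \eqref{untereSchranke}; and the Coulomb pairing $D(\rho,\tau-\rho)$ is finite by the Schwarz inequality because $D[\rho],D[\tau]<\infty$. Once these bookkeeping points are in place the argument closes, and I do not expect a deeper obstacle: the only real content is recognizing \eqref{tf} as the Euler--Lagrange equation and letting convexity upgrade stationarity to global minimality.
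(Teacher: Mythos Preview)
Your proposal is correct and takes essentially the same approach as the paper: both exploit convexity to upgrade the Euler--Lagrange equation to global minimality, with the paper writing $F(1)=\int_0^1\dot F(\tau)\,\rd\tau$ and observing that $\delta\bigl[(\rho+\tau\delta)^{2/3}-\rho^{2/3}\bigr]\geq0$, while you use the equivalent tangent-line inequality $\tau^{5/3}\geq\rho^{5/3}+\tfrac53\rho^{2/3}(\tau-\rho)$ directly. Your integrability checks are exactly what is needed, and the argument closes as you describe.
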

We remark that this implies that the solution of the Thomas--Fermi
equation is unique in $\cI$, since there is exactly one minimizer of
the Thomas--Fermi functional in $\cI$. In particular $\rho$ is equal
to $\rho_Z$, the unique minimizer of the Thomas-Fermi functional on
$\cI$.
\begin{proof}
  Suppose that $\sigma,\rho\in\cI$ and $\rho$ fulfills \eqref{tf}. We
  set $\delta:=\sigma-\rho$. The condition that both $\rho$ and
  $\sigma$ are nonnegative implies that $\delta\geq-\rho$. We set
  \begin{equation}
    F(t):=\cE^\mathrm{TF}_Z(\rho+t\delta)-\cE^\mathrm{TF}_Z(\rho).
  \end{equation}
  In this notation, we wish to show that $F(1)\geq 0$. We compute
  \begin{align}
    \label{4.37-0}
    F(1)&=\int_0^1\rd \tau\, \dot F(\tau)\nonumber\\
    &=\int_0^1\rd \tau\left[\int_{\rz^3}\rd\gx\, \delta(\gx)
      {\left(\gtf\bigl(\rho(\gx)+\tau\delta(\gx)\bigr)^{\frac23}
      -\fitf(\gx)\right)} +2\tau D[\delta]\right]\nonumber\\
    &=\int_0^1\rd \tau\left[\gtf\int_{\rz^3}\rd\gx\, \delta(\gx)
      {\left(\bigl(\rho(\gx)+\tau\delta(\gx)\bigr)^{\frac23}
      -\rho(\gx)^{\frac23}\right)} +2\tau D[\delta]\right]\nonumber\\
    &\geq D[\delta] \geq 0
  \end{align}
  where the first inequality holds because of ${\delta\geq-\rho}$ and
  therefore the integrand of the space integral is positive regardless
  of the sign of $\delta(\gx)$. 
\end{proof}
We remark that both inequalities are strict unless $\delta$ vanishes
almost everywhere, i.e., we have
$\cE^\mathrm{TF}_Z(\sigma)>\cE^\mathrm{TF}_Z(\rho)$ unless
$\rho=\sigma$ almost everywhere.
\begin{theorem}
  For $N\leq Z$ the minimizer of the Thomas--Fermi functional on
  $\cI_N$ occurs in $\cI_{\partial N}$; whereas for $N>Z$ the
  minimizer occurs in $\cI_{\partial Z}$. In fact the Thomas--Fermi
  functional has no minimizer in $\cI_{\partial N}$ for $N>Z$.
\end{theorem}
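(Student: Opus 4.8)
The plan is to build everything on the single fact, established in Theorem~\ref{minimierereigenschaften}, that the unconstrained minimizer $\rho_Z$ has mass exactly $\int_{\rz^3}\rho_Z=Z$, together with the uniqueness of the minimizers on $\cI$ and on $\cI_N$ and the convexity of $\cE_Z^\mathrm{TF}$. First I would dispose of the case $N\geq Z$. Here $\rho_Z\in\cI_N$ because its mass is $Z\leq N$; since $\cI_N\subseteq\cI$ and $\rho_Z$ minimizes over all of $\cI$, it also minimizes over $\cI_N$, and by uniqueness of the minimizer on $\cI_N$ it \emph{is} that minimizer. Its mass is $Z$, so for $N>Z$ the $\cI_N$-minimizer lies in $\cI_{\partial Z}$, and for $N=Z$ it lies in $\cI_{\partial N}=\cI_{\partial Z}$, as claimed.

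Next the case $N<Z$. Let $\rho_N$ denote the minimizer on $\cI_N$ and suppose, for contradiction, that $\int_{\rz^3}\rho_N=M<N$. I would test with the segment $\rho_t:=\rho_N+t(\rho_Z-\rho_N)$ for small $t\geq0$: it stays nonnegative (a convex combination of nonnegative densities), lies in the convex set $\cI$, and has mass $M+t(Z-M)$, which is $<N$ for small $t$ since $Z>M$; hence $\rho_t\in\cI_N$. Minimality of $\rho_N$ on $\cI_N$ together with the convexity of $\cE_Z^\mathrm{TF}$ then gives
\begin{equation}
  \cE_Z^\mathrm{TF}(\rho_N)\leq\cE_Z^\mathrm{TF}(\rho_t)
  \leq(1-t)\cE_Z^\mathrm{TF}(\rho_N)+t\,\cE_Z^\mathrm{TF}(\rho_Z),
\end{equation}
whence $\cE_Z^\mathrm{TF}(\rho_N)\leq\cE_Z^\mathrm{TF}(\rho_Z)=E^\mathrm{TF}(Z)$. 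As $\rho_N\in\cI$, this forces $\rho_N$ to be an unconstrained minimizer, so by uniqueness $\rho_N=\rho_Z$ and $M=Z>N$, contradicting $M<N$. Therefore $\int_{\rz^3}\rho_N=N$, i.e.\ $\rho_N\in\cI_{\partial N}$. Note that this avoids redoing the Euler--Lagrange computation and leans only on convexity plus the known mass of $\rho_Z$.

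Finally, the nonexistence of a minimizer in $\cI_{\partial N}$ for $N>Z$. Since $\cI_{\partial N}\subseteq\cI_N$, the case $N\geq Z$ above gives $\inf\cE_Z^\mathrm{TF}(\cI_{\partial N})\geq E^\mathrm{TF}(Z,N)=E^\mathrm{TF}(Z)$. For the reverse inequality I would exhibit a recovery sequence: add to $\rho_Z$ a bump of mass $N-Z$ that is simultaneously pushed to infinity and spread out, so that all four energy contributions of the bump --- its kinetic energy $\propto\int\rho^{5/3}$, its nuclear attraction, its self-energy $D[\cdot]$, and its cross term $D(\rho_Z,\cdot)$ --- tend to zero. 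This yields densities in $\cI_{\partial N}$ with energy tending to $E^\mathrm{TF}(Z)$, so $\inf\cE_Z^\mathrm{TF}(\cI_{\partial N})=E^\mathrm{TF}(Z)$. Non-attainment is then immediate: any $\rho^*\in\cI_{\partial N}$ with $\cE_Z^\mathrm{TF}(\rho^*)=E^\mathrm{TF}(Z)=\inf\cE_Z^\mathrm{TF}(\cI)$ would be an unconstrained minimizer, hence equal to $\rho_Z$ by uniqueness, contradicting $\int_{\rz^3}\rho^*=N>Z=\int_{\rz^3}\rho_Z$. The one genuinely technical point, and the step I expect to be the main obstacle, is the explicit estimate showing that the diffuse faraway bump contributes vanishing energy in all four terms at once; the remaining steps are soft, resting only on convexity and uniqueness.
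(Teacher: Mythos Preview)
Your proof is correct. For $N\geq Z$ and for the nonexistence step you proceed essentially as the paper does: the unconstrained minimizer $\rho_Z$ lies in $\cI_N$ and hence is its unique minimizer, and a dilute bump of mass $N-Z$ added to $\rho_Z$ furnishes a recovery sequence in $\cI_{\partial N}$ whose extra energy vanishes. (The paper accomplishes this by pure dilation $\rho_Z+\lambda^3 g(\lambda\,\cdot)$ with $\lambda\to0$, without translating the bump; your ``push to infinity and spread out'' works just as well.)

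Where you genuinely differ is in the case $N<Z$. The paper argues through the auxiliary lemma that $N\mapsto E^\mathrm{TF}(Z,N)$ is convex and monotone decreasing: if the $\cI_N$-minimizer had mass $N_\mathrm{min}<N$, then $E^\mathrm{TF}(Z,\cdot)$ would be constant on $[N_\mathrm{min},N]$, hence by convexity and monotonicity constant on all of $[N_\mathrm{min},\infty)$, in particular at $N'=Z$, forcing the constrained minimizer to coincide with $\rho_Z$ and giving the contradiction. You bypass that lemma entirely: if $\rho_N$ had deficient mass, the segment $(1-t)\rho_N+t\rho_Z$ stays in $\cI_N$ for small $t$, and convexity of the functional itself yields $\cE_Z^\mathrm{TF}(\rho_N)\leq\cE_Z^\mathrm{TF}(\rho_Z)$ directly, whence $\rho_N=\rho_Z$ by uniqueness of the global minimizer. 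Your route is shorter and self-contained; the paper's route has the compensating advantage that the convexity of $E^\mathrm{TF}(Z,N)$ in $N$ is recorded as a separate result of independent use.
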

\begin{proof}
  1. $N\leq Z$. Suppose $\rho$ minimizes $\cE^\mathrm{TF}_Z$ on
  $\cI_N$ and ${N_\mathrm{min}:=\int_{\rz^3}\rho<N}$. Then $\rho$ is
  also a minimizer of the Thomas--Fermi functional for all
  ${N'\in[N_\mathrm{min},N]}$, i.e., $E^\mathrm{TF}(Z,N')=\const$ for
  all ${N'\in[N_\mathrm{min},N]}$. We will show that this leads to a
  contradiction. Since, by Theorem \ref{minimierereigenschaften},
  $E^\mathrm{TF}(Z,N)$ is monotone decreasing and convex, we even have
  that $E^\mathrm{Z,N'}=C$ for all $N'\geq N_\mathrm{min}$, in
  particular $N'=Z$. However, for this case, we know the minimizer is
  $\rho_Z$ which is certainly different from $\rho$, since
  $\int_{\rz^3}\rho_Z>\int_{\rz^3}\rho$ by assumption; but this
  contradicts the uniqueness of the minimizer, i.e., since
  $\rho\in\cI_N$, we have $\int_{\rz^3}\rho= N$.
    
  2. $N>Z$. By Theorem \ref{minimierereigenschaften} the unique
  minimizer on $\cI$ occurs in $\cI_{Z}$. But of course, it is also a
  minimizer in $\cI_{N}$.

  Finally, to show that there is no minimizer in $\cI_{\partial N}$
  for $N>Z$, we note that
  \begin{equation}
    \label{EN=EDN}
    \inf\bigl\{\cE^\mathrm{TF}_Z(\cI_N)\bigr\}
    = \inf\bigl\{\cE^\mathrm{TF}_Z(\cI_{\partial N})\bigr\}.
  \end{equation}
  This is obvious for $N\leq Z$ and can be achieved for $N>Z$ by
  pushing the charge exceeding $Z$ to infinity and diluting it there:
  We pick a nonnegative $g\in C_0^\infty(\rz^3)$ with
  $\int_{\rz^3}g=N-Z$ and use
  $\sigma_\lambda:=\rho_Z+ \lambda^3g(\lambda x)$ with
  $\lambda>0$ as trial function. Then, obviously,
  $\sigma_\lambda\in\cI_{\partial N}$.  We claim that
  \begin{equation}
    \label{N-Z}
    \cE^\mathrm{TF}_Z(\sigma_\lambda)\to\cE^\mathrm{TF}_Z(\rho_Z)
  \end{equation}
  as $\lambda\to0$. If \eqref{N-Z} were true, then, indeed,
  \eqref{EN=EDN} would hold and the proof would be complete.

  We prove the remaining convergence \eqref{N-Z} term by term:

  \noindent%
  1. The kinetic energy: By the inverse triangular inequality we have
  \begin{equation}
    0\leq \|\sigma_\lambda\|_{\frac53}-\|\rho_Z\|_{\frac53}
    \leq\|\sigma_\lambda-\rho_Z\|_{\frac53}
    =\|\lambda^3g(\lambda\cdot)\|_{\frac53}
    = \lambda^{\frac65}\|g\|_{\frac53}\to0 
  \end{equation}
  as $\lambda\to0$.

  \noindent%
  2. The nuclear attraction: Obviously, as $\lambda\to0$
  \begin{align}
    &\mathrel{\phantom{=}}
      \int_{\rz^3}\rd\gx\,
      \frac{\rho_Z(\gx)+\lambda^3 g(\lambda\gx)}{|\gx|}
    -\int_{\rz^3}\rd\gx\,\frac{\rho_Z(\gx)}{|\gx|}\nonumber\\
    &=\int_{\rz^3}\rd\gx\,
      \frac{\rho_Z(\gx)+\lambda g(\lambda\gx)}{|\gx|}
    =\int_{\rz^3}\rd\gx\,\frac{\lambda^3 g(\lambda\gx)}{|\gx|}
    =\lambda\int_{\rz^3}\rd\gx\,\frac{g(\gx)}{|\gx|}\to 0.
  \end{align}

  \noindent%
  3. The electron-electron repulsion: By the inverse triangular
  inequality for the Coulomb norm \eqref{Coulombnorm} we have
  \begin{equation}
    0\leq \|\sigma_\lambda\|_\cC
  -\|\rho_Z\|_\cC\leq \|\sigma_\lambda-\rho_Z\|_\cC=
    \|\lambda^3g(\lambda\cdot)\|_\cC=\sqrt{\lambda}\|g\|_\cC\to0
  \end{equation}
  as $\lambda\to0$.
\end{proof}

\subsubsection{Asymptotic exactness of Thomas--Fermi theory}
The heuristic derivation of the Thomas--Fermi theory --- see, e.g.,
Gombas \cite{Gombas1949} for a textbook treatment --- may be viewed as
a semiclassical approximation with effective Planck constant
$Z^{-\frac13}$. It is therefore reasonable to guess that it describes
large --- say for simplicity --- neutral atoms correctly. We will see in
this subsection that this is indeed the case for the ground-state
energy $E^\mathrm{S}(Z):=\inf\bigl\{\sigma(H_{Z/|\gx|,Z})\bigr\}$ and the
reduced one-particle ground-state density. We begin with the energy:
\begin{theorem}
  \label{satzenergieasymptotik}
  \begin{equation}
    \label{energieasymptotik}
    E^\mathrm{S}(Z)= E^\mathrm{TF}(Z)+O(Z^\frac{25}{11})
  \end{equation}
  as $Z\to\infty$.
\end{theorem}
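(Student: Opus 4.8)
The plan is to prove two matching estimates, $E^\mathrm{S}(Z)\le E^\mathrm{TF}(Z)+O(Z^{\frac{25}{11}})$ and $E^\mathrm{S}(Z)\ge E^\mathrm{TF}(Z)-O(Z^{\frac{25}{11}})$. The organising principle is that after the dilation $\gx\mapsto Z^{-\frac13}\gx$ underlying the scaling relation of Theorem~\ref{minimierereigenschaften} the atom becomes a semiclassical problem with effective Planck constant $Z^{-\frac13}$, and $E^\mathrm{TF}(Z)=E^\mathrm{TF}(1)Z^{\frac73}$ is exactly its leading Weyl value. Write $\fitf=Z|\cdot|^{-1}-\rho_Z*|\cdot|^{-1}$ for the Thomas--Fermi potential and $h:=-\tfrac12\Delta-\fitf$ for the mean-field one-particle operator. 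Using the Thomas--Fermi equation~\eqref{tf} and $\int_{\rz^3}\rho_Z=Z$ one obtains the purely algebraic identity
\begin{equation}
  E^\mathrm{TF}(Z)=-\tfrac25\gtf^{-\frac32}\int_{\rz^3}\fitf^{\frac52}-D[\rho_Z],
\end{equation}
and since $\tfrac25\gtf^{-\frac32}$ equals the classical constant $\tfrac{4\sqrt2}{15\pi^2}$, the first term is precisely the phase-space (Weyl) integral for $\tr(h)_-$. Thus both bounds reduce to an accurate evaluation of $\tr(h)_-$ together with a passage between the genuine electron-electron repulsion and its mean field.

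For the upper bound I would take as trial state the Slater determinant $\psi$ spanned by the $Z$ lowest eigenfunctions of $h$ (with the spin multiplicity $q=2$), whose one-particle density matrix $\gamma$ is essentially the projection onto the negative spectrum of $h$. Since $V=-\fitf-\rho_Z*|\cdot|^{-1}$, formula~\eqref{dd} gives
\begin{equation}
  \cE_V[\psi]=\tr(h\gamma)-\int_{\rz^3}(\rho_Z*|\cdot|^{-1})\rho_\gamma+D[\rho_\gamma]-\tfrac12\!\int\!\!\int\frac{|\gamma(x,y)|^2}{|\gx-\gy|},
\end{equation}
and dropping the nonnegative exchange term yields an upper bound. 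The semiclassical density of $\gamma$ agrees with $\rho_Z$ closely enough that, after the cross terms cancel, $-\int(\rho_Z*|\cdot|^{-1})\rho_\gamma+D[\rho_\gamma]=-D[\rho_Z]+D[\rho_\gamma-\rho_Z]$, with the self-energy $D[\rho_\gamma-\rho_Z]$ of the density error negligible at this order, while $\tr(h\gamma)$ equals $\tr(h)_-$, and hence its Weyl value, up to the semiclassical remainder. Combined with the identity above this gives $E^\mathrm{S}(Z)\le E^\mathrm{TF}(Z)+O(Z^{\frac{25}{11}})$; the discarded exchange term is of size $\int_{\rz^3}\rho_Z^{\frac43}=O(Z^{\frac53})$ and is harmless.

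For the lower bound, fix any normalised $\psi\in\gH_Z\cap H^1(\Gamma^Z)$ and rewrite the Hamiltonian as $H=\sum_n h_n-\sum_n(\rho_Z*|\cdot|^{-1})(\gx_n)+\sum_{m<n}|\gx_m-\gx_n|^{-1}$. A correlation inequality of Lieb--Oxford / Onsager--Lieb--Yau type, the quantitative form of the positivity of the Coulomb energy (Onsager's inequality), controls the last two sums by the mean field,
\begin{equation}
  \sum_{m<n}\frac1{|\gx_m-\gx_n|}-\sum_n(\rho_Z*|\cdot|^{-1})(\gx_n)\ge -D[\rho_Z]-C\!\int_{\rz^3}\rho_Z^{\frac43},
\end{equation}
so that $H\ge\sum_n h_n-D[\rho_Z]-O(Z^{\frac53})$ as a quadratic form. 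The Pauli principle gives $\langle\psi,\sum_n h_n\psi\rangle\ge\tr(h)_-$, and a sharp semiclassical lower bound on $\tr(h)_-$ obtained from coherent states reproduces the Weyl value up to the same kind of remainder. Together these yield $E^\mathrm{S}(Z)\ge E^\mathrm{TF}(Z)-O(Z^{\frac{25}{11}})$. It is essential to use the coherent-state bound rather than the Lieb--Thirring inequality of Theorem~\ref{satzlt}: the latter carries the non-optimal constant $\gamma_\mathrm{LT}\le\gtf$ (equivalently $L\ge L_\mathrm{cl}$ in~\eqref{lt2}), which would overshoot the leading term and destroy the matching constant.

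The main obstacle, and the source of the exponent $\tfrac{25}{11}$, is the semiclassical remainder for $\tr(h)_-$ and for $\rho_\gamma$, which is concentrated near the nucleus where $\fitf\sim Z/|\gx|$ diverges and the Weyl law fails. The standard device is to excise a ball of radius $r_0=Z^{-\beta}$ about the origin, to estimate the interior by a crude (hydrogenic) bound and the exterior by the sharp phase-space asymptotics with gradient corrections, and then to optimise $\beta$. Balancing the interior error against the exterior semiclassical correction produces $O(Z^{\frac73-\frac2{33}})=O(Z^{\frac{25}{11}})$; the delicate point is to make the cutoff radius and the coherent-state width mutually consistent so that the same exponent is attained in both directions.
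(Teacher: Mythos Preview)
Your overall plan is sound and the lower bound is essentially the paper's argument; the upper bound, however, takes a different and harder route.

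\textbf{Upper bound.} The paper does not use the spectral projection of $h$. It builds the trial density matrix directly out of coherent states,
\[
\gamma=\int_P\rd z\,\theta\bigl(-[\tfrac12\gp^2-\fitf(\gq)]\bigr)\Pi_z,
\]
inserts it into the Hartree--Fock functional, drops the (nonpositive) exchange, and recovers $\cE^{\mathrm{TF}}_Z(\rho_Z)$ from the phase-space identities $\rho_Z(\gq)=\sum_\sigma\int_{\gp^2/2<\fitf(\gq)}\dbar\gp$ together with Newton's theorem. The only error terms are the kinetic shift $\pi^2 Z/(2R^2)$ and the nuclear attraction inside $|\gx|<R$; optimising $R=Z^{-3/5}$ already yields $O(Z^{11/5})$, better than $25/11$. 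Your spectral-projection route instead forces the quantitative density estimate $D[\rho_\gamma-\rho_Z]=O(Z^{25/11})$. This is not obvious a priori: Coulomb-norm convergence of the mean-field density to $\rho_Z$ is, in most treatments, a \emph{consequence} of the energy asymptotics rather than an input to it, and the Coulomb singularity of $\fitf$ makes the needed local Weyl law delicate. As written this step is a genuine gap; the paper's direct coherent-state trial sidesteps it entirely.

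\textbf{Lower bound.} Here the two arguments coincide in structure: a correlation inequality reduces to $\sum_n h_n$, coherent states give the sharp semiclassical lower bound on the main piece, and a small fraction of the kinetic energy is sacrificed to absorb the Coulomb singularity and the exchange error via Lieb--Thirring. The paper uses the exchange-hole inequality of Mancas et al.\ rather than Lieb--Oxford, obtaining a one-body error potential $\tfrac12\sqrt[3]{9\pi M\sigma}$ which is then controlled by the $\epsilon_2$-fraction of $-\Delta$. One correction to your display: Lieb--Oxford gives $-C\int\rho_\psi^{4/3}$ with the density of the \emph{state}, not $-C\int\rho_Z^{4/3}$; this is state-dependent and must itself be dominated by borrowed kinetic energy --- exactly the role of the paper's $\epsilon_2$-splitting. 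Your observation that the raw Lieb--Thirring inequality cannot be used for the leading term is correct and is precisely why the paper applies coherent states only to the $(1-\epsilon_1-\epsilon_2)$ piece; optimising $R$ and $\epsilon_1$ is what produces the exponent $25/11$.
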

There are several proofs of this result. The historical first one uses
Courant's \cite[6th Chapter, \S 4]{CourantHilbert1968I}
Dirichlet-Neumann bracketing (Lieb and Simon \cite{LiebSimon1977}) and
is close to the heuristic derivation. Here we will use coherent states
(see, e.g., Thirring \cite{Thirring1980}) which were used in this
context by Lieb \cite{Lieb1981} and Thirring \cite{Thirring1981}. We
are guided by their presentations. However, we start the lower bound
differently by using a different correlation inequality.

Before embarking on the proof we introduce coherent states and review
some properties which we will use. Given ${g\in H^1(\rz^3)}$ with
${\|g\|_2=1}$, ${\vec{q},\gp\in\rz^3}$, and ${s\in\{1,2\}}$ we call the
function  
\begin{equation}
  \label{koharent}
   f_{\gp,\vec{q},s}:\;\Gamma:=\rz^3\times\{1,2\}\to \cz,\quad
   x\mapsto e^{\ri \gp\cdot\gx}g(\gx-\vec{q})\delta_{s,\sigma}
\end{equation}
a coherent state. The perhaps most well known coherent states use a
\mbox{Gaussian} as $g$, since they yield equality in the Heisenberg
uncertainty relation. Here, however, it will be practical to pick
$g$ as function of compact support. To be definite we pick it as the
ground state of a particle in a ball of radius $R$ with Dirichlet
boundary conditions continued by zero outside that ball. We pick
\begin{equation}
  \label{g}
  g_R(\gx):= R^{-\frac32} g_1(\gx/R)
\end{equation}
where
\begin{equation}
  \ g_1(\gx) =
  \begin{cases}
    \displaystyle\frac{\sin(\pi|\gx|)}{\sqrt{2\pi}|\gx|}
    &\text{for\ } |\gx|<1\\
    0&\text{for\ }|\gx|>1
  \end{cases}
\end{equation}
and  $R$ is a positive parameter that will be optimized later.

We set $z:=(\gp,\vec{q},s)\in P:=\rz^6\times\{1,2\}$,
$\int_P\rd z:=\int_{\rz^3}\dbar\gp\int_{\rz^3}\rd\vec{q}\sum_{s=1}^2$, and
$\Pi_{z}:=|f_{z}\rl f_{z}|$ where $\dbar \gp$ denotes the volume
element divided by the cube of the Planck constant $h$ (in Hartree units
$h=2\pi$).  We will pick the radius $R$ later. These states
have  easily verified interesting properties:
\begin{align}
  \int_P\rd z\,\Pi_z&=\1,\nonumber\\
  \label{4.36}
  0\leq\gamma(z)\leq 1\,&\mathrel{\!\!\implies}
  0\leq \int_P\rd z \,\gamma(z)\Pi_z\leq1,\\
  \label{4.37}
  \int_P\rd z\, \gamma(z)&= \tr\biggl(\int_P\rd z \,\gamma(z)\Pi_z\biggr)
\end{align}
where the integrals are understood in the weak sense, i.e., e.g.,
$\int_P\rd z\, \gamma(z)\Pi_z$ is the operator $O$ whose matrix elements
$(f,Og)$ are $\int_P\rd z\, \gamma(z)(f,\Pi_zg)$ for every
$f,g\in L^2(\Gamma,\rd x)$. (As reminder, the notation $A\leq B$ for
two selfadjoint operators means, that the domain $D(B)$ of $B$ is
included in $D(A)$ and for all $f\in D(B)$ the inequality
$(f,Af)\leq (f,Bf)$ holds.)
\begin{proof}[Theorem \ref{satzenergieasymptotik}]
  \textit{Upper bound:} We pick
  \begin{equation}
    \label{ansatzks}
    \gamma:= \int_P\rd z\,
    \theta\bigl(-\bigl[\tfrac12\gp^2-\fitf(\gq)\bigr]\bigr)\Pi_z
  \end{equation}
  which --- by \eqref{4.36} and \eqref{4.37} --- is in
  $\cD_{\partial Z}$ (see \eqref{DDN} for the notation).  We insert
  this into the Hartree--Fock variational principle \eqref{hf}. Since
  it is an upper bound on the quantum energy by \eqref{os} and the
  exchange term is negative, we get --- using our choice of $g_R$, in
  particular using that $g_R$ is the ground state eigenfunction of the
  Laplacian on the ball of radius $R$ which has eigenvalue
  $\pi^2/R^2$ ---
  \begin{align}
    \label{4.39}
    &\mathrel{\phantom{=}}
      E^\mathrm{S}(Z) \leq \cE^\mathrm{HF}_Z(\gamma)\nonumber\\
    &\leq
    \int_{\tfrac12\gp^2-\fitf(\vec{q})<0}\rd z\int_\Gamma\rd x
      \left(\frac12|\grad f_z(x)|^2
      -\frac{Z|f_z(x)|^2}{|\gx|}\right)\nonumber\\
    &\mathrel{\phantom{=}}\mbox{}
    +\frac12\int_\Gamma\rd x\int_\Gamma\rd y\,
    \frac{\int_{\tfrac12\gp^2-\fitf(\vec{q})<0}\rd z
    \int_{\tfrac12\tilde\gp^2-\fitf(\tilde{\vec{q}})<0}\rd \tilde z\,
    |f_z(x)|^2|f_{\tilde z(y)}|^2}{|\gx-\gy|}\nonumber\\
    &\leq \int_{\tfrac12\gp^2-\fitf(\vec{q})<0}\rd z
    \left(\frac{|\gp|^2}{2}+\frac{\pi^2}{2R^2}-
      |g_R|^2*\frac{Z}{|\cdot|}(\vec{q})\right)\nonumber\\
    &\mathrel{\phantom{=}}\mbox{}
    +\frac12 \int_{\rz^3}\rd\vec{q} \int_{\rz^3}\rd\tilde{\vec{q}}\,
    g_R^2*\frac{1}{|\cdot|}*g_R^2(\vec{q}-\tilde{\vec{q}}).
  \end{align}
  By Newton's theorem we have
  $|g_R|^2*|\cdot|^{-1}(\vec{q})\geq |\vec{q}|^{-1}\theta(|\vec{q}|-R)$ and
  that   $g_R^2*|\cdot|^{-1}*g_R^2(\vec{q}-\tilde{\vec{q}})
  \leq1/|\vec{q}-\tilde{\vec{q}}|$. Moreover,
  a direct computation shows
  \begin{equation}
    \label{rhotf}
    \rho_Z(\vec{q})
    = \int_{\tfrac12\gp^2-\fitf(\vec{q})<0}\dbar\gp\sum_{\sigma=1}^2
  \end{equation}
  and
  \begin{equation}
    \label{Ttf}
    \frac35\gtf\rho_Z(\vec{q})^{\frac53}
    = \int_{\tfrac12\gp^2-\fitf(\vec{q})<0}\dbar\gp\,
    \sum_{\sigma=1}^2\frac12|\gp|^2.
  \end{equation}
  Thus, continuing \eqref{4.39} we get
  \begin{align}
    E^S(Z)
    &\leq
      2\int_{\rz^3}\dbar\gp
      \int_{\rz^3}\rd\vec{q}
      \left(\frac{|\gp|^2}{2}-\fitf(\vec{q})\right)-D[\rho_Z]
      + \frac{\pi^2Z}{2R^2}\nonumber\\
    &\mathrel{\phantom{=}}\mbox{}
      + \int_{|\gx|<R}\rd \gx\, \frac{Z\rho_Z(\gx)}{|\gx|}\nonumber\\
    &\leq \cE^\mathrm{TF}_Z(\rho_Z)
      +\frac{\pi^2Z}{2R^2}
      -\frac{8\pi Z^{\frac52}R^{\frac12}}{\gtf^{\frac32}}
      = E^{\mathrm{TF}}(Z)+\const Z^{\frac{11}{5}}
  \end{align}
  where we use $\gtf\rho(\gx)^\frac23\leq Z/|\gx|$ and picked
  $R=Z^{-\frac35}$ which proves the upper bound.

  \textit{Lower bound:} The first step of the lower bound is to
  eliminate the correlation in favor of a mean field using a
  correlation inequality. There are several ways of doing so. We will
  use the one by Mancas et al.\  \cite[Formula (14)]{Mancasetal2004}
  \begin{equation}
    \label{Mancas}
    \sum_{1\leq m<n\leq N}\frac{1}{|\gx_m-\gx_n|} \geq 
    \sum_{n=1}^N\int_{|\gy-\gx_n|>R(\gx_n)}
    \frac{\rd\gy\,\sigma(\gy)}{|\gx_n-\gy|}
    -D[\sigma]
  \end{equation}
  where $\sigma\in\cC$ will be specified later with $\int\sigma>\frac12$
  and $R(\gx)$ is defined by
  \begin{equation}
    \label{loch}
    \int_{|\gy-\gx|<R(\gx)}\rd\gy\, \sigma(\gy)=\frac12,
  \end{equation}
  i.e., it is a ball, called the exchange hole, centered at $\gx$ with
  radius $R(\gx)$ containing exactly half an electron.

  The right side of \eqref{Mancas} can be estimated further: we have
  \begin{equation}
    \int_{|\gx-\gy|\leq R(\gx)}\rd \gy\,\frac{\sigma(\gy)}{|\gx-\gy|}
    \leq \frac12\sqrt[3]{9\pi(M\sigma)(\gx)}
  \end{equation}
  by \cite[Formula (19)]{Mancasetal2004} where $Mf$ denotes the
  maximal function of the function $f$, i.e.,
  \begin{equation}
    (Mf)(\gx):= \sup_{R>0}{\left\{
        \frac{\int_{|\gx-\gy|<R}\rd \gy\, f(\gy)}{\frac{4\pi}{3}R^3}\right\}}.
  \end{equation}
  Thus \eqref{Mancas} implies
  \begin{equation}
    \label{mancasmod}
     \sum_{1\leq m<n\leq N}\frac{1}{|\gx_m-\gx_n|} \geq
     \sum_{n=1}^N{\left(\int_{\rz^3}
       \frac{\rd\gy\,\sigma(\gy)}{|\gx_n-\gy|}
       -\frac{\sqrt[3]{9\pi}}{2}\sqrt[3]{(M\sigma)(\gx_n)}\right)}
    -D[\sigma].
  \end{equation}
  This allows us to estimate 
  \begin{equation}
    H_{-Z/|\cdot|,N}
    \geq \sum_{n=1}^N{\left(-\tfrac12\Delta_n-\varphi_\sigma(\gx_n)
      -\tfrac12\sqrt[3]{9\pi(M\sigma)(\gx_n)}\right)}
    -D[\sigma]
  \end{equation}
  with $\varphi_\sigma=Z/|\cdot|-\sigma*|\cdot|^{-1}$. Thus we get the
  lower bound
  \begin{equation}
    \label{lb}
    E^\mathrm{S}(Z)
    \geq \inf_{\gamma\in\cD_N}{\left\{
    \tr{\left[\left(-\tfrac12\Delta-\varphi_\sigma
        - \tfrac12\sqrt[3]{9\pi(M\sigma)}\right)\gamma\right]}\right\}}
    -D[\sigma].
  \end{equation}
  
  Next note that for $\xi\in H^1(\Gamma)$
  \begin{equation}
    \|\grad\xi\|^2 = \int_P\rd z\, (\xi,\grad f_z)\cdot(\grad
    f_z,\xi)= \int_{P} p^2(\xi,\Pi_z\xi) - \|\grad g_R\|^2
  \end{equation}
  and
  \begin{equation}
    \label{T}
    \int_{\Gamma}\rd x\, \varphi_\sigma*g_R^2(\gx)|\xi(x)|^2
    = \int_P\rd z\, \varphi_\sigma(\gq)(\xi,\Pi_z\xi).
  \end{equation}
  Furthermore, by Newton's theorem,
  \begin{align}
    \varphi_\sigma*g_R^2
    &=Z\int_{\rz^3}\rd \gy \,\frac{g_R(\gy)^2}{|\gx-\gy|}
      -2D(g_R^2,\sigma)\nonumber\\
    &\geq - \int_{\rz^3}\frac{\sigma(\gy)}{|\gx-\gy|} +
    {\left\{\begin{array}{@{}cl@{}}
      \displaystyle\frac{Z}{|\gx|}&\text{for\ } |\gx|>R\\
      0&\text{for\ }|\gx|\leq R
    \end{array}\right\}},
  \end{align}
  i.e., 
  \begin{equation}
    \label{fi}
    \int_{\Gamma}\rd x \,\varphi_\sigma(\gx)|\xi(x)|^2
    \leq \int_{\Gamma}\rd x\, \varphi_\sigma*g_R^2(\gx)|\xi(x)|^2
    + \int_{R>|\gx|}\rd x\,\frac{Z}{|\gx|}|\xi(x)|^2 .
  \end{equation}
  We write now $\gamma=\sum_n\lambda_n|\xi_n\rl\xi_n|$ in its spectral
  representation with orthonormal eigenvectors
  $\xi_1,\xi_2,\ldots\in H^1(\Gamma)$ and weights
  $0\leq\lambda_1,\lambda_2,\ldots\leq 1$, set $\xi=\xi_n$ in \eqref{T}
  and \eqref{fi}, multiply by $\lambda_n$, and sum over $n$. Continuing
  \eqref{lb} we get for $\epsilon_1,\epsilon_2\in (0,1/2)$
  \begin{align}
    \label{4.55}
    E^\mathrm{S}(Z,N)
    &\geq \int_P\rd z \,{\left[
        (1-\epsilon_1-\epsilon_2)\frac{\gp^2}{2}
      - \varphi_\sigma(\vec{q})\right]}\tr(\gamma\Pi_z) -D[\sigma]
      \nonumber\\
    &\mathrel{\phantom{=}}\mbox{}
    - \frac{1-\epsilon_1-\epsilon_2}{2R^2}\pi^2N +
    \tr{\left(-\frac{\epsilon_1}2\Delta -
      \frac{Z\chi_{B_R(0)}}{|\cdot|}\right)}_-\nonumber\\
    &\mathrel{\phantom{=}}\mbox{}
      +
    \mathop{\mathop{\inf_{0\leq\gamma\leq1,}}_{(1-\Delta)\gamma\in\gS^1(\Gamma),}}
    _{\tr\gamma=N}\hspace*{-1em}
    \tr\Bigl(\left[-\frac{\epsilon_2}2\Delta -
      \tfrac12\sqrt[3]{9\pi(M\sigma)}\right]\gamma\Bigr)\nonumber\\
    &\geq \int\limits_P\rd z\,\left(
      (1-\epsilon_1-\epsilon_2)\frac{\gp^2}{2}
      - \varphi_\sigma(\vec{q})\right)_- \\\
    &\mathrel{\phantom{=}}\mbox{}
    - D[\sigma] - \const{\left(\frac{N}{R^2} +
      \epsilon_1^{-\frac32}\int\limits_{|\gx|<R}
      {\left(\frac{Z}{|\gx|}\right)}^{\frac52}\right)}\nonumber\\
    &\mathrel{\phantom{=}}\mbox{}    
  + 
  \mathop{\inf_{\rho\in L^\frac53(\rz^3),}}_{\int_{\rz^3}\rho\leq N}
    \int_{\rz^3}\rd \gx\,
    {\left(\tfrac35\epsilon_2\gamma_\mathrm{LT}\rho(\gx)^\frac53
      -\tfrac12\sqrt[3]{9\pi M(\sigma)(x)}\rho(\gx)\right)}\nonumber
  \end{align}
  where we used $0\leq \gamma\leq1$ and therefore
  $0\leq \tr(\gamma\Pi_z)\leq1$ and we used the Lieb--Thirring
  inequality in both forms \eqref{lt} and \eqref{lt2}.

  \enlargethispage{1.0\baselineskip}

  It is now appropriate to choose $N$ and $\sigma$. We pick ${N=Z}$ and
  $\sigma$ as the minimizer of the first term of the fourth line of
  \eqref{4.55} which amounts to the minimizer of the Thomas--Fermi
  functional but with a different Thomas--Fermi constant, namely
  instead of $(3\pi^2)^\frac23/2$ the constant
  ${\tilde\gamma:=(1-\epsilon_1-\epsilon_2)(3\pi^2)^\frac23/2}$. Rescaling the
  Thomas--Fermi functional, we get that this term becomes
  $(1-\epsilon_1-\epsilon_2)^{-1}E^\mathrm{TF}(Z)$.
  
  Eventually we turn to the last line of \eqref{4.55}. Obviously, the
  maximal function is homogeneous for positive constants, is monotone,
  i.e., $0\leq f\leq g$ implies $M(f)\leq M(g)$. Moreover
  $\sigma(\gx)\leq (Z/(\tilde\gamma|\gx|)^\frac32$ by \eqref{tf}, and
  $1/|\gx|^\frac32$ is an eigenfunction of the maximal operator with
  eigenvalue $C_{3/2,3}$ by \eqref{mf}. Thus, the last line of \eqref{4.55} is
  bounded from below as follows 
  \begin{align}
    \label{Zeile4}
 &\mathrel{\phantom{=}}   
   \mathop{\inf_{\rho\in L^\frac53(\rz^3),}}_{ \int_{\rz^3}\rho\leq N}
   \int_{\rz^3}\rd \gx
    \left(\tfrac35\epsilon_2\gamma_\mathrm{LT}\rho(\gx)^\frac53
      -\tfrac12\sqrt[3]{9\pi M(\sigma)(x)}\rho(\gx)\right)\nonumber\\
  &\geq \mathop{\inf_{\rho\in L^\frac53(\rz^3),}}_{\int_{\rz^3}\rho\leq N}
    \int_{\rz^3}\rd \gx
    \left(\tfrac35\epsilon_2\gamma_\mathrm{LT}\rho(\gx)^\frac53
    -\tfrac12\sqrt[3]{\frac{9\pi C_{3/2,3}}{\tilde\gamma^\frac32}}
    \frac{Z^\frac12}{|\gx|^\frac12}\rho(\gx)\right)\nonumber\\
    &\geq -\const {Z^\frac{13}9/\epsilon_2^\frac13}.
  \end{align}
  Thus
  \begin{equation}
    E^\mathrm{S}(Z)\geq E^\mathrm{TF}(Z) -
    \const{\left[(\epsilon_1+\epsilon_2)Z^{\frac73}+\frac{Z}{R^2}+
      \frac{Z^{\frac52}R^{\frac12}}{\epsilon_1^{\frac32}}+
      \frac{Z^\frac{13}9}{\epsilon_2^\frac13}\right]}.
  \end{equation}
  Optimizing first in $R$ and $\epsilon_1$ yields an error term of the
  order $O(Z^\frac{25}{11})$ followed by an optimization in
  $\epsilon_2$ yielding $O(Z^\frac53)$ we get
  \begin{equation}
    E^\mathrm{S}(Z,Z)\geq E^\mathrm{TF}(Z) -\const Z^\frac{25}{11}
  \end{equation}
  which is the missing lower bound. 
\end{proof}

There are various extensions and related results to the above
asymptotic exactness: 
\begin{description}
\item[\textbf{Scott and Schwinger corrections:}] 

  The next order term in the asymptotic \eqref{energieasymptotik} is
  known. The energy with leading correction is
  \begin{equation}
    \label{scott}
    E^\mathrm{S}(Z)= E^\mathrm{TF}(Z)+ \frac{Z^2}{2} +O\bigl(Z^{\frac{47}{24}}\bigr),
  \end{equation}
  i.e., there is a correction term of order $Z^2$ predicted by and
  named after Scott \cite{Scott1952}. There are various proofs of this
  result. One
  \cite{SiedentopWeikard1987O,SiedentopWeikard1986,SiedentopWeikard1987U,SiedentopWeikard1991}
  that is close to the proof of the leading order uses Macke orbitals
  \eqref{Mackel} instead of coherent states leading to the
  Hellmann--Weizs\"acker functional \eqref{Hellmann} but replaced by
  hydrogenic orbitals for small angular momenta both for the upper and
  the lower bound. The method was actually carried through in the
  noninteracting setting earlier \cite{SiedentopWeikard1986}. There
  were also earlier results on the lower bound using a WKB-type
  analysis (\cite{SiedentopWeikard1989}, Hughes
  \cite{Hughes1986,Hughes1990}). Moreover the result was extended to
  ions (Bach \cite{Bach1989}) and to molecules (Ivrii and Sigal
  \cite{IvriiSigal1993}, Solovej and Spitzer
  \cite{SolovejSpitzer2003}, Balodis \cite{Balodis2004}).

  \rule{2em}{0pt}%
  As Scott conjectured the correction is generated by the Coulomb
  singularity. A nonsingular potential has no correction up to order
  $O(Z^\frac53)$.
  
  \rule{2em}{0pt}%
  In fact even the subleading correction $-\gamma_SZ^\frac53$ was
  predicted by Schwinger \cite{Schwinger1981}
  yielding
  \begin{equation}
    \label{feffermanseco}
    E^\mathrm{S}(Z)= E^\mathrm{TF}(Z)+ \frac{Z^2}{2}-\gamma_SZ^\frac53+o(Z^\frac53)
  \end{equation}
  and established in a series of papers by Fefferman and Seco
  \cite{FeffermanSeco1990O,FeffermanSeco1992,FeffermanSeco1993,FeffermanSeco1994,FeffermanSeco1994T,FeffermanSeco1994Th,FeffermanSeco1995}.

  Unfortunately we cannot present any of those results here as their
  size --- in particular the latter one  --- transcends the limitations
  of a proceedings contribution.
\item[\textbf{The convergence of the density on the Thomas--Fermi
    scale:}] Hand in hand with the asymptotic energetic exactness
  there is also convergence of the quantum density. Suppose that
  $\rho_Z^S$ is a sequence of reduced one-particle densities of ground
  states of $H_{-Z/|\cdot|,Z}$ as $Z\to\infty$. Then
  \begin{equation}
    \int_M\rd x\, Z^{-2}\rho_Z^\mathrm{S}(Z^{-\frac13}x)
    \to\int_M\rd x\,\rho_1(x)
  \end{equation}
  for every bounded measurable set $M$ where $\rho_1$ is the minimizer of the
  Thomas--Fermi functional for ${Z=1}$ (Baumgartner 
  \cite{Baumgartner1976} and Lieb and Simon
  \cite{LiebSimon1977}). This weak convergence with a large set of
  test function can be actually supplemented by a convergence in
  Coulomb norm (see Merz et al.\  \cite{MerzSiedentop2019} in the context
  of relativistic quantum mechanics which, however, holds also in the
  nonrelativistic context).
\end{description}

We conclude with another apparently strange fact besides the
nonexistence of negative ions. It concerns the molecular Thomas--Fermi
functional which is like \eqref{TFF} but with $Z/|\gx|$ replaced by
the potential of all $K$ nuclei $\sum_{k=1}^KZ_k/|\gx-\gR_k|$ and the
nuclear-nuclear repulsion ${\sum_{1\leq k<l\leq K}Z_kZ_l/|\gR_k-\gR_l|}$
added. Teller \cite{Teller1962} showed that the molecular Thomas--Fermi
functional is for all $\rho\in \cI$ and all pairwise different nuclear
positions $\gR_1,\ldots,\gR_K\in\rz^3$ always bigger than the sum of the
atomic energies $E^\mathrm{TF}(Z_1)+\cdots +E^\mathrm{TF}(Z_K)$.  (For a
detailed proof see Lieb and Simon \cite{LiebSimon1977}.)

Although this looks at first sight completely unphysical, it expresses
an important fact: the energy is bounded from below by a linear
quantity in the number of particles involved. Together with the
Lieb--Thirring inequality this is an essential input for showing that the
thermodynamic limit of Coulomb systems exists (Lieb and Lebowitz
\cite{LiebLebowitz1972}), since the energy is an extensive quantity,
i.e., it is proportional to the amount of matter.

\subsection{The Thomas--Fermi--Weizs\"acker functional\label{Weizsacker}}
Already the heuristic derivation of the Thomas--Fermi functional suggests
that the Thomas--Fermi approximation might fail when the potential does
change rapidly. This is also reflected in the proof of Theorem
\ref{satzenergieasymptotik} where the leading error term occurs
because of the nuclear singularity. Weizs\"acker realized this problem
in the context of nuclear physics and suggested to add a term which
scales like the kinetic energy, is rotational and translation
invariant, and penalizes changes of the density, i.e., is nonnegative
and vanishes where the density is constant. He suggested the following
functional
\begin{equation}
  \label{tfw}
  \cE^\mathrm{TFW}_Z(\rho)
  := \frac{\lambda}{2} \int_{\rz^3}\rd\gx\, |\grad\sqrt\rho|^2
  + \cE^\mathrm{TF}_Z(\rho),
\end{equation}
known as the Thomas--Fermi--Weizs\"acker functional, defined on
$\cA$. Weiz\-s\"acker's original choice was $\lambda=1$. A derivation by
the gradient expansion of the Hohenberg--Kohn functional yields
$\lambda=\frac19$ (Kirzhnits \cite{Kirzhnits1957}) whereas adaptation to
numerical computations yields $\frac15$ (Yonei and Tomishima
\cite{YoneiTomishima1965}). We will comment on yet another adaptation
in the context of the Scott correction.

Benguria \cite{Benguria1979} pioneered the mathematical investigation
of the TFW-functional. We will follow --- with minor modifications ---
Benguria et al.\  \cite{Benguriaetal1981} for the basic results: We
remark that $\grad{\sqrt\rho}\in L^2(\rz^3)$ (and $\rho \in L^1(\rz^3)$ and
therefore decaying at infinity) implies $\rho\in L^3(\rz^3)$ by the
Sobolev inequality. Since $\rho\in L^1(\rz^3)$, it follows by
interpolation that $\rho \in L^\frac53(\rz^3)$. By the same argument
$\rho\in L^\frac65(\rz^3)$ and therefore by the
Hardy--Littlewood--Sobolev inequality
$D[\rho] \leq \const \|\rho\|_\frac65<\infty$. Thus $\rho\in \cI$.

This implies that all the terms are well defined and, moreover,
\begin{equation}
  \label{wunten}
  \inf\bigl\{ \cE^\mathrm{TFW}_Z(\cA)\bigr\}
  \geq \inf\bigl\{\cE^\mathrm{TF}(\cI)\bigr\}>-\infty. 
\end{equation}

\begin{lemma}
  The Thomas--Fermi--Weizs\"acker functional and its restrictions to
  $\cA_N$ and $\cA_{\partial N}$ are strictly convex.
\end{lemma}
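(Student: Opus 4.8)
The plan is to decompose $\cE^\mathrm{TFW}_Z=W+\cE^\mathrm{TF}_Z$ with the Weizs\"acker term
\[
  W[\rho]:=\frac{\lambda}{2}\int_{\rz^3}\rd\gx\,|\grad\sqrt\rho|^2,
\]
and to reduce everything to the convexity of the single new ingredient $W$. The text preceding the lemma shows $\cA\subset\cI$; hence, once $\cA$ is known to be convex, $\cE^\mathrm{TF}_Z$ is \emph{strictly} convex on $\cA$ by Lemma \ref{tfstrengkonvex} (a strictly convex functional stays strictly convex when restricted to a convex subset). Since the sum of a strictly convex and a convex functional is strictly convex, and since the constraints $\int_{\rz^3}\rho\le N$ and $\int_{\rz^3}\rho=N$ defining $\cA_N$ and $\cA_{\partial N}$ are affine, it suffices to prove that the set $\cA$ is convex and that $W$ is convex on it; the two restrictions then follow by the same argument.

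The heart of the matter is the pointwise inequality, valid almost everywhere for $\rho_t:=(1-t)\rho_0+t\rho_1$ with $t\in[0,1]$,
\[
  |\grad\sqrt{\rho_t}|^2\;\le\;(1-t)\,|\grad\sqrt{\rho_0}|^2+t\,|\grad\sqrt{\rho_1}|^2.
\]
On the set $\{\rho_t>0\}$ one writes $|\grad\sqrt\rho|^2=|\grad\rho|^2/(4\rho)$, so that this is precisely the joint convexity in $(a,\vec b)\in\rz_+\times\rz^3$ of the perspective function $(a,\vec b)\mapsto|\vec b|^2/a$. I would prove it either from the representation $|\vec b|^2/a=\sup_{\vec c\in\rz^3}\bigl(2\,\vec b\cdot\vec c-a\,|\vec c|^2\bigr)$, which exhibits the integrand as a supremum of functions affine in $(\rho,\grad\rho)$, or directly by Cauchy--Schwarz applied coordinatewise: with $s:=1-t$ one has $\bigl(s(b_0)_j+t(b_1)_j\bigr)^2\le(sa_0+ta_1)\bigl(s(b_0)_j^2/a_0+t(b_1)_j^2/a_1\bigr)$ for each component, and summing over $j$ and dividing by $\rho_t$ gives the stated estimate.

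The step I expect to be the main obstacle is the measure-theoretic justification around the zero set $\{\rho_t=0\}$ and the chain rule for the weak gradient of $\sqrt{\rho_t}$. Here I would first argue that $\sqrt{\rho_t}\in H^1(\rz^3)$: it lies in $L^2$ because $\|\sqrt{\rho_t}\|_2^2=\int_{\rz^3}\rho_t=(1-t)\int_{\rz^3}\rho_0+t\int_{\rz^3}\rho_1<\infty$, and the standard Sobolev chain rule (see, e.g., Lieb and Loss \cite{LiebLoss1996}) gives its weak gradient as $\grad\rho_t/(2\sqrt{\rho_t})$ on $\{\rho_t>0\}$, while $\grad\sqrt{\rho_t}=0$ almost everywhere on $\{\rho_t=0\}$ by the general fact that the gradient of an $H^1$ function vanishes a.e.\ on any of its level sets. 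The displayed pointwise inequality then holds a.e., its right-hand side is integrable, and integration yields $W[\rho_t]\le(1-t)W[\rho_0]+t\,W[\rho_1]<\infty$. This simultaneously shows that $\cA$ is convex and that $W$ is convex on it.

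Finally, assembling the pieces, for $\rho_0\neq\rho_1$ in $\cA$ and $t\in(0,1)$ the strict convexity of $\cE^\mathrm{TF}_Z$ from Lemma \ref{tfstrengkonvex} gives $\cE^\mathrm{TF}_Z(\rho_t)<(1-t)\cE^\mathrm{TF}_Z(\rho_0)+t\,\cE^\mathrm{TF}_Z(\rho_1)$, while $W[\rho_t]\le(1-t)W[\rho_0]+t\,W[\rho_1]$; adding the two yields the strict inequality for $\cE^\mathrm{TFW}_Z$. The identical computation on the convex sets $\cA_N$ and $\cA_{\partial N}$ completes the proof.
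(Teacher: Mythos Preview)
Your proposal is correct and follows essentially the same route as the paper: reduce to the strict convexity of $\cE^\mathrm{TF}_Z$ from Lemma~\ref{tfstrengkonvex} plus the (ordinary) convexity of the Weizs\"acker term, and prove the latter via a pointwise Cauchy--Schwarz argument. The paper presents the pointwise step slightly differently, writing $\psi_3\grad\psi_3=\sqrt{\alpha}\psi_1\,\sqrt{\alpha}\grad\psi_1+\sqrt{1-\alpha}\psi_2\,\sqrt{1-\alpha}\grad\psi_2$ with $\psi_i=\sqrt{\rho_i}$ and applying the Schwarz inequality directly; this is equivalent to your perspective-function argument and has the minor advantage of sidestepping the division by $\rho_t$ on the zero set that you (rightly) flagged and then handled.
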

\begin{proof}
  Obviously the sets $\cA$, $\cA_N$, and $\cA_{\partial N}$ are
  convex. Moreover, the Thomas--Fermi functional is strictly convex by
  Lemma \ref{tfstrengkonvex}. Thus it suffices to show convexity of
  the Weizs\"acker term. To this end pick $\alpha\in(0,1)$ and
  $\rho,\sigma\in\cA$. We set $\psi_1:=\sqrt\rho$,
  $\psi_2:=\sqrt\sigma$, and
  $\psi_3:=\sqrt{\alpha\rho+(1-\alpha)\sigma}$. Then
  \begin{align}
    \psi_3\grad\psi_3
    &= \sqrt\alpha\psi_1\sqrt{\alpha}\grad\psi_1
      +\sqrt{1-\alpha}\psi_2\sqrt{1-\alpha}\grad\psi_2\nonumber\\
    &\leq \sqrt{\alpha \rho+(1-\alpha)\sigma}
    \sqrt{\alpha|\grad\psi_1|^2+(1-\alpha)|\grad\psi_2|^2}
  \end{align}
  by the Schwarz inequality. Thus, we have after integration
  \begin{equation}
  \int_{\rz^3}\Bigl|\grad\sqrt{\alpha\rho+(1-\alpha)\sigma}\Bigr|^2\leq
  \alpha
  \int_{\rz^3}\bigl|\grad\sqrt\rho\bigr|^2
  +(1-\alpha)\int_{\rz^3}\bigl|\grad\sqrt\sigma\bigr|^2    
  \end{equation}
  which is the convexity of the Weizs\"acker term.
\end{proof}
As usual, the strict convexity implies that there is at most one
minimizer of the functional. The existence of the minimizer is shown
again by weak compactness. The argument is more elaborate but in
spirit similar to the Thomas--Fermi case. We skip the proof and merely
report the result:
\begin{theorem}
  The Thomas--Fermi--Weizs\"acker functional has a unique minimizer
  $\rho_Z$ on $\cA$ with $N_c:= \int_{\rz^3}\rho_Z>Z$. Moreover, for
  $N\leq N_c$ it has a unique minimizer $\rho_{Z,N}$ on
  $\cA_{\partial N}$ whereas for $N>N_c$ there is no minimizer on
  $\cA_{\partial N}$. The minimizers fulfill the Euler equations
  \begin{align}
    \label{TFWU}
    \biggl(-\frac\lambda2\Delta
    + \gtf \rho_Z^\frac43-\varphi_{\rho_Z}\biggr)\sqrt{\rho_Z}
    &=0,\nonumber\\  
    \biggl(-\frac\lambda2\Delta
    + \gtf \rho_{Z,N}^\frac43-\varphi_{\rho_{Z,N}}\biggr)\sqrt{\rho_{Z,N}}
    &= -\mu_{Z,N}\sqrt{\rho_{Z,N}}
  \end{align}
  with some $\mu_{Z,N}>0$.
\end{theorem}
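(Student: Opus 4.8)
The plan is to run the direct method of the calculus of variations, closely paralleling the Thomas--Fermi arguments above, with the nonlinear Weizs\"acker term and the determination of the binding number $N_c$ supplying the genuinely new work. Uniqueness requires no effort: by the strict convexity of $\cE^\mathrm{TFW}_Z$ established just above, two distinct minimizers on any of the convex sets $\cA$, $\cA_N$, $\cA_{\partial N}$ would contradict minimality, exactly as for $\cE^\mathrm{TF}_Z$. For existence I would first fix $N$ and minimize over $\cA_N$. Given a minimizing sequence $\rho_n\in\cA_N$, the nonnegativity of the Weizs\"acker term together with \eqref{untereSchranke} applied to the dominated Thomas--Fermi part keeps $\|\grad\sqrt{\rho_n}\|_2$, $\|\rho_n\|_{5/3}$ and $\sqrt{D[\rho_n]}$ bounded, while $\|\sqrt{\rho_n}\|_2^2=\int_{\rz^3}\rho_n\le N$. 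Hence $\sqrt{\rho_n}$ is bounded in $H^1(\rz^3)$, and after passing to a subsequence $\sqrt{\rho_n}\rightharpoonup\psi$ in $H^1$, $\rho_n\rightharpoonup\rho$ in $L^{5/3}$ and in $\cC$, and $\sqrt{\rho_n}\to\psi$ almost everywhere by Rellich; the a.e.\ limit identifies $\rho=\psi^2$ with $\sqrt\rho=\psi\in H^1$, so $\rho\in\cA_N$.

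Lower semicontinuity then goes through term by term exactly as for $\cE^\mathrm{TF}_Z$: the gradient term and $D[\cdot]$ are weakly lower semicontinuous by convexity and the Thomas--Fermi term by Fatou, while the attractive term in fact converges, $\int_{\rz^3}\rho_n/|\gx|\to\int_{\rz^3}\rho/|\gx|$, on splitting $1/|\gx|=V_k+V_l$ as in \eqref{vl} and using weak $L^{5/3}$-convergence against $V_k\in L^{5/2}$ and weak $\cC$-convergence against $\sigma$ for $V_l$. Thus $\rho$ minimizes on $\cA_N$. Writing $E(N):=\inf\cE^\mathrm{TFW}_Z(\cA_N)$, one checks as in the Thomas--Fermi case that $E$ is convex, nonincreasing and bounded below. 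That the infimum over all of $\cA$ is attained — equivalently that $N_c$ is finite, i.e.\ the atom binds only finitely many electrons — is an excess-charge bound in the spirit of the bound on ionization, which I would read off from the Euler equations; calling the resulting global minimizer $\rho_Z$ and setting $N_c:=\int_{\rz^3}\rho_Z$, one gets $E(N)=\inf_M E(M)$ precisely for $N\ge N_c$, while $E$ is strictly decreasing for $N<N_c$ (equality on an interval would, by uniqueness, force two minimizers to coincide). Nonexistence on $\cA_{\partial N}$ for $N>N_c$ then follows exactly as in the last Thomas--Fermi theorem, by diluting the surplus $N-N_c$ to infinity to see that $\inf\cE^\mathrm{TFW}_Z(\cA_{\partial N})=E(N_c)$ is not attained.

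For the Euler--Lagrange equations I would work in the variable $\psi=\sqrt\rho\in H^1(\rz^3)$, where $\psi\mapsto\cE^\mathrm{TFW}_Z(\psi^2)$ is Fr\'echet differentiable; vanishing of the first variation of the unconstrained minimizer gives the first line of \eqref{TFWU}, while on $\cA_{\partial N}$ the constraint $\int_{\rz^3}\psi^2=N$ produces a Lagrange multiplier $\mu_{Z,N}$, yielding the second line. Since $E$ is nonincreasing, $\mu_{Z,N}\ge0$, and it is strictly positive for $N<N_c$: were $\mu_{Z,N}=0$ there, then $\rho_{Z,N}$ would solve the unconstrained equation and hence equal $\rho_Z$ by uniqueness, forcing $N=N_c$. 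The same dichotomy shows that for $N\le N_c$ the $\cA_N$-minimizer saturates its mass constraint and so lies in $\cA_{\partial N}$, since otherwise it would again coincide with $\rho_Z$, of mass $N_c>N$.

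The heart of the theorem, and the step I expect to be hardest, is the strict inequality $N_c>Z$, which separates Thomas--Fermi--Weizs\"acker theory from Thomas--Fermi theory, where the critical mass is exactly $Z$. By uniqueness and rotation invariance $\rho_Z$ is spherically symmetric, so Newton's theorem gives $\varphi_{\rho_Z}(\gx)=\frac{1}{|\gx|}\bigl(Z-\int_{|\gy|<|\gx|}\rd\gy\,\rho_Z(\gy)\bigr)-\int_{|\gy|>|\gx|}\rd\gy\,\rho_Z(\gy)/|\gy|$. I would argue by contradiction: if $N_c\le Z$ then $\int_{|\gy|<|\gx|}\rd\gy\,\rho_Z(\gy)\le Z$, so $\varphi_{\rho_Z}$ is nonnegative and, when $N_c<Z$, carries a full Coulomb tail $(Z-N_c)/|\gx|$ (and remains strictly positive when $N_c=Z$). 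Inserting this into the first equation of \eqref{TFWU} and using that the local nonlinear term is negligible against $\varphi_{\rho_Z}$ at infinity, the equation for $\sqrt{\rho_Z}$ behaves there like a zero-energy Schr\"odinger equation in the attractive potential $-\varphi_{\rho_Z}$, whose solutions cannot decay fast enough for $\rho_Z=(\sqrt{\rho_Z})^2$ to be integrable; this contradicts $N_c<\infty$. Turning this heuristic decay balance into a rigorous lower bound on $\sqrt{\rho_Z}$ — controlling the nonlinear term uniformly and constructing suitable sub- and supersolutions for the radial equation — is the main technical obstacle; for these details I would follow Benguria, Brezis and Lieb \cite{Benguriaetal1981}.
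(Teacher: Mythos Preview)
The paper does not actually prove this theorem: after observing that uniqueness follows from strict convexity it states that ``the existence of the minimizer is shown again by weak compactness \ldots\ more elaborate but in spirit similar to the Thomas--Fermi case.\ We skip the proof and merely report the result'', deferring everything to Benguria, Brezis and Lieb \cite{Benguriaetal1981}. Your sketch is fully in line with that indication and in fact supplies considerably more detail than the paper does, correctly isolating the strict inequality $N_c>Z$ as the genuinely hard step and deferring its sub/supersolution analysis to \cite{Benguriaetal1981}.
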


Thus, in contrast to Thomas--Fermi theory, there are negative ions in
Thomas--Fermi--Weizs\"acker theory. Benguria and Lieb
\cite{BenguriaLieb1985} gave an estimate on the maximal negative
ionization. They show
\begin{equation}
  \label{tfwuberschuss}
  N_c\leq Z+270.74 \left(\frac{\lambda}{2\gtf}\right)^\frac32
\end{equation}
(actually times the number of of nuclei in the molecular case). Using
$\lambda=\frac15$ as suggested by Yonei and Tomishima
\cite{YoneiTomishima1965} \eqref{tfwuberschuss} becomes
$N_c\leq Z + 0.82$ which is not far from the physical fact that there
are no doubly or higher charged negative ions (Massey
\cite{Massey1976}).

We do not prove the bound \eqref{tfwuberschuss} here. Instead, we
offer the famous --- again unpublished --- argument of Benguria
developed in Thomas--Fermi theory. It gives a worse bound, but is
relatively short and widely used, e.g., it generalizes to the
Hartree--Fock and quantum case where it has been used by Lieb
\cite{Lieb1984}.
\begin{theorem}
  \label{rafw}
  \begin{equation}
    \label{RafaelW}
    N_c<2Z.
  \end{equation}
\end{theorem}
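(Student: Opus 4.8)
The plan is to run Benguria's variational argument directly on the Euler--Lagrange equation for the \emph{unconstrained} minimizer $\rho_Z$ on $\cA$. By the first line of \eqref{TFWU}, with $u:=\sqrt{\rho_Z}$ this equation reads
\begin{equation*}
  -\frac\lambda2\Delta u + \gtf\rho_Z^{4/3}u - \varphi_{\rho_Z}\,u = 0,
  \qquad \varphi_{\rho_Z} = \frac{Z}{|\gx|}-\rho_Z*|\gx|^{-1}.
\end{equation*}
First I would test this identity against the nonnegative multiplier $|\gx|\,u$ and integrate over $\rz^3$, writing $N_c=\int_{\rz^3}u^2$. The nuclear term collapses, since $(Z/|\gx|)u\cdot|\gx|u=Zu^2$, contributing $-ZN_c$; the Thomas--Fermi term contributes the manifestly nonnegative $\gtf\int_{\rz^3}|\gx|\rho_Z^{4/3}u^2$; and the Hartree term contributes the weighted Coulomb double integral with kernel $|\gx|/|\gx-\gy|$. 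For the kinetic term I would integrate by parts, using $\grad|\gx|=\gx/|\gx|$ and $\grad\cdot(\gx/|\gx|)=2/|\gx|$, which turns $-\tfrac\lambda2\int|\gx|u\Delta u$ into $\tfrac\lambda2\int|\gx||\grad u|^2-\tfrac\lambda2\int u^2/|\gx|$.

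Collecting everything, the tested equation becomes
\begin{align*}
  ZN_c &= \frac\lambda2\int_{\rz^3}|\gx|\,|\grad u|^2\,\rd\gx
  -\frac\lambda2\int_{\rz^3}\frac{u^2}{|\gx|}\,\rd\gx\\
  &\quad +\gtf\int_{\rz^3}|\gx|\,\rho_Z^{4/3}u^2\,\rd\gx
  +\int_{\rz^3}\!\int_{\rz^3}\frac{|\gx|\,\rho_Z(\gx)\rho_Z(\gy)}{|\gx-\gy|}\,\rd\gx\,\rd\gy .
\end{align*}
The last integral I would symmetrize: replacing $|\gx|$ by $\tfrac12(|\gx|+|\gy|)$ and invoking the triangle inequality $|\gx|+|\gy|\ge|\gx-\gy|$, which is strict off a null set, bounds it \emph{strictly} below by $\tfrac12\bigl(\int_{\rz^3}\rho_Z\bigr)^2=\tfrac12N_c^2$. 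The Thomas--Fermi integral is simply dropped, being nonnegative.

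The delicate point -- and the one with no analogue in pure Thomas--Fermi theory, where the argument is a one-line computation -- is the pair of Weizs\"acker terms $\tfrac\lambda2\int|\gx||\grad u|^2-\tfrac\lambda2\int u^2/|\gx|$, the second of which carries the wrong sign. The key observation is that this combination is nonnegative by the sharp weighted Hardy inequality $\int_{\rz^3}|\gx|\,|\grad u|^2\ge\int_{\rz^3}u^2/|\gx|$, whose constant is exactly one. I would prove it by reduction to one dimension: for each fixed direction the radial piece obeys $\int_0^\infty r^3|u'|^2\,\rd r\ge\int_0^\infty r\,u^2\,\rd r$, which follows from the integration by parts $\int_0^\infty r\,u^2\,\rd r=-\int_0^\infty r^2 u u'\,\rd r$ together with Cauchy--Schwarz, while the angular part of $|\grad u|^2$ only adds a nonnegative contribution. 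Because the constant is precisely one, the two $\lambda$-terms cancel in the favourable direction and drop out; note in particular that spherical symmetry of the minimizer is \emph{not} needed.

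Combining the three estimates yields $ZN_c>\tfrac12N_c^2$, and since $N_c>0$ this is exactly $N_c<2Z$, the claimed bound \eqref{RafaelW}. The remaining work is routine and I would relegate it to a remark: justifying the integrations by parts (boundedness of $\rho_Z$ near the origin and decay of $\rho_Z$ and $\grad\sqrt{\rho_Z}$ at infinity, which come from the existence and regularity theory already underlying \eqref{TFWU}) and the convergence of the weighted integrals. I expect the genuine obstacle to be precisely the control of the Weizs\"acker gradient term through the weighted Hardy inequality with sharp constant one; everything else mirrors Benguria's Thomas--Fermi computation, in which that term is absent.
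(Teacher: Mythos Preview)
Your proposal is correct and follows essentially the same route as the paper: multiply the Euler equation by $|\gx|\sqrt{\rho_Z}$, integrate, drop the Thomas--Fermi term, symmetrize the Hartree double integral and use the triangle inequality, and control the Weizs\"acker contribution by a Hardy-type inequality.

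The only difference is packaging. The paper does not integrate the kinetic term by parts explicitly; instead it first records Lieb's operator inequality $-\Delta\,|\gx|+|\gx|\,(-\Delta)>0$, proved by the substitution $g=\sqrt{|\gx|}\,u$ together with the \emph{standard} Hardy inequality $-\Delta\ge\tfrac{1}{4|\gx|^2}$, and then observes that for real $u$ one may replace $|\gx|(-\Delta)$ by half this anticommutator. Your computation $\int|\gx|\,|\grad u|^2-\int u^2/|\gx|\ge 0$ is literally the same statement after one integration by parts (indeed $\int|\grad(\sqrt{|\gx|}u)|^2=\int|\gx||\grad u|^2-\tfrac34\int u^2/|\gx|$, so your weighted Hardy with constant~$1$ is equivalent to standard Hardy applied to $\sqrt{|\gx|}u$). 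Your direct Cauchy--Schwarz proof of the weighted form is a pleasant alternative to the paper's commutator identity, and your remark that spherical symmetry is not needed is a genuine observation the paper does not make.
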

\begin{proof}
  We first we note an observation by Lieb \cite{Lieb1984}, namely 
  \begin{equation}
    \label{LiebHardy}
    -\Delta|\cdot|+|\cdot|(-\Delta)>0.
  \end{equation}
  We prove this inequality by recognizing that it is merely a recast
  of Hardy's inequality. For, say, $f\in C_0^\infty(\rz^3)$, we have
  \begin{align}
       -\bigl(f,(\Delta|\gx|+|\gx|\Delta)f\bigr)
    &= {\left(f,\left\{\sqrt{|\cdot|}(-\Delta)\sqrt{|\cdot|}
      +\left[\sqrt{|\cdot|},\left[\sqrt{|\cdot|},
      -\Delta\right]\right]\right\}f\right)}\nonumber\\
    &={\left(\sqrt{|\cdot|}f,
      \left(-\Delta-\frac{1}{4|\cdot|^2}\right)\sqrt{|\cdot|}f\right)}>0
  \end{align}
  unless $f=0$. (Here $[A,B]:=A\circ B-B\circ A$ denotes the commutator.)

  In passing we note that \eqref{LiebHardy} can be generalized to the
  relativistic case, i.e., $\sqrt{-\Delta}$ instead of $-\Delta$ (see
  Lieb \cite{Lieb1984} for a restricted validity, Dall'Acqua and
  Solovej \cite{Dall'AcquaSolovej2010} for the full inequality and
  Handrek et al.\  \cite{HandrekSiedentop2013} for an extension to the
  two-dimensional case.) In fact a much larger class of such
  anticommutator inequalities is true (Chen et al.\ 
  \cite{ChenSiedentop2013}).

  We prove \eqref{RafaelW} by multiplying \eqref{TFWU} by
  $\sqrt{\rho(\gx)}|\gx|$ followed by integration:
  \begin{align}
    \label{mulmitx}
  0&=\int_{\rz^3}\!\!\rd \gx\,\biggl( -|\gx|\sqrt{\rho_Z(\gx)}
    \frac\lambda2\Delta\sqrt{\rho_Z(\gx)}
    + \gtf|\gx|\rho_Z(\gx)^\frac53
    -|\gx|\rho_Z(\gx)\varphi_{\rho_Z}(\gx)\biggr)\nonumber\\
    & > -Z\int_{\rz^3}\rho_Z + \int_{\rz^3}\rd \gx\int_{\rz^3}\rd \gy\,
      \frac{|\gx|}{|\gx-\gy|}\rho_Z(\gx)\rho_Z(\gy) \nonumber\\
    & = -Z\int_{\rz^3}\rho_Z + \frac12\int_{\rz^3}\rd \gx\int_{\rz^3}\rd \gy\,
      \frac{|\gx|+|\gy|}{|\gx-\gy|}\rho_Z(\gx)\rho_Z(\gy)\geq -Z N_c +\frac12N_c^2
  \end{align}
  where we have used that the first summand is real allowing to
  replace $|\cdot|(-\Delta)$ by half of its anticommutator, and then
  use the inequality \eqref{LiebHardy}. In the last
  step, we used the triangle inequality and
  $\int_{\rz^3} \rho_Z=N_c$. Rearranging terms gives the desired
  inequality.
\end{proof}

To conclude this section we expand on the above remark about the
wide applicability of Benguria's idea: a similar type of argument
shows that any minimizer $\gamma$ of the Hartree--Fock functional on
$\cD$ (see Section \ref{Abschnitt:HF}) fulfills the inequality
\begin{equation}
  \label{RafaelH}
  \tr (\gamma)<2Z+1
\end{equation}
and that the Hamiltonian $H_{-Z/|\cdot|,N}$ has no bound states, if
$N\geq2Z+1$ (Lieb \cite{Lieb1984}). The latter implies that there is no
doubly charged negative hydrogen.

\subsection{The Engel--Dreizler functional}

The known limiting theorem, as well as the heuristic derivation, of
Thomas--Fermi and related theories assume large particle
numbers. This, in turn requires for atoms also large atomic number
$Z$, since, as we saw, only large $Z$ atoms allow for large number $N$
of electrons to be bound. However, this renders a nonrelativistic
treatment physically questionable: Using $\tfrac m2 v^2=-E$ and using
$E= -Z^2/2$, i.e., the hydrogenic ground state energy, yields an
estimated speed of $Z$ for the innermost electrons. This compares with
the speed of light $c=137.037$ in Hartree units (Michelson
\cite{Michelson1927}). Thus, for uranium, ${Z=92}$, the innermost
electrons reach roughly $0.67$ of the speed of light. This suggests
that relativistic effects should be taken into account which has been
realized early. However, the heuristic derivation along the lines of
the derivation of the nonrelativistic Thomas--Fermi functional yields
an atomic functional which is unbounded from below. The semiclassical
relativistic kinetic energy term is not strong enough to prevent
collapse of electrons into the nucleus. A review of these facts and
various ad hoc attempts to cure this problem can be found in Gombas's
classical book \cite{Gombas1949} and encyclopedia article
\cite{Gombas1956}.

Engel and Dreizler \cite{EngelDreizler1987} offered a systematic
solution to the problem. The functional which they propose reads in
the atomic case
\begin{equation}
\label{TFWD}
\cE^\mathrm{ED}_Z(\rho)
:= \W(\rho)+\TF(\rho)-\X(\rho) +\V(\rho). 
\end{equation}
The first summand on the right is an inhomogeneity correction of the
kinetic energy generalizing the Weizs\"acker correction.  Using the
abbreviation $p(\gx):= \bigl(3\pi^2 \rho(\gx)\bigr)^{1/3}$,
\begin{equation}
  \label{W}
  \W(\rho):=
  \int_{\rz^3}\rd\gx\,\frac{3\lambda}{8\pi^2}
  \bigl(\grad p(\gx)\bigr)^2c\,f\bigl(p(\gx)/c\bigr)^2
\end{equation}
with ${ f(t)^2:=t(t^2+1)^{-\frac12}+2t^2(t^2+1)^{-1}\arsinh(t)}$ where
$\arsinh$ is the inverse function of the hyperbolic sine and
$\lambda\in\rz_+$ is given by the gradient expansion as $\frac19$ but in
the nonrelativistic analog sometimes taken as an adjustable
parameter (Weizs\"acker \cite{Weizsacker1935}, Yonei and Tomishima
\cite{YoneiTomishima1965}, Lieb and Lieberman \cite{LiebLiberman1982}, Lieb
\cite{Lieb1982A}). 
The second summand is the relativistic generalization of the
Thomas--Fermi kinetic energy. It is
\begin{equation}
  \label{TF}
  \TF(\rho):=\int_{\rz^3}\rd\gx\,\frac{c^5}{8\pi^2}
  \tf\bigl(p(\gx)/c\bigr) 
\end{equation}
with
${\tf(t):=t(t^2+1)^{3/2}+t^3(t^2+1)^{1/2}-\arsinh(t)-\frac{8}{3}t^3}$.
The third summand is a relativistic generalization of the exchange
energy. It is
\begin{equation}
  \label{X}
  \X(\rho):= \int_{\rz^3}\rd \gx\,\frac{c^4}{8\pi^3} X\bigl(p(\gx)/c\bigr)
\end{equation}
with ${X(t):= 2t^4-3[t(t^2+1)^\frac12-\arsinh(t)]^2}$, and, eventually,
the last summand is the potential energy, namely the sum of the
electron-nucleus energy and the electron-electron energy. It is
\begin{equation}
  \V(\rho):= -Z\int_{\rz^3}\rd \gx\, \rho(\gx)|\gx|^{-1}+D[\rho].
\end{equation}

The Engel--Dreizler functional is defined on
\begin{equation}
  \label{domain}
  P:=\bigl\{\rho\in L^\frac43(\rz^3)\bigm|
  \rho\geq0,\ D[\rho]<\infty, F\circ p\in D^1(\rz^3)\bigr\}.
\end{equation}

Initial steps analyzing the functional have been carried out by Chen
et al.\  \cite{ChenSiedentop2020} and refined in \cite{Chenetal2020} (in
these proceedings). It is bounded from below for all nuclear charges
-- a surprising fact which is not even true for various quantum models
--, it yields in first order for $c/Z$ fixed the Thomas--Fermi theory,
and is stable in the sense of stability of matter
\cite{Siedentop2021}. A relativistic correction is expected for the
Scott correction of the functional.

\subsection{Density functionals in phase space}
The functionals of the one-particle density $\rho$ in position space
which we have considered so far can be generalized to functionals of
the one-particle density $f$ in phase space. They are of particular
interest, if the Hamiltonian of the system is no longer a sum of terms
which depend either on momentum or on position but a sum that also
contains mixed terms. Densities depending on both momentum and
position are also important, if the time dependence of the density is
of interest.

The set of measurable functions $f:\rz^6\to\rz_+$ is the set of
(semiclassical) densities in phase space. For a system of identical
fermions of $q$ spin states each the Pauli principle is
implemented by requiring the additional constraint $f\leq q$. The
integral $\int_{\rz^3}\rd\gx\int_{\rz^3}\dbar\gxi\, f(\gx,\gxi)$ is the
particle number of $f$. The fermionic phase space energy functional of
an atom of a fermionic quantum system with kinetic energy $T$,
external potential $V$, additional momentum dependent external
potential $\tilde V$, and interaction $W$ is
\begin{align}
  \label{spf}
  \cE(f)&:= \int_{\rz^6}\dbar \gxi\, \rd \gx\,
  \bigl[T(\gxi)+V(\gx)+\tilde V(\gxi)\bigr]f(\gxi,\gx)\nonumber\\
  &\mathrel{\phantom{=}}\mbox{}
  + \frac12\int_{\rz^6}\dbar \gxi\, \rd \gx\int_{\rz^6}\dbar
  \geta\,\rd \gy \,W(\gx,\gy)f(\xi,\gx)f(\geta,\gy)
\end{align}
with $f:\rz^6\to[0,q]$. We call
$\rho_f(\gx):= \int_{\rz^3}\dbar \gxi\, f(\gxi,\gx)$ the position density at
$\gx$ and $\tau_f(\gxi):= \int_{\rz^3}\dbar \gx\, f(\gxi,\gx)$ the
momentum density at $\gxi$.

This is, of course, formal only as long as we do not specify the
various terms and the allowed densities and show that all terms are
well defined. For definiteness and simplicity we will again consider a
generic choice only, namely an atom, i.e., we choose
$T(\gxi):=\gxi^2/2$, $V(\gx):=-Z/|\gx|$, $\tilde V=0$,
$W(\gx,\gy)=1/|\gx-\gy|$, $q=2$, and the domain of definition
\begin{equation}
  \mathcal{P}
  := \biggl\{f:\rz^6\to[0,2]\biggm| \int_{\rz^6}\dbar\gxi\,\rd\gx\,
  \bigl(1+|\gxi|^2\bigr)f(\gxi,\gx)<\infty,\ \rho_f\in\cC\biggr\}.
\end{equation}

The kinetic energy term is finite by definition as well as the
electron-electron repulsion, since it is equal to $D[\rho_f]$. We
decompose the electron-nucleus attraction as in \eqref{zerlegung}. The
long-range part $V_l$ is dominated as in Thomas--Fermi theory by
$D[\rho_f]$. It remains to control the short-range part:
\begin{align}
  &\mathrel{\phantom{=}}
    \int_{\rz^3}\rd\gxi\int_{|\gx|< R}\rd\gx \,
    \frac{1}{|\gx|\bigl(1+|\gxi|^2\bigr)^{\alpha}}
    \bigl(1+|\gxi|^2\bigr)^\alpha f(\gxi,\gx)\nonumber\\
  &\leq {\left(\int_{|\gx|<R} \frac{\rd \gx}{|\gx|^p}
    \int_{\rz^3}\frac{\dbar\gxi}{\bigl(1+|\gxi|^2\bigr)^{\alpha p}}\right)}^{\frac1p}
  {\left(\int_{\rz^6}\dbar\gxi\,\rd\gx\, \bigl(1+|\gxi|^2\bigr)^{\alpha q}
    f(\gxi,\gx)^q\right)}^{\frac1q}
  \nonumber\\
  &\leq 2^{\frac{10}{29}}{\left(\int_{|\gx|<R}
    \frac{\rd \gx}{|\gx|^\frac{29}{10}} \int_{\rz^3}
    \frac{\dbar\gxi}{\bigl(1+|\gxi|^2\bigr)^{\frac{19}{10}}}\right)}^{\frac{10}{29}}
    {\left(\int_{\rz^6}\dbar\gxi\,\rd\gx\,\bigl(1+|\gxi|^2\bigr)
    f(\gxi,\gx)\right)}^{\frac{19}{29}}\nonumber\\
  &<\infty
\end{align}
with ${p=\frac{29}{10}}$, $q={\frac{2}{19}}$, and ${\alpha =\frac{1}{q}}$.
This shows that all the terms are well defined.

\subsubsection{Marginal functionals: The position space}
One possibility to minimize $\cE$ is to split the minimization into two
steps. The classical way is to fix $\gx$ and to minimize --- in a first
step --- over the functions $f(\cdot,\gx)$. Any minimizer can obviously
be picked such that it takes the values $0$ and $2$ only and because
of the symmetry and the monotony of the Coulomb potential they can be
picked as characteristic functions (times $2$) of balls
$B_{R(\gx)}(0)$ centered at the origin with radius $R(\gx)$ aka Fermi
balls and Fermi radius. The relation between the radii and the density
is $R(\gx)=\sqrt[3]{3\pi^2\rho_f(\gx)}$. This yields
\begin{equation}
  \label{scgtf}
  \cE(f)\geq \cE^\mathrm{TF}_Z(\rho_f).
\end{equation}
The second step consists in minimizing the Thomas--Fermi functional as
described in Sec.~\ref{tftheorie}. Thus, the minimizer
$f_\mathrm{min}$ of $\cE$ on $\mathcal{P}$ is --- apart from the factor 2 ---
the characteristic function of the the classical allowed phase space
\begin{equation}
\bigl\{(\gxi,\gx)\in\rz^6\bigm|\gxi^2/2-\varphi_{\rho_Z}(\gx)<0\bigr\}  
\end{equation}
for the Thomas--Fermi potential $\varphi_{\rho_Z}$, i.e.,
\begin{equation}
  f_\mathrm{min}(\gxi,\gx)
  := 2\theta\bigl(-\bigl[\gxi^2/2-\varphi_{\rho_Z}(\gx)\bigr]\bigr).
\end{equation}

This shows that the infimum of the phase space functional $\cE$ is
equal to the infimum of the Thomas--Fermi functional. Thus $\cE$ is not
only well defined but also bounded from below yielding the same
energy.

\subsubsection{Marginal functions: The momentum space}
A less obvious way to minimize $\cE$ is to reverse the order, i.e., to
fix $\gxi$ in the first step and to minimize $\cE$ over the functions
$f(\gxi,\cdot)$. Again those minimizers take only the values $0$ and
$2$ and are characteristic functions of balls $B_{\tilde R(\gxi)}(0)$
centered at the origin with a now $\gxi$-dependent radius
$\tilde R(\gxi)$ with the relation
$\tilde R(\gxi)=\sqrt[3]{3\pi^2\tau(\gxi)}$ to the momentum density
$\tau$. The result is Englert's momentum space semiclassical
functional \cite{Englert1992}
\begin{align}
  \cE^\mathrm{E}_Z(\tau)
  &:= \int_{\rz^3} \rd\gxi\,
    \gxi^2\tau(\gxi)
    -  \frac32 \frac{Z}{\sqrt[3]{3\pi^2}}\int_{\rz^3}
    \rd\gxi\,\tau(\gxi)^{2/3}
    \nonumber\\
  &\mathrel{\phantom{=}}\mbox{}
  + \frac34\frac{1}{\sqrt[3]{3\pi^2}} \int_{\rz^3} \rd \gxi \int_{\rz^3} \rd
    \geta\, \bigl(\tau_<(\gxi,\geta)\tau_>(\gxi,\geta)^{2/3}-\tfrac15
    \tau_<(\gxi,\geta)^{5/3}\bigr)
\label{ImpFunk}
\end{align}
where $\tau_<(\gxi,\geta):= \min\{\tau(\gxi),\tau(\geta)\}$ and
$\tau_>(\gxi,\geta):= \max\{\tau(\gxi),\tau(\geta)\}$.

The second step is the minimization of $\cE_Z^\mathrm{E}$. As to be
expected it yields the same energy as the Thomas--Fermi functional.

A mathematical analysis of $\cE^\mathrm{E}_Z$ including the convergence
of the one-particle reduced momentum density of an atom described by
the Schr\"odinger equation to the minimizer of the hydrogenic Englert
functional has been carried through by Conta et
al.\ \cite{ContaSiedentop2015}.

At first sight $\cE^\mathrm{E}$ might look repelling: whereas the
kinetic energy term is simple, the two potential terms, in particular
the electron-electron repulsion term look unfamiliar. However, when
momentum-dependent potentials are present like for the Compton profile,
it simplifies the treatment. Another application is an alternate
investigation of the Scott correction (Cinal and Englert
\cite{CinalEnglert1992,CinalEnglert1993E}).

\subsubsection{Time-dependent equations \label{Wlassowundco}}
Runge and Gross \cite{RungeGross1984} argued for the existence of a
time-dependent analog of the stationary density functional
formalism. In particular they derived an Euler-type equation for the
density and the current. Fournais et al.\  \cite{Fournaisetal2016}
analyzed the argument pointing out various problems with singular
potentials, in particular Coulomb potentials. Nevertheless an equation
of Runge--Gross type generalizing Thomas--Fermi theory is known in the
literature since long --- although to some extent ignored by the
density functional community, e.g., Runge and Gross do not reference
it: Bloch \cite{Bloch1933} (see also Gombas \cite[\S
20]{Gombas1949}) introduced the following Euler-type equation
\begin{eqnarray}
   \label{eq:gtf}
   \partial_t\varphi_t=\frac12(\grad \varphi_t)^2 +\int\frac{\rd p}{\rho_t}
   - \frac{Z}{|\cdot|} + \rho_t*|\cdot|^{-1}  
 \end{eqnarray}
 supplemented by the continuity equation
 \begin{equation}
   \label{eq:kon}
   \partial_t\rho_t=\grad\cdot(\rho_t\grad\varphi_t).
 \end{equation}
 Here $\varphi$ is the potential of the velocity field $\gu$, i.e.,
 $\gu=-\grad \varphi$, $\rho$ is the density of electrons, and $p$ is
 the pressure as a function of $\rho$. The Thomas--Fermi choice for $p$
 is $p(\rho):=\frac15\gtf \rho^{5/3}$,
 i.e., we have
 \begin{equation}
   \label{eq:tf}
   \partial_t\varphi_t=\frac12(\grad \varphi_t)^2 + \gtf\rho_t^{2/3}
   - \frac{Z}{|\cdot|} + \rho_t*|\cdot|^{-1}
 \end{equation}
 which reduces to the stationary Thomas--Fermi equation \eqref{tf} when
 $\varphi$ is constant.

 There are existence results for such Euler equations with regular
 potentials. The Coulomb case, however, seems to be still open. There
 are, however, nonexistence results for highly charged negative
 ions. Chen et al \cite{ChenSiedentop2018}, following a strategy
 developed by Lenzmann and Lewin \cite{LenzmannLewin2013}, showed
\begin{theorem}
   \label{th:tf}
   Assume that $\varphi_t$ and $\rho_t$ is a weak solution of
   \eqref{eq:tf} and \eqref{eq:kon} with
   ${\int_{\rz^3}\rd\gx\,\tfrac12\rho_0(\gx)|\grad\varphi_0(\gx)|^2
   +\cE_\mathrm{TF}(\rho_0)<\infty}$, assume ${B\subset\rz^3}$ bounded
   and measurable, and set
   \begin{equation}
     N_\mathrm{TF}(t,B):= \int_B\rd \gx\,\rho_t(\gx)
   \end{equation}
  which is the number of electrons in $B$.  Then, in temporal average
  for large time, $N_\mathrm{TF}(t,B)$ does not exceed $4Z$, i.e.,
  \begin{equation}
    \label{behauptung-tf}
    \limsup_{T\to\infty} \frac{1}{T}\int_0^T\rd t \,N_\mathrm{TF}(t,B) \leq 4Z.
  \end{equation}
\end{theorem}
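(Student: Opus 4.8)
The plan is to lift the static Benguria argument used above for Theorem~\ref{rafw} into the time--dependent setting, following Lenzmann and Lewin. Everything rests on two conservation laws. The continuity equation \eqref{eq:kon} conserves the total mass $\int_{\rz^3}\rho_t$, and \eqref{eq:tf} together with \eqref{eq:kon} conserves the hydrodynamic energy
\[
\mathcal{E}_0:=\int_{\rz^3}\rd\gx\,\tfrac12\rho_t|\grad\varphi_t|^2+\cE_\mathrm{TF}(\rho_t),
\]
which is finite by hypothesis and equal to its value at $t=0$. Since $\cE_\mathrm{TF}(\rho_t)\geq E^\mathrm{TF}(Z)>-\infty$, the flow kinetic energy $\int\tfrac12\rho_t|\grad\varphi_t|^2$ is bounded uniformly in $t$; this uniform bound is the engine of the time average.

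First I would introduce a weighted moment $V(t):=\int_{\rz^3}w(\gx)\rho_t$ with $w$ a suitably regularised, eventually linearly growing radial weight modelled on $w(\gx)=|\gx|$. By \eqref{eq:kon} and one integration by parts, $\dot V(t)=-\int\rho_t\,\grad w\cdot\grad\varphi_t$, and Cauchy--Schwarz against the uniform kinetic bound shows that $\dot V$ stays bounded in $t$. Hence
\[
\frac1T\int_0^T\ddot V(t)\,\rd t=\frac{\dot V(T)-\dot V(0)}{T}\longrightarrow 0\qquad(T\to\infty),
\]
so that any pointwise \emph{lower} bound on $\ddot V$ must average to a nonpositive quantity.

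The substance is the computation of $\ddot V$. Differentiating $\dot V$ and eliminating $\partial_t\varphi_t$ through the Bernoulli equation \eqref{eq:tf} produces the hydrodynamic virial identity: the convective term $\int\rho_t|\gx|^{-1}\big(|\grad\varphi_t|^2-(\tfrac{\gx}{|\gx|}\cdot\grad\varphi_t)^2\big)$ and the pressure term $\tfrac45\gtf\int\rho_t^{5/3}|\gx|^{-1}$ are manifestly nonnegative, while the remaining contribution is the electrostatic virial $-\int\rho_t\,\tfrac{\gx}{|\gx|}\cdot\grad\Phi_t$ of the self-consistent potential $\Phi_t:=-Z|\gx|^{-1}+\rho_t*|\gx|^{-1}$, i.e.\ the potential of the net charge $\rho_t-Z\delta_0$. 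Because the kinetic pieces are nonnegative, $-\int\rho_t\,\tfrac{\gx}{|\gx|}\cdot\grad\Phi_t\le\ddot V$, so by the previous paragraph its time average is nonpositive. The point is now to bound this electrostatic virial from below by the local charge: splitting $\Phi_t$ into its nuclear and electronic parts and applying to the electron--electron piece exactly Benguria's symmetrisation --- writing the Coulomb kernel symmetrically and using $|\gx-\gy|\le|\gx|+|\gy|$ as in \eqref{mulmitx} --- one extracts a lower bound of the schematic form $\tfrac12 N_\mathrm{TF}(t,B)^2-2Z\,N_\mathrm{TF}(t,B)$, the quadratic term coming from the repulsion and the linear term from the attraction.

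Putting these together gives $\tfrac1T\int_0^T\big(\tfrac12 N_\mathrm{TF}(t,B)^2-2Z\,N_\mathrm{TF}(t,B)\big)\rd t\le o(1)$, and the elementary inequality $\big(\tfrac1T\int_0^T N_\mathrm{TF}\big)^2\le\tfrac1T\int_0^T N_\mathrm{TF}^2$ then yields $\limsup_{T\to\infty}\tfrac1T\int_0^T N_\mathrm{TF}(t,B)\,\rd t\le 4Z$; the factor $2$ over the static bound $N<2Z$ is precisely the loss incurred in trading the pointwise virial for its time average. The main obstacle is the rigorous treatment of the electrostatic virial: the nuclear field $\grad(Z|\gx|^{-1})$ is singular at the origin and, without the spherical symmetry available in the static problem, the separation of the enclosed charge (Newton's theorem) must be replaced by a genuine stress-tensor estimate; moreover all integrations by parts have to be justified for weak solutions of finite energy, for which $\grad\varphi_t$ is controlled only in the $\rho_t$-weighted $L^2$ norm, and the weight $w$ must be truncated so as to localise on $B$ while keeping its tail under control.
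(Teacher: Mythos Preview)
The paper does not actually prove this theorem: it only states the result and attributes it to Chen and Siedentop \cite{ChenSiedentop2018}, ``following a strategy developed by Lenzmann and Lewin \cite{LenzmannLewin2013}''. So there is no proof in the paper to compare against; I can only check your outline against that strategy.

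Your global architecture is the right one and matches the cited strategy: energy and mass conservation give a uniform bound on the flow kinetic energy $\int\tfrac12\rho_t|\grad\varphi_t|^2$; a moment $V(t)=\int w\rho_t$ has bounded first derivative, so $\tfrac1T\int_0^T\ddot V\to0$; the second derivative is computed from \eqref{eq:tf}--\eqref{eq:kon} and split into nonnegative convective and pressure pieces plus an electrostatic virial; and a Benguria-type symmetrisation of the latter, combined with Cauchy--Schwarz in time, yields the bound $4Z$.

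There is, however, a concrete gap in your choice of weight. With $w(\gx)=|\gx|$ the electrostatic virial you write, $-\int\rho_t\,\tfrac{\gx}{|\gx|}\cdot\grad\Phi_t$, does \emph{not} produce the balance $\tfrac12 N^2-2ZN$ you claim. The nuclear piece becomes $-Z\int\rho_t/|\gx|^2$, and after symmetrising the electron--electron piece one is left with $\tfrac12\iint\rho_t(\gx)\rho_t(\gy)\,\bigl(\tfrac{\gx}{|\gx|}-\tfrac{\gy}{|\gy|}\bigr)\cdot(\gx-\gy)\,|\gx-\gy|^{-3}$, which is nonnegative but carries an uncontrolled $|\gx-\gy|^{-3}$ and does not dominate a multiple of $N(t,B)^2$. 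The weight that makes the argument close --- in Lenzmann--Lewin for Schr\"odinger and in Chen--Siedentop for the present equations --- is (a regularised, truncated version of) $w(\gx)=|\gx|^3$. Then $\grad w=3|\gx|\gx$, the nuclear virial is $-3Z\int\rho_t$, and the symmetrised repulsion gives $\tfrac34\iint\rho_t(\gx)\rho_t(\gy)\,(|\gx|+|\gy|)\bigl[(|\gx|-|\gy|)^2+|\gx-\gy|^2\bigr]|\gx-\gy|^{-3}$, from which the inequalities $|\gx|+|\gy|\ge|\gx-\gy|$ and $(|\gx|-|\gy|)^2\ge0$ extract a clean $\tfrac34 N^2$ after localisation; this is what yields $N\le 4Z$ in the time average. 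Your convective and pressure terms are correct for $w=|\gx|$, but with $|\gx|^3$ they become $3\int\rho_t|\gx|\bigl(|\grad\varphi_t|^2+(\tfrac{\gx}{|\gx|}\cdot\grad\varphi_t)^2\bigr)$ and $\tfrac{12}{5}\gtf\int|\gx|\rho_t^{5/3}$, still nonnegative. So the fix is local: replace the linear weight by the cubic one and redo the virial computation; the rest of your outline stands.
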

This shows that the number of electrons that an atom can acquire is
bounded by $4Z$ generalizing the bound of the time-independent
setting. There is, however, a price paid: the generalization yields a
worse constant.

Whereas the generalization of the Thomas--Fermi equation to the
time-dependent case requires the introduction of the velocity field of the matter,
the transfer of the phase space theory as condensed in the variational
principle \eqref{spf} depending on the phase space density does not
require any extra functions. The natural generalizations is the Vlasov
equation (Vlasov \cite{Vlasov1938,Vlasov1968}). We formulate it --- for
notational simplicity --- in the atomic case only. We write
\begin{equation}
\gK(\gx) := -\grad V_\mathrm{tot}(\gx)= -Z\frac{\gx}{|\gx|^3} 
  + \int_{\rz^3}\rd \gy\,\rho_t(\gy) \frac{\gx-\gy}{|\gx-\gy|^3}
\end{equation}
for the force of the nucleus and the electron cloud exerted on an
electron. With this notation the Vlasov equation reads
  \begin{equation}
  \label{VPN}
  \partial_t f_t + \gxi\cdot \grad_{\gx} f_t + \gK\cdot \grad_{\gxi} f_t  =0.
\end{equation}

Suppose that $f_t$ is a weak solution of the Vlasov equation. Then
\begin{equation}\label{Energy}
  \cE_V(f_t) := \int_{\rz^6}\dbar\gxi \,\rd\gx\,
  \frac12 \gxi^2f_t(\gxi,\gx) 
  -\int_{\rz^3}\rd\gx\,V(\gx)\rho_t(\gx) + D[\rho]
\end{equation}
is called the energy. It is a conserved quantity.
 
As in the case of the time-dependent Thomas--Fermi theory Chen et
al.\ \cite[Theorem 1]{ChenSiedentop2018} proved --- again following
\cite{LenzmannLewin2013} --- an upper bound on the charge that can
remain in any fixed ball:
\begin{theorem}
  \label{maxion}
  Assume $f_t$ to be a weak solution of the Vlasov equation
  \eqref{VPN} of finite energy \eqref{Energy}, assume $B\subset\rz^3$
  bounded and measurable, and set
  \begin{equation}
  N_V(t,B):= \int_{\rz^3} \dbar\gxi \int_B\rd\gx\,f_t(\gxi,\gx)  
  \end{equation}
  which is the number of electrons in $B$.  Then in temporal average
  for large time $N_V(t,B)$ does not exceeds $4Z$, i.e.,
  \begin{equation}
    \label{behauptung}
    \limsup_{T\to\infty} \frac{1}{T}\int_0^T\rd t N_V(t,B) \leq 4Z.
  \end{equation}
\end{theorem}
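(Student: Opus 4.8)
The plan is to transplant Benguria's stationary multiplier argument, which was used above to prove Theorem~\ref{rafw}, into the dynamical setting by following the time-averaging technique of Lenzmann and Lewin \cite{LenzmannLewin2013}. The two conserved quantities are the workhorses: the energy $\cE_V(f_t)=\cE_V(f_0)$ is finite by hypothesis, and the total particle number $N:=\int_{\rz^6}\dbar\gxi\,\rd\gx\,f_t$ is constant in $t$. From finiteness of the energy I would extract the in-time a priori bounds on the kinetic term $\int|\gxi|^2f_t$, on $D[\rho_t]$, and on $\int\rho_t/|\gx|$ that are needed to discard the boundary contributions at the very end.

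The static core is the elementary pointwise inequality $|\gx|+|\gy|\ge|\gx-\gy|$, i.e. $\tfrac{|\gx|+|\gy|}{|\gx-\gy|}\ge1$, which already drove the last line of \eqref{mulmitx}: after weighting the electronic repulsion by $|\gx|$ and symmetrizing in $\gx\leftrightarrow\gy$ one obtains the charge-squared lower bound $\tfrac12\int\!\int\rho_t(\gx)\rho_t(\gy)\tfrac{|\gx|+|\gy|}{|\gx-\gy|}\ge\tfrac12(\int\rho_t)^2$, independently of how dilute $\rho_t$ is. First I would form the weighted phase-space moment generated by the radial field $\gx/|\gx|$, namely $J(t):=\int_{\rz^6}\dbar\gxi\,\rd\gx\,\tfrac{\gx}{|\gx|}\cdot\gxi\,f_t$ (equivalently $J=\tfrac{\rd}{\rd t}\int|\gx|\rho_t$). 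Differentiating $J$ along the flow \eqref{VPN} produces a manifestly nonnegative transverse-kinetic term together with the mean-field force contribution $\int(\gK\cdot\tfrac{\gx}{|\gx|})\rho_t$; symmetrizing the electron--electron part of $\gK$ in $\gx\leftrightarrow\gy$ exposes a repulsive term of the Benguria type, while the nuclear part contributes the attractive piece proportional to $-Z$. The outcome is a differential inequality of the schematic shape $\dot J(t)\ge -Z\,N_V(t,B)+\tfrac12 N_V(t,B)^2-\mathcal R(t)$, where the remainder $\mathcal R(t)$ collects the genuinely dynamical kinetic and error terms that have no stationary counterpart.

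Next I would integrate this inequality over $t\in[0,T]$ and divide by $T$. The left-hand side becomes $\tfrac1T\bigl(J(T)-J(0)\bigr)$, and here the conservation laws pay off: $J(t)$ is controlled by $(\int|\gxi|^2f_t)^{1/2}N^{1/2}$, so that after division by $T$ this boundary term vanishes as $T\to\infty$. What survives is a time-averaged Benguria-type inequality which, after an application of Jensen's inequality $\overline{N_V(\cdot,B)^2}\ge\overline{N_V(\cdot,B)}^2$, closes to give the claimed bound \eqref{behauptung}. The extra factor of two relative to the stationary bound $2Z$ of Theorem~\ref{rafw} is the price of replacing the exact Euler--Lagrange identity by its time-averaged dynamical surrogate, in which the momentum cross-terms can only be dominated after a further Cauchy--Schwarz step.

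The hard part will be making the formal moment computation legitimate. The weighted moment $\int|\gx|\rho_t$ need not be finite a priori, so I would first run the argument with $|\gx|$ replaced by a smooth truncation $w_\varepsilon$ having bounded gradient, and control the error terms $\mathcal R(t)$ generated by the Coulomb singularity at the origin and by the cut-off near infinity uniformly enough to survive the division by $T$, before removing $\varepsilon$; this is also what forces the localization to the fixed bounded ball $B$. A second delicate point is that $f_t$ is only a weak solution, so the manipulations --- integration by parts in $\gx$ and $\gxi$, symmetrization of the self-consistent force, and differentiation of $J$ in $t$ --- must be justified distributionally and through a suitable regularization of the transport equation, exactly as in \cite{ChenSiedentop2018,LenzmannLewin2013}.
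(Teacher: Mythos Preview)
The paper does not prove this theorem but merely cites \cite{ChenSiedentop2018}, who follow the Lenzmann--Lewin strategy \cite{LenzmannLewin2013}; your proposal is exactly that cited approach --- the radial moment $J(t)=\tfrac{\rd}{\rd t}\int|\gx|\rho_t=\int(\gx/|\gx|)\cdot\gxi\,f_t$, its derivative along the Vlasov flow, nonnegativity of the transverse kinetic piece, symmetrization of the self-consistent force, truncation of the weight, time-averaging with energy conservation to kill $J(T)/T$, and Jensen to close on the localized charge --- so you are aligned with the paper.

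One imprecision is worth flagging: your schematic inequality $\dot J\ge -Z\,N_V+\tfrac12 N_V^2-\mathcal R$ would, after Jensen, give $2Z$ rather than $4Z$. Differentiating $J$ brings in the Coulomb \emph{force}, so the nuclear contribution is $-Z\int\rho_t/|\gx|^2$ and the symmetrized electron--electron contribution carries the geometric factor $\bigl(\tfrac{\gx}{|\gx|}-\tfrac{\gy}{|\gy|}\bigr)\cdot(\gx-\gy)/|\gx-\gy|^3$, not the Benguria factor $(|\gx|+|\gy|)/|\gx-\gy|$ from the stationary argument. The extra factor of two enters when these force terms are bounded below against the localized charge in $B$, not from a Cauchy--Schwarz on ``momentum cross-terms'' as you suggest. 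This is where the actual work in \cite{ChenSiedentop2018} lies, and it is precisely where your outline is vaguest.
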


Actually, the solution of the Vlasov equation, if existing, can be
compared with with the solution of the Schr\"odinger equation when the external
potential is not too singular. (See, e.g., Petrat and Pickl
\cite{PetratPickl2016} and the references given there.)

\section{Functionals of the one-particle density matrix}
\subsection{The Hartree--Fock functional\label{Abschnitt:HF}}
We call
\begin{align}
  \cD&:= \bigl\{\gamma\in\gS^1\bigl(L^2(\Gamma)\bigr)\bigm|0\leq\gamma\leq1,\
          (-\Delta+1)\gamma\in\gS^1\bigl(L^2(\Gamma)\bigr)\bigr\},\nonumber\\
  \cD_N&:= \bigl\{\gamma\in\cD\bigm|\tr(\gamma)\leq N\bigr\},\nonumber\\
  \cD_{\partial N}&:= \bigl\{\gamma\in\cD\bigm|\tr(\gamma)= N\bigr\} \label{DDN}
\end{align}
the set of one-particle reduced density matrices with finite kinetic
energy, those with particle number not exceeding $N$, and those with
particle number equal to $N$.

We call
\begin{align}
  \label{hf}
    &\cE_Z^\mathrm{HF}:\cD\to\rz,\nonumber\\
    &\gamma\mapsto \tr\biggl(-\frac12\Delta\gamma\biggr)
    -\int_{\rz^3}\rd \gx\,\frac{Z\rho_\gamma(\gx)}{|\gx|}+ D[\rho_\gamma]
      -\underbrace{\frac12\int_\Gamma\rd x\int_\Gamma\rd y\,
      \frac{|\gamma(x,y)|^2}{|\gx-\gy|}}_{=:X[\gamma]}
\end{align}
the Hartree--Fock functional where $\rho_\gamma$ is the one-particle
density of $\gamma$ defined as in \eqref{dichte} and $\gamma(x,y)$ is
the density matrix's $\gamma$ integral kernel. The first summand is
the kinetic energy, the second is the attraction potential of the
nucleus, the third is the classical Coulomb potential of the charge
density $\rho_\gamma$, aka direct or Hartree term, the last term
is called the exchange energy of $\gamma$.

The functional is well defined which is obvious for the kinetic energy
by definition. For the second term we use the variational principle
for hydrogenic atoms: for every $\xi\in H^1(\Gamma)$,
$\ga\in \rz^3$ and $\zeta\in(0,\infty)$, we have
\begin{equation}
  \label{hyd}
  \int_\Gamma\rd x\,\frac{|\xi(x)|^2}{|\gx-\ga|}
  \leq \frac12{\left(\zeta^{-1}\int_\Gamma\rd x\,|\grad\xi(x)|^2
    + \zeta \int_\Gamma\rd x\,|\xi(x)|^2\right)}
\end{equation}
and thus with $\zeta=1$
\begin{equation}
  0\leq \int_{\rz^3}\rd \gx\,\frac{\rho_\gamma(\gx)}{|\gx|}
  \leq\sum_{\sigma,\,n}\lambda_n
  \int_\Gamma\rd x\,\frac{|\xi_n(x)|^2}{|\gx|}
  \leq \frac12\bigl(\tr(-\Delta\gamma)+\tr(\gamma)\bigr)<\infty
\end{equation}
rendering the attraction potential finite. For the third term --- using
again \eqref{hyd} with $\zeta=1$ and $\ga=\gy$ --- we have
\begin{equation}
  D[\rho_\gamma]\leq \frac12 \tr(\gamma)\frac12\bigl(\tr(-\Delta\gamma)
  +\tr(\gamma)\bigr)<\infty.
\end{equation}
Eventually we remark that the exchange term is bounded by the direct
term using Schwarz's inequality in the summation over $n$ and $m$:
\begin{align}
  \label{xld}
  0\leq X[\gamma]
  &=\frac12\int_\Gamma\rd x\int_\Gamma\rd y
    \,\frac{\sum_{n,m}\lambda_n\lambda_m\xi_n(x)
    \overline{\xi_n(y)}\,\overline{\xi_m(x)}\xi_m(y)}{|\gx-\gy|}\nonumber\\
  &\leq\frac12\int_{\rz^3}\rd \gx\int_{\rz^3}\rd \gy
  \,\frac{\rho_\gamma(\gx)\rho_\gamma(\gy)}{|\gx-\gy|}=D[\rho_\gamma]
\end{align}
which shows that also the exchange term is finite.

We will now exhibit some properties of the Hartree--Fock theory and
begin with a theorem by Lieb \cite{Lieb1981V} but will
present Bach's \cite[Section 3]{Bach1992} version of the proof.
\begin{theorem}
  \label{Volker}
  Suppose $N\in\mathbb{N}$, $\gamma\in \cD_{\partial N}$, and
  $\gamma\neq\gamma^2$. Then there exists a
  $\gamma'\in \cD_{\partial N}$ such that
  $\cE^\mathrm{HF}_Z(\gamma')<\cE^\mathrm{HF}_Z(\gamma)$. In
  particular, any minimizer of $\cE_Z^\mathrm{HF}$ on
  $\cD_{\partial N}$ is a projection.
\end{theorem}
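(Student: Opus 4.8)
The plan is to show that a one-particle density matrix which is not a projection can always be lowered in energy by transferring occupation between two partially filled natural orbitals, so that no minimizer can be a non-projection.

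First I would diagonalize $\gamma=\sum_n\lambda_n|u_n\rl u_n|$ with orthonormal $u_n$ and $0\le\lambda_n\le1$; since $(-\Delta+1)\gamma\in\gS^1$, each $u_n$ with $\lambda_n>0$ lies in $H^1(\Gamma)$. Because $\tr\gamma=\sum_n\lambda_n=N\in\mathbb N$ while $\gamma\neq\gamma^2$, not all $\lambda_n$ lie in $\{0,1\}$, and a single eigenvalue in $(0,1)$ would make the sum non-integral; hence there are at least two indices, say $1,2$, with $\lambda_1,\lambda_2\in(0,1)$. Writing $P_i:=|u_i\rl u_i|$, I consider $\gamma_t:=\gamma+t(P_1-P_2)$, which merely replaces $\lambda_1,\lambda_2$ by $\lambda_1+t,\lambda_2-t$. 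For $t$ in the interval $I:=[\max\{-\lambda_1,\lambda_2-1\},\min\{1-\lambda_1,\lambda_2\}]$, whose interior contains $0$, one has $0\le\gamma_t\le1$; since $\tr(P_1-P_2)=0$ and $P_1-P_2$ is finite rank with range in $H^1(\Gamma)$, we get $\gamma_t\in\cD_{\partial N}$ throughout $I$.

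Next I would expand the energy. The kinetic and nuclear terms are linear in $\gamma$, while $D[\rho_\gamma]$ and $X[\gamma]$ are quadratic, so $\cE^\mathrm{HF}_Z(\gamma_t)=\cE^\mathrm{HF}_Z(\gamma)+bt+ct^2$ is a quadratic polynomial. Expanding $\rho_{\gamma_t}=\rho_\gamma+t(\rho_{u_1}-\rho_{u_2})$ and $|\gamma_t(x,y)|^2$, and using that the self-exchange of each $u_i$ cancels its self-direct energy, a short computation gives the coefficient of $t^2$ as
\[
c = E_{12}-2D(\rho_{u_1},\rho_{u_2}),\qquad
E_{12}:=\int_{\rz^3}\rd\gx\int_{\rz^3}\rd\gy\,\frac{m(\gx)\overline{m(\gy)}}{|\gx-\gy|},
\]
where $m(\gx):=\sum_\sigma u_1(\gx,\sigma)\overline{u_2(\gx,\sigma)}$ and $\rho_{u_i}$ is the density of $u_i$.

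The heart of the argument --- and the step I expect to be the main obstacle --- is to prove $c<0$, i.e. that the exchange integral is \emph{strictly} dominated by the direct integral. I would establish the pointwise identity (summed over spins $\sigma,\tau$)
\[
\rho_{u_1}(\gx)\rho_{u_2}(\gy)+\rho_{u_1}(\gy)\rho_{u_2}(\gx)-2\operatorname{Re}\bigl(m(\gx)\overline{m(\gy)}\bigr)
=\sum_{\sigma,\tau}\bigl|u_1(\gx,\sigma)u_2(\gy,\tau)-u_1(\gy,\tau)u_2(\gx,\sigma)\bigr|^2,
\]
and then, after symmetrizing $2D(\rho_{u_1},\rho_{u_2})$ in $\gx\leftrightarrow\gy$ and using that $E_{12}$ is real, integrate against $\tfrac12|\gx-\gy|^{-1}$ to obtain
\[
2D(\rho_{u_1},\rho_{u_2})-E_{12}
=\frac12\int_{\rz^3}\rd\gx\int_{\rz^3}\rd\gy\,
\frac{\sum_{\sigma,\tau}\bigl|u_1(\gx,\sigma)u_2(\gy,\tau)-u_1(\gy,\tau)u_2(\gx,\sigma)\bigr|^2}{|\gx-\gy|}\ \ge 0 .
\]
The integrand is nonnegative and vanishes identically only if $u_1(\gx,\sigma)u_2(\gy,\tau)=u_1(\gy,\tau)u_2(\gx,\sigma)$ for almost all arguments, i.e. only if the Slater determinant $u_1\wedge u_2$ vanishes, which forces $u_1$ and $u_2$ to be linearly dependent. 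This contradicts their orthonormality, so the inequality is strict and $c<0$. Finally, $\cE^\mathrm{HF}_Z(\gamma_t)$ being a strictly concave quadratic on $I$ with $0$ interior, its value at $0$ strictly exceeds its minimum over $I$, which is attained at an endpoint $t_\ast$; then $\gamma':=\gamma_{t_\ast}\in\cD_{\partial N}$ satisfies $\cE^\mathrm{HF}_Z(\gamma')<\cE^\mathrm{HF}_Z(\gamma)$. Hence a non-projection can never minimize $\cE^\mathrm{HF}_Z$ on $\cD_{\partial N}$, so every minimizer obeys $\gamma=\gamma^2$.
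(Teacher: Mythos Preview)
Your proof is correct and follows essentially the same route as the paper: the same one-parameter perturbation $\gamma_t=\gamma+t(P_1-P_2)$, the same quadratic expansion with coefficient $c=D[\rho_\delta]-X[\delta]$, and the same conclusion from strict concavity. The only cosmetic difference is that where the paper establishes $c<0$ by invoking the Schwarz inequality in the Coulomb scalar product (with strictness because $\xi_1\otimes\xi_1$ and $\xi_2\otimes\xi_2$ are linearly independent), you unfold this into the explicit pointwise Slater-determinant identity and integrate; the two arguments are equivalent.
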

\begin{proof}
  Since $\gamma$ is not a projection there is at least one eigenvalue,
  say ${\lambda_1\in(0,1)}$. Then, since $N$ is an integer, there is also 
  a second eigenvalue ${\lambda_2\in(0,1)}$. We write $\xi_1$ and
  $\xi_2$ for two corresponding orthonormal eigenvectors and set
  \begin{equation}
  \gamma_\epsilon := \gamma+
  \epsilon\bigl(\underbrace{|\xi_1\rl\xi_1|-|\xi_2\rl\xi_2|}_{=:\delta}\bigr)  
  \end{equation}
  which is in $\cD_{\partial N}$ as long as
  $\epsilon+\lambda_1,-\epsilon+\lambda_2\in[0,1]$,
  $\epsilon\in\rz$. We compute
  \begin{align}
    \label{D-X}
    D[\rho_\delta]-X[\delta]
    &= \int_{\Gamma^2}\frac{\rd x\,\rd y}{|\gx-\gy|}
   \Bigl[\bigl(|\xi_1(x)|^2-|\xi_2(x)|^2\bigr)\bigl(|\xi_1(y)|^2-|\xi_2(y)|^2\bigr)
      \nonumber\\
    &\rule{70pt}{0pt}\mbox{}
      - \bigl|\xi_1(x)\overline{\xi_1(y)}-\xi_2(x)\overline{\xi_2(y)}\bigr|^2\Bigr]
    \nonumber\\
    &= -2\int_{\Gamma^2}\!\!\rd x\,\rd y
    \,\frac{|\xi_1(x)|^2|\xi_2(y)|^2
      - \xi_1(x)\overline{\xi_2(x)}\,\overline{\xi_1(y)}\xi_2(y)}{|\gx-\gy|}<0
  \end{align}
  where, in the last step, we have used the Schwarz inequality and the
  fact that equality in the Schwarz inequality can only hold when
  $\xi_1\otimes\xi_1$ and $\xi_2\otimes\xi_2$ are linearly dependent
  which, however, is definitely not the case, since $\xi_1$ is
  orthogonal to $\xi_2$.  Thus
  \begin{align}
    \label{differenz}
    \cE^\mathrm{HF}_Z(\gamma_\epsilon) -\cE^\mathrm{HF}_Z(\gamma)
    &= \epsilon
      {\left[\tr\Biggl(\biggl(-\frac12\Delta-\frac{Z}{|\gx|}\biggr)\delta\Biggr)
      +2D(\rho_\gamma,\rho_\delta)-2X(\gamma,\delta)\right]}
      \nonumber\\
    &\mathrel{\phantom{=}}\mbox{}
     +\epsilon^2 \bigl(D[\rho_\delta]-X[\delta]\bigr)<0
  \end{align}
  where we choose the sign of $\epsilon$ equal to minus the sign of
  the bracket in \eqref{differenz}, use \eqref{D-X}, and decrease or
  increase $\epsilon$, depending on our choice of the sign of
  $\epsilon$ until $\epsilon+\lambda_1$ or $-\epsilon+\lambda_2$
  become $0$ or $1$ for the first time. This does not only prove the
  theorem but it even shows that we can choose $\gamma'$ such that it
  has at least one less eigenvalue in the open interval $(0,1)$. Of
  course this argument can be iterated as long as there are
  eigenvalues that are not equal to $0$ or $1$.
\end{proof}

This result can be used to show that the particle number of minimizers
is an integer, poetically speaking, is quantized (Friesecke
\cite{Friesecke2003T}).  Writing $G_Z$ for the set of all minimizers
of $\cE_Z^\mathrm{HF}$ on $\cD$, we have
\begin{theorem}
  For any $Z>0$ the set of minimizers $G_Z$ contains a
  projection. Moverover, if for all $\gamma\in G_Z$ the operator
  $h^\mathrm{HF}_{Z,\gamma}$ has no zero eigenvalue, then $G_Z$
  consists of projections only.
\end{theorem}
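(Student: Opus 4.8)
The plan is to derive the Euler--Lagrange (self-consistent field) equation for a minimizer and read off its occupation numbers. Write $h^{\mathrm{HF}}_{Z,\gamma}:=-\tfrac12\Delta-\tfrac{Z}{|\gx|}+\rho_\gamma*|\cdot|^{-1}-K_\gamma$ for the mean-field operator, where $K_\gamma$ is the exchange operator with kernel $\gamma(x,y)/|\gx-\gy|$; using the relative form bounds behind \eqref{hyd} and \eqref{xld} one checks that $h^{\mathrm{HF}}_{Z,\gamma}$ is the Gâteaux derivative of \eqref{hf}, i.e.\ $\frac{d}{dt}\big|_{t=0}\cE^{\mathrm{HF}}_Z(\gamma+t\eta)=\tr\big(h^{\mathrm{HF}}_{Z,\gamma}\eta\big)$. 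First I would test minimality of $\gamma\in G_Z$ against the conjugations $\gamma_t:=e^{\ri tA}\gamma e^{-\ri tA}$ with $A=A^*$ of finite rank and smooth range; these stay in $\cD$ (same spectrum, finite kinetic energy), so $\frac{d}{dt}\big|_{t=0}\cE^{\mathrm{HF}}_Z(\gamma_t)=0$ yields $\tr\big(A[\gamma,h^{\mathrm{HF}}_{Z,\gamma}]\big)=0$ for all such $A$, whence $[\gamma,h^{\mathrm{HF}}_{Z,\gamma}]=0$. Thus $\gamma$ and $h^{\mathrm{HF}}_{Z,\gamma}$ share an orthonormal eigenbasis $\xi_1,\xi_2,\dots$ with $\gamma\xi_n=\lambda_n\xi_n$ and $h^{\mathrm{HF}}_{Z,\gamma}\xi_n=e_n\xi_n$.

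Next I would vary a single occupation number. Since \eqref{hf} is exactly quadratic and a rank-one block carries no self-interaction (a one-line computation, the saturation case of \eqref{xld}, gives $D[\rho_{|\xi_n\rl\xi_n|}]-X[|\xi_n\rl\xi_n|]=0$), one obtains the exact identity $\cE^{\mathrm{HF}}_Z(\gamma+s|\xi_n\rl\xi_n|)=\cE^{\mathrm{HF}}_Z(\gamma)+s\,e_n$ for all admissible $s$. Minimality then forces $e_n=0$ whenever $\lambda_n\in(0,1)$, since both signs of $s$ are then admissible (it likewise gives $e_n\ge0$ for $\lambda_n=0$ and $e_n\le0$ for $\lambda_n=1$). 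In particular every fractionally occupied natural orbital lies in $\ker h^{\mathrm{HF}}_{Z,\gamma}$. This already settles the second assertion: if $h^{\mathrm{HF}}_{Z,\gamma}$ has no zero eigenvalue then no $\lambda_n$ lies in $(0,1)$, so $\gamma=\gamma^2$; as this holds for every $\gamma\in G_Z$, the set $G_Z$ consists of projections only.

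For the first assertion I would show that any minimizer is at most a single fractional orbital away from a projection. If two eigenvalues $\lambda_1,\lambda_2$ lay in $(0,1)$, both $\xi_1,\xi_2$ would lie in $\ker h^{\mathrm{HF}}_{Z,\gamma}$, so the traceless direction $\delta:=|\xi_1\rl\xi_1|-|\xi_2\rl\xi_2|$ has vanishing linear term $\tr(h^{\mathrm{HF}}_{Z,\gamma}\delta)=0$, while its quadratic term $D[\rho_\delta]-X[\delta]<0$ computed in \eqref{D-X} would let $\gamma+\epsilon\delta\in\cD$ strictly lower the energy---the mechanism of Theorem \ref{Volker}, now with the linear term annihilated by $\ker h^{\mathrm{HF}}_{Z,\gamma}$. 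Hence at most one $\lambda_n$, say $\lambda_1=\mu\in(0,1)$, is fractional. If there is none, $\gamma$ is already a projection; if there is one, then $e_1=0$ makes $s\mapsto\cE^{\mathrm{HF}}_Z(\gamma+s|\xi_1\rl\xi_1|)$ constant, so $\gamma':=\gamma+(1-\mu)|\xi_1\rl\xi_1|\in G_Z$. Since the $\xi_n$ are orthonormal, $\gamma'=\sum_{n:\lambda_n=1}|\xi_n\rl\xi_n|+|\xi_1\rl\xi_1|$ is an orthogonal projection, so $G_Z$ contains a projection.

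The main obstacle is analytic rather than algebraic: one must verify that the conjugation and rank-one paths genuinely remain in $\cD$ (preserving both the trace-class and the finite-kinetic-energy conditions) and that $\cE^{\mathrm{HF}}_Z$ may be differentiated through the singular attraction $-Z/|\gx|$ and the nonlocal exchange $K_\gamma$, so that the first variation really is $\tr(h^{\mathrm{HF}}_{Z,\gamma}\,\cdot\,)$ with $h^{\mathrm{HF}}_{Z,\gamma}$ a self-adjoint operator whose discrete spectrum carries the eigenvectors $\xi_n$. The relative form bounds underlying \eqref{hyd} and \eqref{xld} are precisely what furnish this control; granting them, the commutator identity and the occupation analysis are elementary, and the only place where the sign of the admissible variation matters is isolated by extracting the full common eigenbasis first and performing the occupation analysis afterwards.
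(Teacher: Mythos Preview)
Your argument is correct and follows the same route as the paper: invoke the commutation $[\gamma,h^{\mathrm{HF}}_{Z,\gamma}]=0$, use a rank-one occupation variation (whose quadratic term vanishes because $D[\rho_{|\xi\rl\xi|}]=X[|\xi\rl\xi|]$) to force any fractionally occupied orbital into $\ker h^{\mathrm{HF}}_{Z,\gamma}$, and combine with the two-orbital concavity \eqref{D-X} from Theorem~\ref{Volker} to reduce to at most one fractional eigenvalue which can then be pushed to $0$ or $1$. The only differences are presentational---you derive the commutator identity explicitly via unitary conjugation and prove the second assertion before the first, whereas the paper cites \eqref{hf-gleichung} and appeals to Theorem~\ref{Volker} up front---but the substance is identical.
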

\begin{proof}
  Running through the proof of Theorem \ref{Volker} again shows that a
  minimizer $\gamma$ can have at most one eigenvalue, say $\lambda$,
  which is strictly between zero and one. Write $\xi$ for a
  corresponding normalized eigenfunction which is also an
  eigenfunction of $h^\mathrm{HF}_{Z,\gamma}$. Such a choice is
  possible, since the Hartree--Fock equations \eqref{hf-gleichung}
  ensure that the Hartree--Fock operator $h^\mathrm{HF}_{Z,\gamma}$
  commutes with the $\gamma$.

  We set
  $\gamma_\epsilon:=\gamma+\epsilon|\xi\rl\xi|$ and write
  \begin{equation}
    \label{hfop}
    h^\mathrm{TF}_{Z,\gamma}f(x)
    := \left(-\frac12\Delta-\frac{Z}{|\gx|}
      -\int_{\rz^3}\frac{\rd\gy\,\rho_\gamma(\gy)}{|\gx-\gy|}\right)f(x)
    - \int_\Gamma \rd y\,\frac{\gamma(x,y)}{|\gx-\gy|}f(y).
    \end{equation}
  Now, we compute
  \begin{equation}
    \cE^\mathrm{HF}_Z(\gamma_\epsilon)-\cE^\mathrm{HF}_Z(\gamma)
    = \epsilon\, \bigl(\xi, h^\mathrm{HF}_{Z,\gamma}\xi\bigr)\leq 0
  \end{equation}
  by choosing the appropriate sign of $\epsilon$.  By choosing
  $\epsilon+\lambda=1$, if the expectation is negative and
  $\epsilon+\lambda=0$ gives a projection that has an energy which is
  at least as low as the energy of $\gamma$. Thus, we do not increase
  the energy -- in fact we strictly decrease the energy unless $\xi$
  is a zero energy eigenvalue of Hartree--Fock operator
  $h^\mathrm{HF}_{Z,\gamma}$ --- by replacing the eigenvalue $\lambda$
  by $0$ or $1$ depending on the sign of
  $(\xi, h^\mathrm{HF}_{Z,\gamma}\xi)$. However, the strict decrease
  would imply that $\gamma$ cannot be a minimizer. Thus, we are left
  with the case of an zero energy eigenvalue, the change in the energy
  is indifferent towards the $\epsilon$. By construction
  $\gamma_\epsilon$ is a minimizer and a projection.
\end{proof}

The Euler equation, i.e., the equation that a minimizer $\gamma$ of the
Hartree--Fock functional $\cE^\mathrm{HF}_Z$ fulfills, reads
\begin{equation}
  \label{hf-gleichung}
  \bigl[h^\mathrm{HF}_{Z,\gamma},\gamma\bigr]=0
\end{equation}
or in orbital form
\begin{equation}
  \label{hf-klassisch}
  h^\mathrm{HF}_{Z,\gamma}\xi_n=-\mu_n\xi_n
\end{equation}
with Lagrange parameters $\mu_1,\ldots,\mu_N\in\rz_+$,
$\xi_1,\ldots,\xi_N$ orthonormal spinors, and
$\gamma=|\xi_1\rl\xi_1|+\cdots +|\xi_N\rl\xi_N|$. The existence of
minimizers, and thus the existence of solutions of the Hartree--Fock
equations, has been shown for $N<Z+1$ by Lieb and Simon
\cite{LiebSimon1977T}. Note, however, that, in general, one cannot
expect uniqueness of the solution because of the lack of convexity
of the exchange term.

\enlargethispage{1.6\baselineskip}

Next we turn to a result that is well known for the traditional
Hartree--Fock functional, i.e., the one with all the $\lambda_n$ equal
to either $0$ or $1$.
\begin{theorem}
  Pick $N\in\mathbb{N}$. If the infimum of
  $\cE^\mathrm{HF}_Z(\cD_{\partial N})$ is assumed, then we have for
  any $\gamma\in\cD_N$
  \begin{equation}
    \label{os}
      E^\mathrm{S}(Z,N)\leq \cE^{HF}_Z(\gamma).
    \end{equation}
  \end{theorem}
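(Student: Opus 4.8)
The plan is to exploit the hypothesis to manufacture a genuine Slater determinant, to evaluate the quantum Hamiltonian on it, and then to extend the resulting bound from $\cD_{\partial N}$ to all of $\cD_N$ by diluting charge to infinity. First I would invoke the hypothesis together with Theorem \ref{Volker}: since the infimum of $\cE_Z^\mathrm{HF}$ on $\cD_{\partial N}$ is attained and $N\in\mathbb{N}$, the minimizer $\gamma_0\in\cD_{\partial N}$ is an orthogonal projection of rank $N$, say $\gamma_0=\sum_{n=1}^N|\xi_n\rl\xi_n|$ with orthonormal $\xi_1,\ldots,\xi_N\in H^1(\Gamma)$ (the inclusion in $H^1$ is forced by $(-\Delta+1)\gamma_0\in\gS^1(L^2(\Gamma))$, cf.\ \eqref{DDN}). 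From these orbitals I form the normalized Slater determinant $\psi_0:=\xi_1\wedge\cdots\wedge\xi_N\in\gH_N\cap H^1(\Gamma^N)$.

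The crux is the standard computation that the quantum energy of a determinant coincides with the Hartree--Fock functional of its projection. One reads off $\gamma_{\psi_0}=\gamma_0$ and, expanding $|\psi_0|^2$, that the spin-summed two-particle density factorizes as $\rho^{(2)}_{\psi_0}(\gx,\gy)=\tfrac12\sum_{\sigma,\tau}\bigl(\gamma_0(x,x)\gamma_0(y,y)-|\gamma_0(x,y)|^2\bigr)$ with $x=(\gx,\sigma)$ and $y=(\gy,\tau)$. Inserting this into the energy expression \eqref{dd} with $V=-Z/|\cdot|$ turns the interaction into precisely $D[\rho_{\gamma_0}]-X[\gamma_0]$, so that $\cE_{-Z/|\cdot|}[\psi_0]=\cE_Z^\mathrm{HF}(\gamma_0)$, recovering \eqref{hf}. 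Since $\psi_0$ is an admissible normalized trial vector, the variational principle gives
\[
  E^\mathrm{S}(Z,N)\leq\cE_{-Z/|\cdot|}[\psi_0]
  =\cE_Z^\mathrm{HF}(\gamma_0)=\inf\bigl\{\cE_Z^\mathrm{HF}(\cD_{\partial N})\bigr\}.
\]

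It remains to pass from $\cD_{\partial N}$ to a general $\gamma\in\cD_N$. If $\tr(\gamma)=N$ there is nothing to do, since then $\gamma\in\cD_{\partial N}$ and $\cE_Z^\mathrm{HF}(\gamma)\geq\inf\{\cE_Z^\mathrm{HF}(\cD_{\partial N})\}$ trivially. If $\tr(\gamma)=M<N$, I would add the missing charge $N-M$ in the form of finitely many orbitals, orthogonal to the range of $\gamma$, that are simultaneously spread out and pushed to spatial infinity, producing $\gamma_\lambda\in\cD_{\partial N}$ exactly as in the $N>Z$ Thomas--Fermi argument around \eqref{N-Z}. Term by term---kinetic, nuclear, direct, and exchange (the last controlled by the direct term as in \eqref{xld})---every contribution of the added piece and of its cross terms with $\gamma$ tends to zero, so $\cE_Z^\mathrm{HF}(\gamma_\lambda)\to\cE_Z^\mathrm{HF}(\gamma)$. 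Since $\cE_Z^\mathrm{HF}(\gamma_\lambda)\geq\inf\{\cE_Z^\mathrm{HF}(\cD_{\partial N})\}$ for every $\lambda$, passing to the limit yields $\inf\{\cE_Z^\mathrm{HF}(\cD_{\partial N})\}\leq\cE_Z^\mathrm{HF}(\gamma)$, and combining with the display above gives $E^\mathrm{S}(Z,N)\leq\cE_Z^\mathrm{HF}(\gamma)$.

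I expect the genuine obstacle to be the determinant computation: verifying the exchange structure of $\rho^{(2)}_{\psi_0}$ rigorously (rather than formally) and checking that all terms in \eqref{dd} are finite for $\psi_0$, which follows from $\xi_n\in H^1(\Gamma)$. The dilution step is routine but carries one technical nuisance, namely arranging exact orthogonality of the diffuse orbitals to $\mathrm{range}(\gamma)$ while keeping their localization at infinity; this is handled by a harmless Gram--Schmidt correction whose effect vanishes in the limit. Note that this route uses the hypothesis in an essential way: it is precisely the existence of the minimizer, known to be a projection by Theorem \ref{Volker}, that supplies an honest Slater determinant and thereby an admissible $N$-electron trial state, bypassing any need for minimizing sequences.
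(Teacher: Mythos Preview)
Your proof is correct and shares the paper's core mechanism: invoke Theorem \ref{Volker} to obtain a rank-$N$ projection, build the Slater determinant from its orbitals, and verify that the quantum energy of the determinant equals the Hartree--Fock value, so that the Ritz principle yields \eqref{os}.

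The one organizational difference is worth recording. The paper reduces the \emph{given} $\gamma\in\cD_N$ to a projection in $\cD_{\partial N}$ by appealing to Theorem \ref{Volker} and then runs the Slater computation on that projection; this is terse about the case $\tr(\gamma)<N$ (and non-integer trace), since the Volker iteration preserves the trace and cannot by itself land in $\cD_{\partial N}$. You instead work once and for all with the minimizer $\gamma_0\in\cD_{\partial N}$, obtain $E^\mathrm{S}(Z,N)\leq\inf\cE^\mathrm{HF}_Z(\cD_{\partial N})$, and then close the gap to arbitrary $\gamma\in\cD_N$ by the dilution argument in the spirit of \eqref{N-Z}. Your route is a bit longer but handles the $\cD_N$ versus $\cD_{\partial N}$ passage explicitly; the paper's route is shorter but leans on the reader to supply exactly that passage.
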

  \begin{proof}
    By Theorem \ref{Volker} we know that $\cE^\mathrm{HF}(\gamma)$ can
    be lowered unless $\gamma$ is a projection and
    $\inf\bigl\{\cE^\mathrm{HF}_Z(\cD_{\partial N})\bigr\}$ is assumed
    by a projection. Thus, we can assume $\gamma$ to be a projection
    in $\cD_{\partial N}$. Thus
    \begin{equation}
      \gamma=|\xi_1\rl\xi_1|+\cdots+ |\xi_N\rl\xi_N|
    \end{equation}
    for orthonormal $\xi_1,\ldots,\xi_N$ each in $H^1(\Gamma)$. We write
    \begin{equation}
      \label{Slater}
      \Psi:= \frac{1}{\sqrt{N!}}
      \begin{vmatrix}
        \xi_1(x_1)&\hdots&\xi_1(x_N)\\
        \vdots&&\vdots\\
        \xi_N(x_1)&\hdots&\xi_N(x_N)
      \end{vmatrix}
    \end{equation}
    for the Slater determinant of these orbitals. A straightforward
    computation shows
    \begin{align}
      &\mathrel{\phantom{=}}
      E^\mathrm{S}(Z,N)\leq (\Psi, H_{Z/|\gx|,N}\Psi)
      = \cE_{Z}^{\mathrm{classical\ HF}}(\xi_1,\ldots,\xi_N)\nonumber\\
      &=\sum_{n=1}^N\int_\Gamma\rd x\,
        {\left(|\grad\xi_n(x)|^2-\frac{Z}{|\gx|}|\xi(x)|^2\right)}
        \nonumber\\
      &\mathrel{\phantom{=}}\mbox{}
      + \frac12\int_{\Gamma^2}\rd x\,\rd y\,
      \frac{\sum_{n=1}^N|\xi_n(x)|^2\sum_{m=1}^N|\xi_m(y)|^2
        -\sum_{n=1}^N\bigl|\xi_n(x)\overline{\xi_n(y)}\bigr|^2}{|\gx-\gy|}
      \nonumber\\
      &= \cE^\mathrm{HF}_Z(\gamma)
    \end{align}
    where we used the Ritz variational principle \cite{Ritz1909} in
    the first step.
  \end{proof}

  \noindent%
  In fact, Bach \cite{Bach1992} proved a correlation bound that allows
  to prove also the reverse inequality (for suitable choice of $N$
  that includes $N=Z$) up to errors yielding
  \begin{equation}
    \label{Bach53}
    E^\mathrm{HF}(Z)= E^\mathrm{TF}(Z)+ \frac{Z^2}{2}-\gamma_SZ^\frac53+o(Z^\frac53).
  \end{equation}
  
  As mentioned already in \eqref{RafaelH} the maximal number of
  electrons of any Hartree--Fock minimizer is bounded by
  $2Z+1$. Solovej \cite{Solovej2003} has improved this by the
  following asymptotic bound.
  \begin{theorem}
    There exists a constant $\const$ such for all $Z$ any minimizer of
    $\cE_Z^\mathrm{HF}$ on $\cD$ fulfills
    \begin{equation}
      \label{Solovej}
      \tr(\gamma)-Z\leq C,
    \end{equation}
    i.e., the charge in excess of neutrality is bounded uniformly in
    the atomic number $Z$.
  \end{theorem}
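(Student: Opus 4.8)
The plan is to compare the Hartree--Fock minimizer $\gamma$ with the \emph{neutral} Thomas--Fermi atom on every length scale, rather than only on the global scale $Z^{-1/3}$ exploited in the proof of Theorem~\ref{satzenergieasymptotik}. Since the unrestricted Thomas--Fermi minimizer $\rho_Z$ carries charge exactly $Z$ by Theorem~\ref{minimierereigenschaften}, any excess charge in the Hartree--Fock problem must be charged to the discrepancy between the two theories, and the goal is to show that this discrepancy amounts to at most $O(1)$ electrons uniformly in $Z$. The crude bound $\tr(\gamma)<2Z+1$ from \eqref{RafaelH} serves as the starting point: it already localizes the excess to $O(Z)$, and the task is to upgrade this to $O(1)$.

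First I would introduce the screened nuclear charge seen from radius $r$,
\begin{equation}
  q(r):= Z - \int_{|\gx|<r}\rho_\gamma(\gx)\,\rd\gx ,
\end{equation}
so that $\tr(\gamma)-Z = -\lim_{r\to\infty}q(r)$ and the assertion becomes $\liminf_{r\to\infty}q(r)\geq-\const$. For each $r$ the electrons in the exterior region $\{|\gx|>r\}$ experience, after averaging by Newton's theorem, an effective nucleus of charge $q(r)$ together with their mutual repulsion, and the heart of the matter is to compare this exterior Hartree--Fock problem with the corresponding \emph{exterior Thomas--Fermi problem}, whose screened charge is known to decay. Concretely, writing $\Phi^{\mathrm{HF}}:= Z/|\cdot| - \rho_\gamma * |\cdot|^{-1}$ for the Hartree--Fock mean field and $\fitf$ for the Thomas--Fermi potential of \eqref{tf}, the decisive input is a comparison of the shape $|\Phi^{\mathrm{HF}}(\gx)-\fitf(\gx)|\leq \const\,|\gx|^{-4+\varepsilon}$ for some $\varepsilon>0$, valid uniformly in $Z$ and down to length scales far below $Z^{-1/3}$.

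Establishing such a \emph{uniform-in-scale} comparison is the main work. The technique is to rescale the problem at each dyadic radius so that the relevant length becomes of order one, and then to run a semiclassical estimate of the type used in Theorem~\ref{satzenergieasymptotik}: coherent states $\Pi_z$ furnish matching upper and lower bounds on the kinetic energy while the Lieb--Thirring inequality \eqref{lt} controls the remainder. Two features make the Hartree--Fock case genuinely harder than the Thomas--Fermi or Thomas--Fermi--Weizs\"acker analyses presented above. The exchange term $X[\gamma]$ has no Thomas--Fermi counterpart and, being non-convex, is beyond the reach of the convexity arguments of Lemma~\ref{tfstrengkonvex}; one must instead show, sharpening the estimate $0\leq X[\gamma]\leq D[\rho_\gamma]$ of \eqref{xld} on each scale, that it is energetically subleading (globally of order $Z^{5/3}$, consistent with \eqref{scott}) and therefore does not disturb the potential comparison. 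Secondly, near the nucleus the Coulomb singularity produces the Scott correction, so the comparison must be carried out in an annular, exterior fashion that excises a fixed neighborhood of the origin at each scale.

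With the potential comparison in hand, the proof concludes by a bootstrap on $q(r)$: closeness of $\Phi^{\mathrm{HF}}$ to $\fitf$ at scale $r$ forces the charge contained in the dyadic shell $\{r<|\gx|<2r\}$ to be a definite fraction of $q(r)$, so that the screened charge contracts geometrically as $r$ grows and the charge beyond any fixed radius is summable to $O(1)$. The main obstacle, and the technical core of the argument, is precisely the uniform-in-scale density comparison: it requires controlling the Hartree--Fock density not merely in Coulomb norm globally, but pointwise-on-average on every length scale simultaneously, feeding each scale's estimate into the next while keeping the non-convex exchange term and the singular core under control throughout.
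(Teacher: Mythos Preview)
The paper does not actually prove this theorem: it merely states the result, attributes it to Solovej \cite{Solovej2003}, and summarizes the method in a single sentence (``The method is to use Thomas--Fermi theory successively to screen the nucleus''). Your outline is a faithful and considerably more detailed expansion of precisely that strategy --- the iterative comparison of the Hartree--Fock screened potential $\Phi^{\mathrm{HF}}$ with the Thomas--Fermi potential $\fitf$ on successive length scales, bootstrapping the screened charge $q(r)$ down to $O(1)$ --- so there is no discrepancy to report.
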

  The method is to use Thomas--Fermi theory successively to screen the
  nucleus.

  \subsection{The M\"uller functional}
  Comparing with \eqref{dd}, the Hartree--Fock functional can be viewed
  as the full quantum functional with an ansatz for the two particle
  density
  \begin{equation}
    \label{ansatzhf}
    \rho^{(2)}(\gx,\gy)
    := \frac12\sum_{\sigma,\tau=1}^2
    \bigl(\gamma(x,x)\gamma(y,y)-|\gamma(x,y)|^2\bigr).
  \end{equation}
  This ansatz keeps the property that ${\rho^{(2)}\geq0}$, which we
  showed in fact in \eqref{xld} when we dominated the exchange term by
  the classical electron-electron interaction. However, it does, in
  general, not keep the normalization, since
  \begin{align}
    \label{norm}
    \int_{\rz^6}\rho^{(2)}
    &=\int_{\rz^3}\rd\gx\int_{\rz^3}\rd\gy\,
      \frac12\sum_{\sigma,\tau=1}^2
      \bigl(\gamma(x,x)\gamma(y,y)-|\gamma(x,y)|^2\bigr)\nonumber\\
    &=\frac12\left(N^2 -\tr(\gamma^2)\right) \geq \binom{N}{2}
  \end{align}
  where equality holds if and only if $\gamma$ is a projection.

  M\"uller \cite{Muller1984} also makes an ansatz for the two-particle
  density, however reversing the situation. He keeps the normalization
  condition and gives up the positivity. His ansatz is
  \begin{equation}
    \label{ansatzM}
    \rho^{(2)}(\gx,\gy)
    := \frac12\sum_{\sigma,\tau=1}^2
    \bigl(\gamma(x,x)\gamma(y,y)-|\gamma^\frac12(x,y)|^2\bigr)
  \end{equation}
  where the root denotes the operator root, i.e., the eigenvalues of
  $\gamma$ are replaced by their roots. (Actually M\"uller considered
  a family of functionals with the exchange part of the two-particle
  density depending on parameter $p\in[0,\tfrac12]$. The functional
  considered here is the case $p=0$.) Thus the M\"uller functional
  reads
  \begin{align}
    \label{Muller}
    &\cE_Z^\mathrm{M}:\cD\to\rz,\nonumber\\
    &\gamma\mapsto \tr\biggl(-\frac12\Delta\gamma\biggr)
    -\int_{\rz^3}\rd \gx\,\frac{Z\rho_\gamma(\gx)}{|\gx|}+ D[\rho_\gamma]
    -X[\gamma^\frac12].
  \end{align}
  To show that the functional is well defined it suffices to show that
  $X[\gamma^\frac12]$ is finite, since the first three terms are
  identical with the first three terms of the Hartree--Fock functional.
  \begin{align}
    X[\gamma^{\frac12}]
    &\leq \frac12\int_{\Gamma}\rd y\, \rd x\,
      \frac{|\gamma^\frac12(x,y)|^2}{|\gx-\gy|}\nonumber\\
    &\leq \frac12\sqrt{\int_{\Gamma^2} \rd y\, \rd x\,|\gamma^\frac12(x,y)|^2}
    \sqrt{\int_{\Gamma^2} \rd y \,\rd x\,
    \frac{|\gamma^\frac12(x,y)|^2}{|\gx-\gy|^2}}\nonumber\\
    &\leq \sqrt{\tr(\gamma)}\sqrt{\tr(-\Delta\gamma)}<\infty
  \end{align}
  using the Schwarz inequality first and then, in the last step,
  Hardy's inequality and the fact that
  $\int_{\Gamma^2}\rd x\,\rd y\,|\gamma^\frac12(x,y)|^2= \tr(\gamma)$. The
  functional has been analyzed by Frank et
  al.\ \cite{Franketal2007}. One of the interesting properties of the
  functional is its convexity which follows from an entropy inequality
  of Wigner and Yanase \cite{WignerYanase1963}. On the other hand it
  suffers from the same defect as the Thomas--Fermi--Dirac functional,
  the TF functional minus a $\int\rho^\frac43$-term: Even if $Z=0$ the
  infimum of the M\"uller functional is negative. Thus the binding
  energy of the electrons is not just the infimum of the M\"uller
  functional but the infimum at $Z=0$, which turns out to be
  $\tr(\gamma)/8$, has to be subtracted. We write $\hat\cE^\mathrm{M}_Z$
  for the correspondingly modified functional.

  It is $\hat\cE^\mathrm{M}_Z$ that determines the maximal number of
  electrons that can be bound. The critical electron number
  $N_c^\mathrm{M}$ --- defined as the largest electron number $N$ for
  which $\hat\cE_Z^\mathrm{M}$ ceases to have a minimizer on
  $\cD_{\partial N}$ --- is at least $Z$. On the other hand Frank et
  al.\ \cite{Franketal2018T} showed that, like in the Hartree--Fock
  case, there is a constant $\const$ such that
  \begin{equation}
    \label{excessM}
    N_c^\mathrm{M}-Z\leq \const
  \end{equation}
  uniformly in $Z$. They use a different technique from the one used
  in \eqref{mulmitx} to obtain the necessary a priori bound followed
  by a variation of Solovej's successive screening. The technique
  works also in the case of exchange terms that are not dominated by
  the classical electron-electron interaction like the
  Thomas--Fermi--Dirac--Weizs\"acker functional where they developed
  their technique (Frank et al.\  \cite{Franketal2018}) (see also Chen et
  al.\ \cite{Chenetal2020} for the a priori bound for the relativistic
  density functional of Engel and Dreizler). The bound \eqref{excessM}
  on the excess charge has been generalized by Kehle \cite{Kehle2017}
  to a functional of Sharma et al.\  \cite{Sharmaetal2008} with
  $X[\gamma^\frac12]$ replaced by $X[\gamma^p]$ with $p\in[\frac12,1]$
  which interpolates between the M\"uller and the Hartree--Fock
  functional.

  Energetically the same accuracy as for the Hartree--Fock functional
  in \eqref{Bach53} is known, i.e.,
  \begin{equation}
    \label{S53}
    E^{\mathrm{M}}(Z)= E^{\mathrm{TF}}(Z)+ \frac{Z^2}{2}
    -\gamma_SZ^{\frac53}+o(Z^{\frac53})
  \end{equation}
  which is shown by proving
  $E^{\mathrm{M}}(Z)=E^{\mathrm{HF}}+o(Z^{\frac53})$ (see
  \cite{Siedentop2009,Siedentop2014}).

  Thus, as far as the current knowledge of the asymptotics of the
  ground-state energy is concerned, the M\"uller functional yields the
  Hartree--Fock functional accuracy. There are however indications that
  the two complement each other also in a different way: while the
  Hartree--Fock functional is known to yield an upper bound on the true
  quantum energy, it is conjectured that the ground-state energy of
  the M\"uller functional is a lower bound to the quantum molecular
  ground-state energy. Frank et al.\  \cite[Section V]{Franketal2007}
  proved this for $N=2$. Moreover, numerical evidence suggests that it
  might be true for all $N$. A positive answer to this question would
  give an even more advanced tool to estimate the energy of Coulomb
  systems from below than the Lieb--Thirring inequality with the
  conjectured classical constant.

\setcounter{equation}{0}\renewcommand{\theequation}{A.\arabic{equation}}
\setcounter{theorem}{0}\renewcommand{\thetheorem}{A.\arabic{theorem}}
  
\section*{Appendix: %
    Maximal functions of powers and Thomas--Fermi energy
    of exchange holes}%
\addcontentsline{toc}{section}{Appendix: Maximal functions of powers and
  Thomas--Fermi energy of exchange holes}%
First we show that inverse powers $|\gx|^{-\alpha}$ are eigenfunctions
of the maximal operator with eigenvalue
$C_{\alpha,d}:=M(|\cdot|^{-\alpha})((0,0,1))$ or less poetical
\begin{lemma}
  \label{max}
  Assume $\alpha\in [0,d)$. Then
  \begin{equation}
    \label{mf}
    M(|\cdot|^{-\alpha})= C_{\alpha,d}|\cdot|^{-\alpha}.
  \end{equation}
\end{lemma}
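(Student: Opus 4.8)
The plan is to exploit the fact that both $f:=|\cdot|^{-\alpha}$ and the maximal operator $M$ interact cleanly with the rotation and dilation groups. Writing $B_R(\gx)$ for the ball of radius $R$ centred at $\gx$ and $|B_R|=R^d|B_1|$ for its Lebesgue measure, I record two symmetries. First, for any rotation $\cR$ about the origin the balls $B_R(\cR\gx)$ are rotated copies of $B_R(\gx)$ and $f$ is rotation invariant, so $M$ commutes with $\cR$; since this holds for every $\cR$, the function $Mf$ is radial. Second, $M$ is covariant under dilations: for $\lambda>0$ set $f_\lambda(\gx):=f(\lambda\gx)$, substitute $\gz=\lambda\gy$ in the averaging integral, and replace the supremum variable $R$ by $S:=\lambda R$; this yields
\begin{equation}
  (Mf_\lambda)(\gx)=(Mf)(\lambda\gx).
\end{equation}

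Next I would feed the homogeneity of $f$ into this identity. Since $f(\lambda\gx)=\lambda^{-\alpha}f(\gx)$, i.e.\ $f_\lambda=\lambda^{-\alpha}f$, and since $M(cg)=cM(g)$ for every constant $c>0$, the left-hand side equals $\lambda^{-\alpha}(Mf)(\gx)$. Comparing with the displayed identity gives $(Mf)(\lambda\gx)=\lambda^{-\alpha}(Mf)(\gx)$, so $Mf$ is homogeneous of degree $-\alpha$. A radial function that is homogeneous of degree $-\alpha$ is necessarily of the form $\const|\cdot|^{-\alpha}$, and evaluating at the north-pole unit vector $e:=(0,\ldots,0,1)$ identifies the constant as $\const=(Mf)(e)=C_{\alpha,d}$, which is exactly the claimed formula \eqref{mf}.

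The one point requiring genuine care --- and where the hypothesis $\alpha\in[0,d)$ enters --- is the finiteness of $C_{\alpha,d}$, without which the statement is vacuous. For $\alpha=0$ the claim is trivial ($f\equiv1$, $Mf\equiv1$), so assume $\alpha\in(0,d)$. Local integrability of $|\cdot|^{-\alpha}$, valid precisely because $\alpha<d$, makes each average $|B_R|^{-1}\int_{B_R(e)}f$ finite. For the supremum I would split according to the size of $R$: for $R$ bounded the averages stay finite by local integrability, while for large $R$ the crude inclusion $B_R(e)\subseteq B_{R+1}(0)$ gives $\int_{B_R(e)}f\leq\int_{B_{R+1}(0)}f=O(R^{d-\alpha})$, hence an average of order $R^{-\alpha}\to0$. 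Thus the supremum defining $C_{\alpha,d}$ is attained at a finite radius and $C_{\alpha,d}<\infty$. The main obstacle is therefore not conceptual but bookkeeping: being careful with the change of variables in the dilation step and with the two regimes in the finiteness estimate; everything else is forced by the scaling and rotation symmetry.
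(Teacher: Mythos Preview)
Your argument is correct and is essentially the paper's own proof: both exploit rotation invariance of $|\cdot|^{-\alpha}$ together with the dilation covariance of the maximal operator, the paper by a direct change of variables $\gy\to|\gx|\gy$, $R\to|\gx|R$ in the defining supremum, you by phrasing the same computation as ``$Mf$ is radial and homogeneous of degree $-\alpha$''. Your additional verification that $C_{\alpha,d}<\infty$ (using local integrability near $R=0$ and the decay estimate for large $R$) is a welcome supplement the paper omits; one small tightening would be to note that the averages depend continuously on $R$ and tend to $f(e)=1$ as $R\to0$, so that ``finite for each bounded $R$'' indeed upgrades to a uniform bound.
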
\clearpage
  \begin{proof}
    We write $\omega_d$ for the volume of the unit ball in $d$
    dimensions, pick $\gx\in\mathbb{S}^2$, and compute
    \begin{multline}
      M(|\cdot|^{-\alpha})(\gx) := \sup_{R>0} \frac{\int_{|\gx-\gy|<R}\rd
        \gy |\gy|^{-\alpha}}{\omega_dR^d}
      = \sup_{R>0} \frac{\int_{||\gx|\ge-\gy|<R}
        \rd \gy |\gy|^{-\alpha}}{\omega_dR^d}\\
      = \sup_{R>0}\frac{|\gx|^{d-\alpha}\int_{||\gx|\ge-|\gx|\gy|<|\gx|R}\rd \gy
        |\gy|^{-\alpha}}{\omega_d|\gx|^dR^d}= |\gx|^{-\alpha}
      \sup_{R>0} \frac{\int_{|\ge-\gy|<R}\rd \gy
        |\gy|^{-\alpha}}{\omega_dR^d}
    \end{multline}
    where we observe in the first step that the integral does not
    depend on the direction of $\gx$, and in the second step we
    scale $\gy\to |\gx|\gy$ and $R\to |\gx|R$.
  \end{proof}
  
  Next we turn to the infimum of the Thomas--Fermi functional with root
  of the Coulomb potential $(Z/|\gx|)^\frac12$, arbitrary positive
  Thomas--Fermi constant $\gamma$, and constraint $\int\rho\leq Z$ on
  the electron number.
  \begin{lemma}
    For ${\gamma,Z,C\in\rz_+}$ and ${M_Z:=\{\rho\in L^\frac53(\rz^3)|
    \rho\geq0, \int_{\rz^3}\rho\leq Z\}}$ set
 \begin{equation}\label{defI}
   I_{\gamma,Z}:=  \inf_{\rho\in M_Z}\int_{\rz^3}\rd\gx
   \left(\frac35\gamma\rho(\gx)^\frac53 -
      C\sqrt{\frac{Z}{|\gx|}}\rho(\gx)\right).
  \end{equation}
  Then
  \begin{equation}
    \label{I}
    I_{\gamma,Z}= \frac{Z^\frac{13}9}{\gamma^\frac13}I_{1,1}.
  \end{equation}
\end{lemma}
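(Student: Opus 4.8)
The claimed identity is a pure scaling relation, so the plan is to produce an explicit affine rescaling of the density that carries the $(\gamma,Z)$ minimization problem onto the $(1,1)$ problem and then to read off the resulting prefactor. For parameters $A,\lambda>0$ still to be determined, I would introduce the correspondence
\begin{equation}
  M_Z\to M_1,\quad \rho\mapsto\sigma,\quad \rho(\gx)=A\,\sigma(\lambda\gx).
\end{equation}
Since $\int_{\rz^3}\rho = A\lambda^{-3}\int_{\rz^3}\sigma$, the constraint $\int\rho\le Z$ is equivalent to $\int\sigma\le1$ exactly when $A\lambda^{-3}=Z$; this is the first condition I would impose, and it guarantees that $\rho$ ranges over all of $M_Z$ as $\sigma$ ranges over $M_1$, so that $\rho\mapsto\sigma$ is a bijection between the two admissible classes.

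Next I would substitute $\rho(\gx)=A\sigma(\lambda\gx)$ into the functional in \eqref{defI} and change variables $\gy=\lambda\gx$ in each term. A short computation gives
\begin{equation}
  \int_{\rz^3}\!\rd\gx\left(\tfrac35\gamma\rho^{5/3}-C\sqrt{\tfrac{Z}{|\gx|}}\,\rho\right)
  = \tfrac35\gamma A^{5/3}\lambda^{-3}\!\int_{\rz^3}\!\sigma^{5/3}
  - C\sqrt{Z}\,A\lambda^{-5/2}\!\int_{\rz^3}\!\frac{\sigma}{\sqrt{|\cdot|}},
\end{equation}
where the exponent $-5/2$ in the second term arises from $|\gx|^{-1/2}=\lambda^{1/2}|\gy|^{-1/2}$ combined with the Jacobian $\lambda^{-3}$. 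To force the right-hand side to be a common positive multiple $K$ of the $(1,1)$-functional $\tfrac35\int\sigma^{5/3}-C\int\sigma/\sqrt{|\cdot|}$, I would require the two coefficients to coincide, i.e.
\begin{equation}
  \gamma A^{5/3}\lambda^{-3}=K=\sqrt{Z}\,A\lambda^{-5/2}.
\end{equation}

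Together with the mass condition $A\lambda^{-3}=Z$ these are three equations for $A,\lambda,K$, and solving them is the only real work, though it is elementary. Equating the two expressions for $K$ gives $\gamma A^{2/3}\lambda^{-1/2}=\sqrt{Z}$; inserting $A=Z\lambda^3$ yields $\lambda=Z^{-1/9}\gamma^{-2/3}$ and $A=Z^{2/3}\gamma^{-2}$, whence $K=\gamma A^{5/3}\lambda^{-3}=Z^{13/9}\gamma^{-1/3}$. Because $K>0$ and the rescaling is a bijection of $M_Z$ onto $M_1$, taking infima commutes with the scaling, so $I_{\gamma,Z}=K\,I_{1,1}=\frac{Z^{13/9}}{\gamma^{1/3}}\,I_{1,1}$, which is \eqref{I}. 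I do not anticipate any genuine obstacle; the only point requiring care is the bookkeeping of the three scaling exponents (the Jacobian factor $-3$, the potential factor $-5/2$, and the mass condition) so that both terms of the functional are compelled to rescale by the \emph{same} factor, and the remark that $A,\lambda>0$ makes the correspondence a true bijection so that the infimum genuinely scales by $K$.
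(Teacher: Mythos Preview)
Your proof is correct and is essentially the same scaling argument as the paper's: the paper writes the substitution as $\rho\to Z\alpha^3\rho(\alpha\cdot)$ (which is your $A\sigma(\lambda\cdot)$ with the mass condition $A\lambda^{-3}=Z$ already built in), chooses $\alpha=\gamma^{-2/3}Z^{-1/9}$ so that the two terms acquire the same prefactor, and reads off $Z^{13/9}\gamma^{-1/3}$. Your more leisurely bookkeeping with three equations for $A,\lambda,K$ is perfectly fine and yields exactly the same values.
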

\begin{proof}
  By scaling $\rho\rightarrow Z\alpha^3\rho(\alpha \cdot)$ we get
  \begin{equation}
      I_{\gamma,Z}=
      \frac{1}{\alpha^3}\inf_{\rho\in M_1}\int_{\rz^3}\rd\gx
      \left(\frac35\gamma Z^\frac53\alpha^5\rho(\gx)^\frac53 -
      CZ^\frac12\alpha^\frac12Z\alpha^3\sqrt{\frac1{|\gx|}}\rho(\gx)\right).
  \end{equation}
  Picking ${\alpha := \gamma^{-\frac23} Z^{-\frac19}}$ implies
  ${\gamma Z^\frac53\alpha^5=Z^\frac32\alpha^\frac72}$ which allows to
  take this common factor in front of the infimum yielding eventually
  \begin{equation}
    I_{\gamma,Z}= \frac{Z^\frac{13}9}{\gamma^\frac13}
    \inf_{\rho\in M_1}
    \int_{\rz^3}\rd\gx\left(\frac35\rho(\gx)^\frac53 -
      C\frac1{|\gx|^\frac12}\rho(\gx)\right);
  \end{equation}
  quod erat demonstrandum.
\end{proof}

\section*{Acknowledgments}\addcontentsline{toc}{section}{Acknowledgments}
Thanks go to Hongshuo Chen and Konstantin Merz for a critical
reading of the manuscript. Partial support by the Deutsche
Forschungsgemeinschaft (DFG, German Research Foundation) through
Germany's Excellence Strategy EXC-2111-390814868 is gratefully
acknowledged.
Special thanks go the Institute of Mathematical Sciences at the
National University of Singapore for support through the programme
\textit{Density Functionals for Many-Particle Systems: Mathematical
  Theory and Physical Applications of Effective Equations}.


\def\cprime{$'$}

\end{document}